\definecolor{blue(ncs)}{rgb}{0.0, 0.53, 0.74}
\definecolor{darkolivegreen}{rgb}{0.33, 0.42, 0.18}
\definecolor{armygreen}{rgb}{0.29, 0.33, 0.13}
\definecolor{tomato}{rgb}{1.0, 0.39, 0.28}
\definecolor{carnelian}{rgb}{0.7, 0.11, 0.11}
\definecolor{calpolypomonagreen}{rgb}{0.12, 0.3, 0.17}
\definecolor{darkmagenta}{rgb}{0.55, 0.0, 0.65}
\definecolor{red-violet}{rgb}{0.80, 0.08, 0.42}
\definecolor{navyblue}{rgb}{0.0, 0.0, 0.5}
\definecolor{dollarbill}{rgb}{0.52, 0.73, 0.4}
\definecolor{islamicgreen}{rgb}{0.0, 0.56, 0.0}
\definecolor{darkpastelgreen}{rgb}{0.01, 0.75, 0.24}
\definecolor{sapgreen}{rgb}{0.31, 0.49, 0.16}
\definecolor{emerald}{rgb}{0.31, 0.78, 0.47}
\DeclareMathAlphabet{\mathsfit}{T1}{\sfdefault}{\mddefault}{\sldefault}
\SetMathAlphabet{\mathsfit}{bold}{T1}{\sfdefault}{\bfdefault}{\sldefault}
\newclass{\COpen}{{O}}
\newclass{\CGI}{{I}}
\newclass{\CNPC}{{N}}
\newclass{\CWONE}{\ensuremath{\W_{1}}}
\newclass{\CWTWO}{\ensuremath{\W_{2}}}
\newclass{\CparaNP}{{pN}}
\newclass{\paraNP}{{paraNP}}
\newcommand{\Open}{\textcolor{islamicgreen}{\COpen}}
\newcommand{\EOpen}{\textcolor{emerald}{\COpen?}}
\newcommand{\Poly}{\textcolor{blue(ncs)}{\P}}
\newcommand{\NPh}{\textcolor{tomato}{\CNPC}}
\newcommand{\ONPh}{\textcolor{tomato}{\COpen*}}
\newcommand{\aGI}{\textcolor{Brown}{\CGI}}
\newcommand{\PNP}{\textcolor{BrickRed}{\CparaNP}}
\newcommand{\WONE}{\textcolor{darkmagenta}{\CWONE}}%Magenta
\newcommand{\WTWO}{\textcolor{red-violet}{\CWTWO}}%RedViolet
\newcommand{\TXP}{\textcolor{sapgreen}{\XP}}%SeaGreen
\newcommand{\TFPT}{\textcolor{navyblue}{\FPT}}
\newcommand{\BPoly}{\ensuremath{\bm{\textcolor{blue}{\P}}}}
\newcommand{\BNPh}{\ensuremath{\bm{\textcolor{red}{\CNPC}}}}
\newcommand{\IBNPh}{\ensuremath{\bm{\mathsfit{\textcolor{red}{N}}}}}
\newcommand{\BGI}{\ensuremath{\bm{\aGI}}}
\newcommand{\BTXP}{\ensuremath{\bm{\BTXP}}}
\newcommand{\BTFPT}{\ensuremath{\bm{\BTFPT}}}
\newcommand{\pc}[1]{\underline{#1}}
\DeclarePairedDelimiter{\set}{\lbrace}{\rbrace}
\DeclarePairedDelimiter{\abs}{|}{|}
\DeclarePairedDelimiter{\tuple}{(}{)}
\DeclarePairedDelimiter{\sequence}{\langle}{\rangle}
\newcommand{\setst}{\colon}
\newcommand{\BigOh}[1]{\mathcal{O}(#1)}
\newtheorem{theorem}{Theorem}
\newtheorem{lemma}[theorem]{Lemma}
\newtheorem{proposition}[theorem]{Proposition}
\newtheorem{corollary}[theorem]{Corollary}
\newcommand{\problemName}[1]{\textsc{#1}}
\newcommand{\className}[1]{\textsc{#1}}
\newcommand{\no}{\texttt{no}}
\newcommand{\yes}{\texttt{yes}}
\newcommand{\problemtitle}[1]{\gdef\@problemtitle{#1}}% Store problem title
\newcommand{\probleminput}[1]{\gdef\@probleminput{#1}}% Store problem input
\newcommand{\problemquestion}[1]{\gdef\@problemquestion{#1}}% Store problem question
  \par\addvspace{.5\baselineskip}
  \par\addvspace{.5\baselineskip}
\newcommand{\domset}{\ensuremath{D}}
\newcommand{\matching}{\ensuremath{M}}
\newcommand{\instance}{\ensuremath{I}}
\newcommand{\param}{\ensuremath{\kappa}}
\newcommand{\graph}{\ensuremath{G}}
\newcommand{\tree}{\ensuremath{T}}
\newcommand{\cliquetree}{{T}}
\newcommand{\vset}{\ensuremath{V}}
\newcommand{\eset}{\ensuremath{E}}
\newcommand{\gvset}[1]{\vset(#1)}
\newcommand{\geset}[1]{\eset(#1)}
\newcommand{\nvertices}{\ensuremath{n}}
\newcommand{\nedges}{\ensuremath{m}}
\newcommand{\uvertex}{\ensuremath{u}}
\newcommand{\vvertex}{\ensuremath{v}}
\newcommand{\avertex}{\ensuremath{a}}
\newcommand{\bvertex}{\ensuremath{b}}
\newcommand{\cvertex}{\ensuremath{c}}
\newcommand{\xvertex}{\ensuremath{x}}
\newcommand{\yvertex}{\ensuremath{y}}
\newcommand{\zvertex}{\ensuremath{z}}
\newcommand{\pvertex}{\ensuremath{p}}
\newcommand{\qvertex}{\ensuremath{q}}
\newcommand{\rvertex}{\ensuremath{r}}
\newcommand{\striple}{\mathbf{s}}
\newcommand{\uset}{\ensuremath{U}}
\newcommand{\pset}{\ensuremath{P}}
\newcommand{\qset}{\ensuremath{Q}}
\newcommand{\rset}{\ensuremath{R}}
\newcommand{\sset}{\mathcal{S}}
\newcommand{\kclique}{\ensuremath{K}}
\newcommand{\cnh}[2]{\ensuremath{{N}_{#1}[#2]}}
\newcommand{\alexsander}[1]{{\color{blue} Alexsander: #1}}
\newif\ifnotes
\journal{Discrete Applied Mathematics SI:LAGOS 2019}
\begin{document}

\begin{frontmatter}

\title{Revising Johnson's table for the 21st century} %Going over Johnson's table 35 years later}%\tnoteref{mytitlenot e}}
%\tnotetext[mytitlenote]{Fully documented templates are available in the elsarticle package on \href{http://www.ctan.org/tex-archive/macros/latex/contrib/elsarticle}{CTAN}.}

\author[ufrj]{Celina M. H. de Figueiredo}%\fnref{myfootnote}
\ead{celina@cos.ufrj.br}
%\fntext[myfootnote]{Since 1880.}

\author[ufrj]{Alexsander A. de Melo}
\ead{aamelo@cos.ufrj.br}

\author[uerj]{Diana Sasaki}
\ead{diana.sasaki@ime.uerj.br}

\author[ufc]{Ana Silva}
\ead{anasilva@mat.ufc.br}

\address[ufrj]{Federal University of Rio de Janeiro, Rio de Janeiro, Brazil}
\address[uerj]{Rio de Janeiro State University, Rio de Janeiro, Brazil}
\address[ufc]{Federal University of Cear\'{a}, Cear\'{a}, Brazil}

\begin{abstract}
What does it mean today to study a problem from a computational point of view? We focus on parameterized complexity and on Column 16 ``Graph Restrictions and Their Effect” of D. S. Johnson's Ongoing guide, where several puzzles were proposed in a summary table with 30 graph classes as rows and 11 problems as columns. 
Several  of the 330 entries remain unclassified into Polynomial or \NP-complete after 35 years. We provide a full dichotomy for the \problemName{Steiner Tree} column by proving that the problem is \NP-complete when restricted to \className{Undirected Path} graphs. 
We revise Johnson's summary table according to the granularity provided by the parameterized complexity for \NP-complete problems.
\end{abstract}

\begin{keyword}
computational complexity, parameterized complexity, \NP-complete, Steiner tree, dominating set
% \texttt{elsarticle.cls}\sep \LaTeX\sep Elsevier \sep template
% \MSC[2010] 00-01\sep  99-00
\end{keyword}

\end{frontmatter}

%\linenumbers

\section{Graph Restrictions and Their Effect 35 Years Later}%
The 1979 book \emph{Computers and Intractability, A Guide to the Theory of \NP-complete\-ness} by Michael R. Garey and David S. Johnson~\cite{GJ.79} is considered the single most important book by the computational complexity community and it is known as The Guide, which we cite by [GJ]. 
The book was followed by \emph{The \NP-completeness Column: An Ongoing Guide} where, from 1981 until 2007, D. S. Johnson continuously updated The Guide in 26 columns published first in the \emph{Journal of Algorithms} and then in the \emph{ACM Transactions on Algorithms}. 
%\ana{(a starting paper was published %in~1985~\cite{J.85}, which we cite by [OG]).} 
The Guide has an appendix where 300 \NP-complete problems are organized into 13 categories according to subject matter. The first, ``A1 Graph Theory'', contains 65 problems and the second, ``A2 Network Design'', contains 51 problems. 
Category ``A13 Open Problems'' lists 12 problems in \NP, at the time not classified into polynomial or \NP-complete, and it is surprising that since then 5 have been classified into polynomial and 5 into \NP-complete.
Garey and Johnson were amazingly able to foresee a list of challenging problems which would evenly split into tractable and intractable. 
The goal of the present paper is to propose an answer to the question: \emph{What does it mean today to study a problem from a computational complexity point of view?}
In search of an answer, we focus on  parameterized complexity and on Column 16 ``Graph Restrictions and Their Effect''~\cite{J.85}, which we cite by [OG],
where several puzzles were proposed by D. S. Johnson and many remain unsolved after 35 years. 
Consider in Table~\ref{table:first} the summary table from [OG] with 30 graph classes as rows and 11 columns, the first of which is \problemName{Membership}, followed by 10 well-known \NP-complete problems, listed in The Guide's appendix as GT20, GT19, GT15, GT13, Open5, GT37, GT2, ND16, ND12, and Open1.
%The 30 graph classes follow the notation of [OG] where \className{$K_{3,3}$-Free\textsuperscript{*}} are graphs that contain no subgraphs homeomorphic to $K_{3,3}$, whereas \className{Claw-Free} are graphs that do not contain $K_{1,3}$ (known as the claw) as an induced subgraph.
The entries follow the notation of [OG], where the complexity of the problem restricted to the graph class is {\CNPC} = \NP-complete, {\P} = polynomial, or {\COpen} = open. 
Following our convention,
reference [GJ] stands for ``The Guide'' and [OG] stands for ``Column 16'', and we highlight in bold the reference updates with the corresponding new recent references. 
Every reference associated with each entry of Table~\ref{table:first} has been checked, and the updated entries are precisely those that needed to be updated.
It is surprising that several {\COpen}? entries remain stubbornly open. At the time, D. S. Johnson proposed only one ``{\COpen}! = famous open problem'', \problemName{Membership} for \className{Perfect} graphs, which we know today to be in  \P, and two entries ``{\COpen} = may well be hard'', \problemName{Hamiltonian Circuit} restricted to \className{Permutation} graphs, known today to be in \P, and \problemName{Chromatic Index} restricted to \className{Planar} graphs, which is still open. 
We remark that in the original summary table [OG], there was only one entry co-authored by a Brazilian researcher among 330 entries, namely \problemName{Hamiltonian Circuit} restricted to \className{Grids}~\cite{IPS.82}, and today we have two additional such entries: \problemName{Maximum Cut} restricted to \className{Strongly Chordal}~\cite{SFK.13} and \problemName{Graph Isomorphism} restricted to \className{Proper Circular Arc}~\cite{LSS.08}.

%%SOME COMMENTS%%
%Bodlaender(1990) proved that Graph Isomorphism and Chromatic Index are polynomial-time solvable on partial k-trees. 
%However, his algorithms actually run in XP-time. 
%It has remained unknown for some year whether Graph Isomorphism admits an FPT-time algorithm. This was positively proved by Lokshtanov et al. (2007).
%
%Chromatic Index was proved to admit an FPT-time algorithm by Zhou et al. (1996).
%
%Thus, we must update these references on the table.

% \setlength{\aboverulesep}{0pt}
% \setlength{\belowrulesep}{0pt}

%\begin{landscape}\centering
\begin{table}[ht]\centering\small
\renewcommand{\arraystretch}{1.2}
\setlength\tabcolsep{3.6pt}
\resizebox{\linewidth}{!}{
\begin{tabular}{@{}l|ll|llllllllllllllllllll@{}}
\toprule
\textsc{Graph Class} & \multicolumn{2}{l|}{\problemName{Member}} & \multicolumn{2}{l}{\problemName{IndSet}} & \multicolumn{2}{l}{\problemName{Clique}} & \multicolumn{2}{l}{\problemName{CliPar}} & \multicolumn{2}{l}{\problemName{ChrNum}} & \multicolumn{2}{l}{\problemName{ChrInd}} & \multicolumn{2}{l}{\problemName{HamCir}} & \multicolumn{2}{l}{\problemName{DomSet}} & \multicolumn{2}{l}{\problemName{MaxCut}} & \multicolumn{2}{l}{\problemName{StTree}} & \multicolumn{2}{l}{\problemName{GraphIso}} \\ \midrule

\className{Trees/Forests} & \Poly & [T] & \Poly & [GJ] & \Poly & [T] & \Poly & [GJ] & \Poly & [T] & \Poly & [GJ] & \Poly & [T] & \Poly & [GJ] & \Poly & [GJ] & \Poly & [T] & \Poly & [GJ] \\

\className{Almost Trees ($k$)} & \Poly & [OG] & \Poly & [OG] & \Poly & [T] & \BPoly & \textbf{\cite{blanchette2012}} & \BPoly & \textbf{\cite{APDAM89}} & \BPoly & \textbf{\cite{bodlaender1990}} & \BPoly & \textbf{\cite{APDAM89}} & \Poly & \cite{APDAM89} & \BPoly & \textbf{\cite{bodlaender1994}} & \BPoly & \textbf{\cite{korach1990}} & \BPoly & \textbf{\cite{bodlaender1990}} \\

\className{Partial $k$-trees} & \Poly & [OG] & \Poly & \cite{APDAM89} & \Poly & [T] & \BPoly & \textbf{\cite{blanchette2012}} & \Poly & \cite{APDAM89} & \BPoly & \textbf{\cite{bodlaender1990}}  & \Poly & \cite{APDAM89} & \Poly & \cite{APDAM89} & \BPoly & \textbf{\cite{bodlaender1994}} & \BPoly & \textbf{\cite{korach1990}} & \BPoly & \textbf{\cite{bodlaender1990}} \\

\className{Bandwidth-$k$} & \Poly & [OG] & \Poly & [OG] & \Poly & [T] & \BPoly & \textbf{\cite{blanchette2012}} & \Poly & \cite{APDAM89} & \BPoly & \textbf{\cite{bodlaender1990}} & \BPoly & \textbf{\cite{APDAM89}} & \Poly & \cite{APDAM89} & \Poly & [OG] & \BPoly & \textbf{\cite{korach1990}} & \Poly & [OG] \\

\className{Degree-$k$} & \Poly & [T] & \NPh & [GJ] & \Poly & [T] & \NPh & \cite{CERIOLI20082270} & \NPh & [GJ] & \NPh & [OG] & \NPh & [GJ] & \NPh & [GJ] & \NPh & [GJ] & \NPh & [GJ] & \Poly & [OG] \\

\midrule

\className{Planar} & \Poly & [GJ] & \NPh & [GJ] & \Poly & [T] & \NPh & \cite{Kral2001} & \NPh & [GJ] & \Open &  & \NPh & [GJ] & \NPh & [GJ] & \Poly & [GJ] & \NPh & [OG] & \Poly & [GJ] \\

\className{Series Parallel} & \Poly & [OG] & \Poly & [OG] & \Poly & [T] & \BPoly & \textbf{\cite{blanchette2012}} & \Poly & \cite{APDAM89} & \Poly & \cite{bodlaender1990} & \Poly & \cite{APDAM89} & \Poly & [OG] & \Poly & [GJ] & \Poly & [OG] & \Poly & [GJ] \\

\className{Outerplanar} & \Poly & [OG] & \Poly & [OG] & \Poly & [T] & \Poly & [OG] & \Poly & [OG] & \Poly & [OG] & \Poly & [T] & \Poly & [OG] & \Poly & [GJ] & \Poly & [OG] & \Poly & [GJ] \\

\className{Halin} & \Poly & [OG] & \Poly & [OG] & \Poly & [T] & \Poly & [OG] & \Poly & \cite{APDAM89} & \Poly & \cite{bodlaender1990} & \Poly & [T] & \Poly & [OG] & \Poly & [GJ] & \BPoly & \textbf{\cite{winter1987}} & \Poly & [GJ] \\

\className{$k$-Outerplanar} & \Poly & [OG] & \Poly & [OG] & \Poly & [T] & \Poly & [OG] & \Poly & \cite{APDAM89} & \BPoly & \textbf{\cite{bodlaender1990}} & \Poly & [OG] & \Poly & [OG] & \Poly & [GJ] & \BPoly & \textbf{\cite{korach1990}} & \Poly & [GJ] \\

\className{Grid} & \Poly & [OG] & \Poly & [GJ] & \Poly & [T] & \Poly & [GJ] & \Poly & [T] & \Poly & [GJ] & \NPh & [OG] & \NPh & \cite{Clark1990} & \Poly & [T] & \NPh & [OG] & \Poly & [GJ] \\

\className{$K_{3,3}$-Free\textsuperscript{*}} & \Poly & [OG] & \NPh & [GJ] & \Poly & [T] & \NPh & \cite{Kral2001} & \NPh & [GJ] & \EOpen &  & \NPh & [GJ] & \NPh & [GJ] & \Poly & [OG] & \NPh & [GJ] & \BPoly & \textbf{\cite{samir2009}} \\

\className{Thickness-$k$} & \NPh & [OG] & \NPh & [GJ] & \Poly & [T] & \NPh & \cite{Kral2001} & \NPh & [GJ] & \NPh & [OG] & \NPh & [GJ] & \NPh & [GJ] & \NPh & \cite{Y.78} & \NPh & [GJ] & \BGI & \textbf{Prop. \ref{prop:isoThickness}}  \\

\className{Genus-$k$} & \Poly & [OG] & \NPh & [GJ] & \Poly & [T] & \NPh & \cite{Kral2001} & \NPh & [GJ] & \EOpen &  & \NPh & [GJ] & \NPh & [GJ] & \EOpen &  & \NPh & [GJ] & \Poly & [OG] \\ \midrule

\className{Perfect} & \BPoly & \textbf{\cite{cornuejols2013}} & \Poly & [OG] & \Poly & [OG] & \Poly & [OG] & \Poly & [OG] & \BNPh & \textbf{\cite{cai1991}}  & \NPh & [OG] & \NPh & [OG] & \BNPh & \textbf{\cite{bodlaender1994}} & \NPh & [GJ] & \aGI & \cite{LB.79} \\

\className{Chordal} & \Poly & [OG] & \Poly & [OG] & \Poly & [OG] & \Poly & [OG] & \Poly & [OG] & \EOpen &  & \NPh & \cite{muller1996} & \NPh & [OG] & \BNPh & \textbf{\cite{bodlaender1994}} & \NPh & [OG] & \aGI & \cite{LB.79} \\

\className{Split} & \Poly & [OG] & \Poly & [OG] & \Poly & [OG] & \Poly & [OG] & \Poly & [OG] & \EOpen &  & \NPh & \cite{muller1996} & \NPh & [OG] & \BNPh & \textbf{\cite{bodlaender1994}} & \NPh & [OG] & \aGI & \cite{S.78} \\

\className{Strongly Chordal} & \Poly & [OG] & \Poly & [OG] & \Poly & [OG] & \Poly & [OG] & \Poly & [OG] & \EOpen &  & \BNPh & \textbf{\cite{muller1996}} & \Poly & [OG] & \BNPh & \textbf{\cite{SFK.13}} & \Poly & [OG] & \BGI & \textbf{\cite{uehara2005}}  \\

\className{Comparability} & \Poly & [OG] & \Poly & [OG] & \Poly & [OG] & \Poly & [OG] & \Poly & [OG] & \BNPh & \textbf{\cite{cai1991}} & \NPh & [OG] & \NPh & \cite{MB.87} & \BNPh & \textbf{\cite{pocai2016}} & \NPh & [GJ] & \aGI & \cite{booth1979b} \\

\className{Bipartite} & \Poly & [T] & \Poly & [GJ] & \Poly & [T] & \Poly & [GJ] & \Poly & [T] & \Poly & [GJ] & \NPh & [OG] & \NPh & \cite{MB.87} & \Poly & [T] & \NPh & [GJ] & \aGI & \cite{booth1979b} \\

\className{Permutation} & \Poly & [OG] & \Poly & [OG] & \Poly & [OG] & \Poly & [OG] & \Poly & [OG] & \EOpen &  & \BPoly & \textbf{\cite{deogun1994}}  & \Poly & [OG] & \EOpen &  & \Poly & [OG] & \Poly & [OG] \\

\className{Cographs} & \Poly & [T] & \Poly & [OG] & \Poly & [OG] & \Poly & [OG] & \Poly & [OG] & \EOpen &  & \Poly & [OG] & \Poly & [OG] & \BPoly & \textbf{\cite{bodlaender1994}} & \Poly & [OG] & \Poly & [OG] \\ \midrule

\className{Undirected} Path & \Poly & [OG] & \Poly & [OG] & \Poly & [OG] & \Poly & [OG] & \Poly & [OG] & \EOpen &  & \BNPh & \textbf{\cite{bertossi1986}} & \NPh & [OG] & \BNPh & \textbf{\cite{bodlaender1994}} & \BNPh & \textbf{Thm.~\ref{thm:SteinerUP}} & \aGI & \cite{booth1979b} \\

\className{Directed Path} & \Poly & [OG] & \Poly & [OG] & \Poly & [OG] & \Poly & [OG] & \Poly & [OG] & \EOpen &  & \BNPh & \textbf{\cite{panda2008np}}  & \Poly & [OG] & \BNPh & \textbf{\cite{ABMR.20}}  & \Poly & [OG] & \BPoly & \textbf{\cite{Babel1996}}  \\

\className{Interval} & \Poly & [OG] & \Poly & [OG] & \Poly & [OG] & \Poly & [OG] & \Poly & [OG] & \EOpen &  & \Poly & [OG] & \Poly & [OG] & \BNPh & \textbf{\cite{ABMR.20}} & \Poly & [OG] & \Poly & [OG] \\

\className{Circular Arc} & \Poly & [OG] & \Poly & [OG] & \Poly & [OG] & \Poly & [OG] & \NPh & [OG] & \EOpen &  & \BPoly & \textbf{\cite{shih1992n}} & \Poly & [OG] & \BNPh & \textbf{\cite{ABMR.20}} & \Poly & \cite{BK.19} & \BPoly & \textbf{\cite{Krawczyk2019}} \\

\className{Circle} & \Poly & [OG] & \Poly & [GJ] & \Poly & [OG] & \BNPh & \textbf{\cite{keil2006}} & \NPh & [OG] & \EOpen &  & \IBNPh & \textit{\textbf{\cite{D.89}}} & \BNPh & \textbf{\cite{K.93}}  & \BNPh & \textbf{\cite{buchheim2006}} & \Poly & [\pc{OG}] & \BPoly & \textbf{\cite{Kalisz2019}} \\

\className{Proper Circ. Arc} & \Poly & [OG] & \Poly & [OG] & \Poly & [OG] & \Poly & [OG] & \Poly & [OG] & \EOpen &  & \Poly & [OG] & \Poly & [OG] & \EOpen &  & \Poly & \cite{BK.19} & \BPoly & \textbf{\cite{LSS.08}}  \\

\className{Edge (or Line)} & \Poly & [OG] & \Poly & [GJ] & \Poly & [T] & \NPh & \cite{MUNARO20172208} & \NPh & [OG] & \BNPh & \textbf{\cite{cai1991}}  & \NPh & [OG] & \NPh & [GJ] & \BPoly & \textbf{\cite{guruswami1999}}  & \NPh & \cite{BBJPPL.arxiv} & \aGI & [OG] \\

\className{Claw-Free} & \Poly & [T] & \Poly & [OG] & \BNPh & \textbf{\cite{poljak1974}} & \NPh & \cite{MP.96} & \NPh & [OG] & \BNPh & \textbf{\cite{cai1991}}  & \NPh & [OG] & \NPh & [GJ] & \BNPh & \textbf{\cite{bodlaender1994}} & \NPh & \cite{BBJPPL.arxiv} & \aGI & [OG] \\ \bottomrule
\end{tabular}
}

\caption{The updated NP-Completeness Column: An Ongoing Guide table 35 years later. Depicted in bold are the references that correspond to unresolved entries in [OG] and [GJ]. The references not in bold confirm resolved entries from [OG] or [GJ], that we updated either because they cited private communications, because the cited reference is not easily accessible, or could not be confirmed.
There is one entry highlighted in italic that corrects the entry for \problemName{HamCirc} restricted to \className{Circle graphs}.
We keep the abbreviations used by [OG], namely for entries: {\P} = Polynomial-time solvable; {\CNPC} = \NP-complete; {\CGI} = Open, but equivalent in complexity to general \problemName{Graph Isomorphism}; {\COpen}? = Apparently open, but possibly easy to resolve; and {\COpen} = Open, and may well be hard; and for references [T] = Restriction trivializes the problem; [GJ] = the Guide~\cite{GJ.79}; and [OG] = the Ongoing guide~\cite{J.85}, please refer to this reference for the entry.}\label{table:first}
\end{table}
%\textcolor{blue}{No Reference = This result can be obtained easily by standard techniques};
%\end{landscape}
%\P? = Appears to be polynomial-time by standard techniques\ana{nao tem nenhuma desse tipo, certo?fora isso, acho que deveriamos padronizar e usar o complexity package para P e NP ao longo do texto inteiro. Se concordarem, posso fazer},

We depict in Figures~\ref{fig:classesfirst} and~\ref{fig:classessecond}  the relations between the graph classes, and use the convention from [OG] that an arrow from \textsc{Class A} to \textsc{Class B} means that \textsc{Class A} contains \textsc{Class B}. 
Since the only {\COpen}! entry was  \problemName{Membership} for \className{Perfect} graphs, the chosen 30 classes were classified into the following four categories: Trees and Near-Trees, Planarity and its Relations, A Catalog of Perfect Graphs, and Intersection Graphs.
Although very similar to the figures presented in [OG], our Figures~\ref{fig:classesfirst} and~\ref{fig:classessecond} present some additional relations, that were either unknown or unobserved, and about which we comment next.

Figure~\ref{fig:classesfirst} highlights the key property that 7 graph classes are subclasses of \className{Partial $k$-Trees}~\cite{B.98}, also known as \className{Bounded Treewidth} graphs.
Also, although Table~\ref{table:first} follows the same organization as the one used in [OG], our proposed new Table~\ref{table:second} is organized in a way as to highlight this relationship, with the first~8 rows being exactly \className{Partial $k$-Trees} and its subclasses. 
In the summary table, D.S. Johnson used entry ``\P? =  appears to be polynomial-time solvable by standard techniques, but I haven’t checked the details''. D.S. Johnson is correct, since all \P? entries are known today to be {\P} entries. All former \P? entries used to
appear in these 8 rows, \className{Partial $k$-Trees} and the 7 subclasses~\cite{B.98}.

Another difference is that we use \className{$K_{3,3}$-Free\textsuperscript{*}} to denote the graph class referred to in [OG] by \className{$K_{3,3}$-Free}. This is to avoid confusion, since nowadays it is standard to use \className{$H$-Free} to refer to the class of graphs that do not contain $H$ as an \emph{induced subgraph}. However, the class investigated in [OG] is instead the class of graphs that contain no subgraphs \emph{homeomorphic} to $K_{3,3}$; in other words, the class of graphs that do not contain $K_{3,3}$ as a topological minor. Observe that, using our notation, we have that \className{$K_{3,3}$-Free\textsuperscript{*}} is a proper subclass of \className{$K_{3,3}$-Free}. 
Also, we mention that this confusion does not occur for \className{Claw-Free} graphs, since we, as well as [OG], use it to denote the class of graphs that do not contain $K_{1,3}$ (also known as the claw) as an induced subgraph.

Finally, we mention two relations that do not appear in [OG], both involving the class \className{Thickness-$k$}. A graph $G$ is said to have \emph{thickness at most} $k$ if $E(G)$ can be partitioned into at most $k$ subsets, each of which forms a planar subgraph of $G$. 
In the same way as all the other graph classes that have a parameter in their names such as the class of \className{Partial $k$-Trees} that is also known as \className{Bounded Treewidth} graphs,
the class \className{Thickness-$k$} means 
\className{Bounded Thickness} graphs.
First, note that if $G$ has degree at most $\Delta(G)$, then by Vizing's Theorem, we get that $E(G)$ can be colored with at most $\Delta(G)+1$ colors. In other words, this means that the edge set of $G$ can be partitioned into $\Delta(G)+1$ matchings, which are planar graphs, and hence $G$ has thickness at most $\Delta(G)+1$. Therefore, we get that \className{Degree-$k$} is a subclass of \className{Thickness-$k$}. 
Another non-trivial relation involving this class is with \className{Genus-$k$} graphs. 
%A \emph{book embedding} of a graph $G$ is an order of the vertex set $V(G)$, together with a partition $\{E_1,\cdots,E_k\}$ of $E(G)$ such that the drawing of each $E_i$ respecting the given order is a planar drawing of the subgraph $(V(G),E_i)$. The \emph{pagenumber} of a graph $G$ is the minimum $k$ for which $E(G)$ admits such a partition. 
%A \emph{$k$-book embedding} of a graph $G$ is a  pair $(\prec, \mathcal{E})$, where $\prec$ is a linear order of $V(G)$ and $\mathcal{E} = \set{E_1,\ldots,E_k}$ is a $k$-partition of $E(G)$ such that, for every two edges $uv$ and $u'v'$ belonging to a same part $E_i$ of $\mathcal{E}$, we have that, if $u \prec v$, $u \prec u'$ and $u' \prec v'$, then either $u \prec v \prec u' \prec v'$ or $u \prec u' \prec v' \prec v$. 
%Note that, if $(\prec, \mathcal{E})$ is a $k$-book embedding of a graph $G$, then each part $E_i \in \mathcal{E}$ induces a planar subgraph of $G$. 
A \emph{$k$-book embedding} of a graph $G$ is a linear ordering of its vertices along the spine of a book and an assignment of its edges to $k$ pages so that edges assigned to the same page can be drawn on that page without crossings.
The \emph{pagenumber} of a graph $G$ is the minimum $k$ for which $G$ admits a $k$-book embedding. 
Clearly, the pagenumber of $G$ is an upper bound for the thickness of $G$. Additionally, in~\cite{Malitz1994} the authors prove that the pagenumber of $G$ is bounded above by a function of the genus of $G$. This means that if $G$ has bounded genus, then $G$ also has bounded thickness. Therefore, \className{Genus-$k$} is a subclass of \className{Thickness-$k$}. 
On the other hand, to the best of our knowledge, it is not known whether graphs with bounded thickness have bounded genus. %We note that graphs with bounded thickness do not necessarily have bounded genus, but that such a relation could exist without yet being known, since up to our knowledge no counter-example to this assumption has been presented.

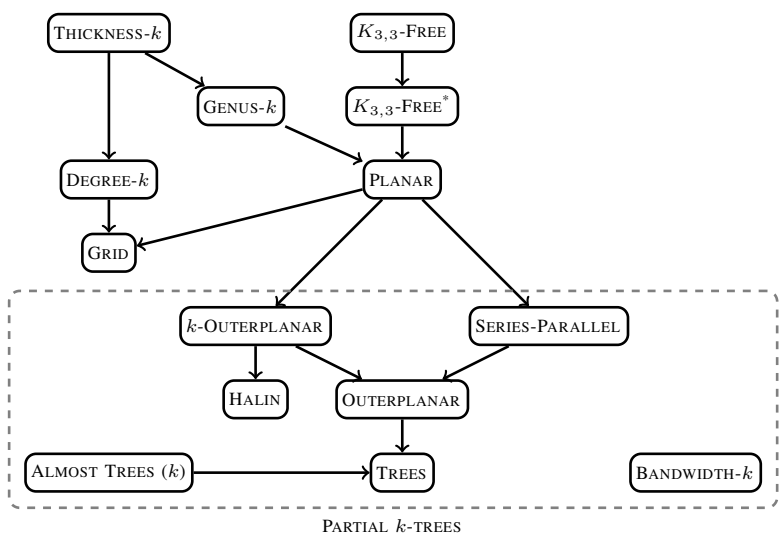
\begin{figure}[ht]\centering
  \begin{tikzpicture}[scale=0.65]
  \pgfsetlinewidth{1pt}
  \tikzset{class/.style={rounded corners,minimum size=0.5cm, draw, inner sep=2pt,font=\scriptsize}}

  \node[class] (planar) at (0,0) {\className{Planar}};
  \node[class] (k33) at (0,1.5) {\className{$K_{3,3}$-Free\textsuperscript{*}}};
  \node[class] (k33free) at (0,3) {\className{$K_{3,3}$-Free}};
  \node[class] (thick) at (-6,3) {\className{Thickness-$k$}};
  \node[class] (genus) at (-3.3,1.5) {\className{Genus-$k$}};

  \node[class] (grid) at (-6,-1.5) {\className{Grid}};
  \node[class] (degree) at (-6,0) {\className{Degree-$k$}};
  
  \node[class] (kouter) at (-3,-3) {\className{$k$-Outerplanar}};
  \node[class] (series) at (3,-3) {\className{Series-Parallel}};

  \node[class] (halin) at (-3,-4.5) {\className{Halin}};
  \node[class] (outerplanar) at (0,-4.5) {\className{Outerplanar}};
  \node[class] (tree) at (0,-6) {\className{Trees}};

  \node[class] (almosttree) at (-6,-6) {\className{Almost Trees ($k$)}};
  \node[class] (bandwidth) at (6,-6) {\className{Bandwidth-$k$}};

  \draw[->] (k33free)--(k33); 
  \draw[->] (k33)--(planar); 
  \draw[->] (genus)--(planar); 
  \draw[->] (planar)--(kouter); 
  \draw[->] (planar)--(grid); 
  \draw[->] (planar)--(series);
  \draw[->] (kouter)--(halin); 
  \draw[->] (kouter)--(outerplanar);
  \draw[->] (series)--(outerplanar);
  \draw[->] (outerplanar)--(tree);
  \draw[->] (degree)--(grid);
  \draw[->] (thick)--(degree);
  \draw[->] (thick)--(genus);
  \draw[->] (almosttree)--(tree);

\node[draw=gray,rounded corners,dashed,inner sep=2mm,label=below:{\scriptsize \className{Partial $k$-trees}},fit=(almosttree) (kouter) (bandwidth)] (partialtree) {};
\end{tikzpicture}
\caption{Containment relations for classes from [OG], where, in particular, the subclasses of \className{Partial $k$-Trees} are highlighted. A graph class \textsc{Class A} has an arrow to a graph class \textsc{Class B} if \textsc{Class A} contains \textsc{Class B}. (Adapted from [OG].)
%The containment of \className{Degree-$k$} in \className{Thickness-$k$} follows from Vizing's theorem, which states that $\Delta+1$ colors suffice to proper edge color every graph with maximum degree $\Delta$.
%This implies that the edge set of every graph with maximum degree $\Delta$ can be partitioned into at most $\Delta+1$ subsets (namely, the color classes), such that each subset induces a planar graph.  (Adapted from [OG].)
}\label{fig:classesfirst}
\end{figure}
\begin{figure}[ht]\centering
\begin{tikzpicture}[scale=0.70]
  \pgfsetlinewidth{1pt}
  \tikzset{class/.style={rounded corners, minimum size=0.5cm, draw, inner sep=2pt,font=\scriptsize}}

\node[class] (perfect) at (0,0) {\className{Perfect}};
\node[class] (comp) at (-2,-2) {\className{Comparability}};
\node[class] (chordal) at (3,-2) {\className{Chordal}};
\draw[->] (perfect)--(chordal);
\draw[->] (perfect)--(comp);

\node[class] (split) at (2,-3.5) {\className{Split}};
\node[class] (strong) at (5.5,-3.5) {\className{Strongly Chordal}};
\node[class] (undir) at (3,-5) {\className{Undirected Path}};
\draw[->] (chordal)--(split);
\draw[->] (chordal)--(strong);
\draw[->] (chordal)--(undir);

\node[class] (dir) at (4,-7) {\className{Directed Path}};
\draw[->] (undir)--(dir);
\draw[->] (strong)--(dir);

\node[class] (bip) at (0,-5) {\className{Bipartite}};
\node[class] (tree) at (0,-7) {\className{Tree}};
\draw[->] (comp)--(bip);
\draw[->] (bip)--(tree);
\draw[->] (dir)--(tree);

\node[class] (perm) at (-3,-3.5) {\className{Permutation}};
\node[class] (cograph) at (-3,-5) {\className{Cograph}};
\draw[->] (comp)--(perm);
\draw[->] (perm)--(cograph);

\node[class] (circle) at (-3,-8.5) {\className{Circle}};
\node[class] (circular) at (2,-8.5) {\className{Circular Arc}};

\node[class] (PCA) at (-0.5,-10) {\className{Proper Circular Arc}};
\node[class] (interval) at (5.5,-10) {\className{Interval}};

\draw[->] (dir)--(interval);
\draw[->] (circular)--(interval);
\draw[->] (circular)--(PCA);
\draw[->] (circle)--(PCA);
\end{tikzpicture}

\caption{Containment relations for classes from [OG], our target class is \className{Undirected Path}. (Adapted from [OG].)}\label{fig:classessecond}
\end{figure}
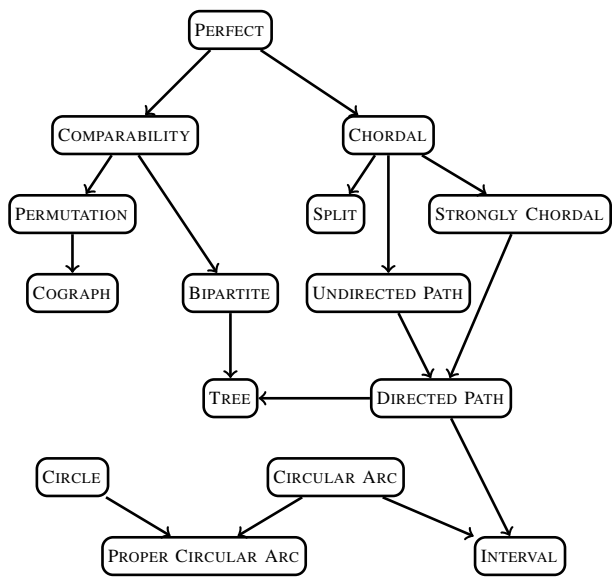

\paragraph{Our contribution}In the summary table, the \problemName{Steiner Tree} column had 6 unresolved entries: 5 \P? entries, all of which are now known to be subclasses of \className{Partial $k$-Trees} and henceforth are in $\P$, and one {\COpen}? entry for \className{Undirected Path} graphs. Upon close investigation of the references given in [GJ] and [OG], we found that many consist of ``private communication'' or could not be confirmed.
In the particular case of the \problemName{Steiner Tree} column, we found that this happens for the lines \className{Circular Arc}, \className{Circle}, \className{Proper Circular Arc},
\className{Edge (or Line)}, and  \className{Claw-Free}.
 We were able to find a recent reference for \className{Edge (or Line)} and \className{Claw-Free}.
 Additionally, based on the facts that \className{Circular Arc} and, consequently, \className{Proper Circular Arc} graphs have bounded mim-width and that {\sc Steiner tree} is polynomial-time solvable for graphs of bounded mim-width, we were able to resolve such entries as well. 
However, we could not find any reference for \className{Circle} graphs, and therefore we underline the corresponding [OG] reference in Table~\ref{table:first}.  
 %We were able to find a recent reference for \className{Edge (or Line)} and \className{Claw-Free}, \textcolor{blue}{but not for the other three graph classes}, and this is why we underline these [OG] references. 
 Moreover, the entry \className{Undirected Path} is said to be $\NP$-complete in~\cite{S.book}, but again with a ``private communication'' reference (we comment more on this in Section~\ref{sec:ST}). Believing in the need to have explicit proofs for these important problems, we here give a proof of $\NP$-completeness for \className{Undirected Path} graphs, which would provide a full dichotomy Polynomial versus \NP-complete for the \problemName{Steiner Tree} column.
Actually, we provide a second dichotomy for the \problemName{Steiner Tree} problem restricted to \className{Undirected Path} graphs, according to the diameter of the input graph.
For the \problemName{Graph Isomorphism} column we also provide 
a full dichotomy Polynomial versus \NP-complete by giving an explicit proof of \GI-completeness for \className{Thickness-$k$} graphs (please refer to Section~\ref{sec:simplered}).
%In the summary table, the \problemName{StTree} column had 6 unresolved entries: 5 \P? entries, all of which are now known to be subclasses of partial $k$-trees and henceforth in $\P$, and one O? entry for undirected path, so we decided to close the column by resolving the remaining O? entry for undirected path as N which is our Theorem 2 of Section 3. 
%We have explicitly displayed the references, all of them, old and new, for the StTree column, and realized that there were 3 cited in~\cite{J.85} as private communication. 
%We found a recent reference for claw-free, and highlight as colored OG the other two: Circle and Edge (or Line).

% Besides providing a full dichotomy Polynomial versus \NP-complete for the \problemName{StTree} column, in Table~\ref{table:first} we have thoroughly revised the summary table that 35 years later has 49 new resolved entries depicted in bold.
% Additionally, there are 31 references not in bold that confirm resolved entries from [OG] or [GJ], that we updated because they cited private communications, or because the cited reference is not easily accessible, or could not be confirmed.

Besides providing a full dichotomy Polynomial versus \NP-complete for the \problemName{Steiner Tree} column, in Table~\ref{table:first} we have thoroughly revised the summary table that 35 years later has 54 new resolved entries depicted in bold.
Additionally, there are 36 citations for references not in bold that confirm resolved entries from [OG] or [GJ], that we updated because they cited private communications, or because the cited reference is not easily accessible, or could not be confirmed.
There is one entry highlighted in italic that corrects the entry for \problemName{Hamiltonian Circuit} restricted to \className{Circle} graphs originally $\P$ but that actually is {\CNPC}~\cite{D.89}.

In addition, we consider the parameterized complexity of hard problems to revise Table~\ref{table:first} into a new Table~\ref{table:second}, a proposed summary table of what it means today to study a problem from a computational complexity point of view. This is of course just a sample of what it means, since we could even consider other classifications (e.g., the approximability complexity theory and the space complexity theory). We have kept the same 30 classes but have drawn the horizontal lines so that the \className{Partial $k$-Trees} subclasses appear together, and we may focus on the remaining rows, where the {\NP}-complete entries appear. In Section~\ref{sec:parameterized}, we discuss in detail Table~\ref{table:second}, also presenting the basic definitions of parameterized complexity, in order to draw the reader's attention to the granularity provided by the parameterized complexity for the $\NP$-complete problems into {\XP}, {\FPT}, {\CWONE}, {\CWTWO}, and {\CparaNP}. We depict in Table~\ref{table:second} as {\COpen*} the only {\CNPC} entry of Table~\ref{table:first} that constitutes the parameterized puzzle for which so far we were not able to provide a parameterized complexity classification. This is to show how rich the original problems posed by Garey and Johnson are, and how their initial classification continues to develop into ever evolving complexity classes, with the $\NP$-complete class being now just the beginning of a very interesting story.

\section{The Parameterized Complexity of Hard Problems}\label{sec:parameterized}

In this section, we discuss in detail the new Table~\ref{table:second}. 
We also further discuss some of the differences that arise between our updated Table~\ref{table:first} and the table presented in [OG] thirty-five years ago. 
We start by giving some basic definitions of parameterized complexity and its complexity hierarchy classes. After that, we discuss each of the 11 columns separately.

%Note that no $\NPh$ entry remains, and that now only column \problemName{}{Chromatic %Index} contains O? entries. 
%Also, \problemName{Hamiltonian Circuit} is replaced by \problemName{Longest Path}, %listed in [GJ] a problem ND29.\ana{nao sei qual deveria ser a referencia. Na %verdade, isso e a frase a seguir sao mencionadas nos paragrafos especificos. Vale a %pena deixar tb aqui?}.
%We should mention the IndSet and VerCov are not equivalent with respect to %parameterized complexity, but we decided not to include an extra column.

%%SOME COMMENTS%%
%Bodlaender(1990) proved that Graph Isomorphism and Chromatic Index are polynomial-time solvable on partial k-trees. 
%However, his algorithms actually run in XP-time. 
%It has remained unknown for some year whether Graph Isomorphism admits an FPT-time algorithm. This was positively proved by Lokshtanov et al. (2007).
%
%Chromatic Index was proved to admit an FPT-time algorithm by Zhou et al. (1996).
%
%Thus, we must update these references on the table.

% \setlength{\aboverulesep}{0pt}
% \setlength{\belowrulesep}{0pt}

%\begin{landscape}\centering
\begin{table}[ht]\centering\small
\renewcommand{\arraystretch}{1.2}
\setlength\tabcolsep{3.6pt}
\resizebox{\linewidth}{!}{
\begin{tabular}{@{}l|ll|llllllllllllllllllll@{}}
\toprule
\textsc{Graph Class} & \multicolumn{2}{l|}{\problemName{Member}} & \multicolumn{2}{l}{\problemName{IndSet}} & \multicolumn{2}{l}{\problemName{Clique}} & \multicolumn{2}{l}{\problemName{CliPar}} & \multicolumn{2}{l}{\problemName{ChrNum}} & \multicolumn{2}{l}{\problemName{ChrInd}} & 
\multicolumn{2}{l}{\problemName{HamCir}} & 
\multicolumn{2}{l}{\problemName{DomSet}} & 
\multicolumn{2}{l}{\problemName{MaxCut}} & 
\multicolumn{2}{l}{\problemName{$\param$-StTree}} & 
\multicolumn{2}{l}{\problemName{GraphIso}} \\ \midrule

\className{Partial $k$-trees} & \Poly & [OG] & \Poly & \cite{APDAM89} & \Poly & [T] & \BPoly & \textbf{\cite{blanchette2012}} & \Poly & \cite{APDAM89} & \BPoly & \textbf{\cite{bodlaender1990}}  & \Poly & \cite{APDAM89} & \Poly & \cite{APDAM89} & \BPoly & \textbf{\cite{bodlaender1994}} & \BPoly & \textbf{\cite{korach1990}} & \BPoly & \textbf{\cite{bodlaender1990}} \\

\className{Trees/Forests} & \Poly & [T] & \Poly & [GJ] & \Poly & [T] & \Poly & [GJ] & \Poly & [T] & \Poly & [GJ] & \Poly & [T] & \Poly & [GJ] & \Poly & [GJ] & \Poly & [T] & \Poly & [GJ] \\

\className{Almost Trees ($k$)} & \Poly & [OG] & \Poly & [OG] & \Poly & [T] & \BPoly & \textbf{\cite{blanchette2012}} & \BPoly & \textbf{\cite{APDAM89}} & \BPoly & \textbf{\cite{bodlaender1990}} & \BPoly & \textbf{\cite{APDAM89}} & \Poly & \cite{APDAM89} & \BPoly & \textbf{\cite{bodlaender1994}} & \BPoly & \textbf{\cite{korach1990}} & \BPoly & \textbf{\cite{bodlaender1990}} \\

\className{Bandwidth-$k$} & \Poly & [OG] & \Poly & [OG] & \Poly & [T] & \BPoly & \textbf{\cite{blanchette2012}} & \Poly & \cite{APDAM89} & \BPoly & \textbf{\cite{bodlaender1990}} & \BPoly & \textbf{\cite{APDAM89}} & \Poly & \cite{APDAM89} & \Poly & [OG] & \BPoly & \textbf{\cite{korach1990}} & \Poly & [OG] \\

\className{Series Parallel} & \Poly & [OG] & \Poly & [OG] & \Poly & [T] & \BPoly & \textbf{\cite{blanchette2012}} & \Poly & \cite{APDAM89} & \Poly & \cite{bodlaender1990} & \Poly & \cite{APDAM89} & \Poly & [OG] & \Poly & [GJ] & \Poly & [OG] & \Poly & [GJ] \\

\className{Outerplanar} & \Poly & [OG] & \Poly & [OG] & \Poly & [T] & \Poly & [OG] & \Poly & [OG] & \Poly & [OG] & \Poly & [T] & \Poly & [OG] & \Poly & [GJ] & \Poly & [OG] & \Poly & [GJ] \\

\className{Halin} & \Poly & [OG] & \Poly & [OG] & \Poly & [T] & \Poly & [OG] & \Poly & \cite{APDAM89} & \Poly & \cite{bodlaender1990} & \Poly & [T] & \Poly & [OG] & \Poly & [GJ] & \BPoly & \textbf{\cite{winter1987}} & \Poly & [GJ] \\

\className{$k$-Outerplanar} & \Poly & [OG] & \Poly & [OG] & \Poly & [T] & \Poly & [OG] & \Poly & \cite{APDAM89} & \BPoly & \textbf{\cite{bodlaender1990}} & \Poly & [OG] & \Poly & [OG] & \Poly & [GJ] & \BPoly & \textbf{\cite{korach1990}} & \Poly & [GJ] \\

\midrule

\className{Planar} & \Poly & [GJ] & \TFPT & \cite{Niedermeier2006} & \Poly & [T] & \TFPT & [T] & \PNP & \cite{GJS.76} & \Open &  & \TFPT & {\cite{Monien1985}} & \TFPT & \cite{DF.95} & \Poly & [GJ] & \TFPT & \cite{PPSL.18} & \Poly & [GJ] \\

\className{Grid} & \Poly & [OG] & \Poly & [GJ] & \Poly & [T] & \Poly & [GJ] & \Poly & [T] & \Poly & [GJ] & \TFPT & {\cite{Monien1985}} & \TFPT & \cite{DF.95} & \Poly & [T] & \TFPT & \cite{PPSL.18} & \Poly & [GJ] \\

\className{$K_{3,3}$-Free\textsuperscript{*}} & \Poly & [OG] & \TFPT & \cite{Demaine2002} & \Poly & [T] & \TFPT & [T] & \PNP & \cite{GJS.76} & \EOpen &  & \TFPT & {\cite{Monien1985}} & \TFPT & \cite{PRS.09} & \Poly & [OG] & \TXP & [T] & \BPoly & \textbf{\cite{samir2009}} \\

\className{Thickness-$k$} & \PNP & [OG] & \TFPT & \cite{Khot2000} & \Poly & [T] & \TFPT & [T] & \PNP & \cite{GJS.76} & \PNP & \cite{H.81} & \TFPT & {\cite{Monien1985}} & \TXP & [T] & \TFPT & \cite{MR.99} & \TXP & [T] & \TFPT & {\cite{BL.83}}\\

\className{Genus-$k$} & \Poly & [OG] & \TFPT & \cite{Chen2007} & \Poly & [T] & \TFPT & [T] & \PNP & \cite{GJS.76} & \EOpen &  & \TFPT & {\cite{Monien1985}} & \TFPT & \cite{EFF.04} & \TFPT & \cite{MR.99}  & \TFPT & \cite{PPSL.18} & \Poly & [OG] \\ \midrule 

\className{Degree-$k$} & \Poly & [T] & \TFPT & \cite{DF.99} & \Poly & [T] & \TFPT & [T] & \PNP & \cite{GJS.76} & \PNP & \cite{H.81} & \TFPT & {\cite{Monien1985}}
& \TFPT & \cite{AG.09} & \TFPT & \cite{MR.99} & \TFPT & \cite{JLRSS.17} & \Poly & [OG] \\ \midrule

\className{Perfect} & \BPoly & \textbf{\cite{cornuejols2013}} & \Poly & [OG] & \Poly & [OG] & \Poly & [OG] & \Poly & [OG] & \PNP & \cite{cai1991}  & \TFPT & {\cite{Monien1985}} & \WTWO & \cite{Raman2008} & \TFPT & \cite{MR.99} & \WTWO & \cite{Raman2008} & \TFPT & {\cite{BL.83}} \\

\className{Chordal} & \Poly & [OG] & \Poly & [OG] & \Poly & [OG] & \Poly & [OG] & \Poly & [OG] & \EOpen &  & \TFPT & {\cite{Monien1985}} & \WTWO & \cite{Raman2008} & \TFPT & \cite{MR.99} & \WTWO & \cite{Raman2008} & \TFPT & {\cite{BL.83}} \\

\className{Split} & \Poly & [OG] & \Poly & [OG] & \Poly & [OG] & \Poly & [OG] & \Poly & [OG] & \EOpen &  & \TFPT & {\cite{Monien1985}} & \WTWO & \cite{Raman2008} & \TFPT & \cite{MR.99} & \WTWO & \cite{Raman2008} & \TFPT & {\cite{BL.83}} \\

\className{Strongly Chordal} & \Poly & [OG] & \Poly & [OG] & \Poly & [OG] & \Poly & [OG] & \Poly & [OG] & \EOpen &  & \TFPT & {\cite{Monien1985}} & \Poly & [OG] & \TFPT & \cite{MR.99} & \Poly & [OG] & \TFPT & {\cite{BL.83}}  \\

\className{Comparability} & \Poly & [OG] & \Poly & [OG] & \Poly & [OG] & \Poly & [OG] & \Poly & [OG] & \PNP & \cite{cai1991} & \TFPT & {\cite{Monien1985}} & \WTWO & \cite{Raman2008} & \TFPT & \cite{MR.99} & \WTWO & Prop.~\ref{prop:SteinerBipartite} & \TFPT & {\cite{BL.83}} \\

\className{Bipartite} & \Poly & [T] & \Poly & [GJ] & \Poly & [T] & \Poly & [GJ] & \Poly & [T] & \Poly & [GJ] & \TFPT & {\cite{Monien1985}} & \WTWO & \cite{Raman2008} & \Poly & [T] & \WTWO & Prop.~\ref{prop:SteinerBipartite} & \TFPT & {\cite{BL.83}} \\

\className{Permutation} & \Poly & [OG] & \Poly & [OG] & \Poly & [OG] & \Poly & [OG] & \Poly & [OG] & \EOpen &  & \BPoly & \textbf{\cite{deogun1994}}  & \Poly & [OG] & \TFPT & \cite{MR.99} & \Poly & [OG] & \Poly & [OG] \\

\className{Cographs} & \Poly & [T] & \Poly & [OG] & \Poly & [OG] & \Poly & [OG] & \Poly & [OG] & \EOpen &  & \Poly & [OG] & \Poly & [OG] & \BPoly & \textbf{\cite{bodlaender1994}} & \Poly & [OG] & \Poly & [OG] \\ \midrule

\className{Undirected Path} & \Poly & [OG] & \Poly & [OG] & \Poly & [OG] & \Poly & [OG] & \Poly & [OG] & \EOpen &  & \TFPT & {\cite{Monien1985}} & \TXP & [T] & \TFPT & \cite{MR.99} & \TXP & [T] & \TFPT & {\cite{BL.83}} \\

\className{Directed Path} & \Poly & [OG] & \Poly & [OG] & \Poly & [OG] & \Poly & [OG] & \Poly & [OG] & \EOpen &  & \TFPT & {\cite{Monien1985}}  & \Poly & [OG] & \TFPT & \cite{MR.99}  & \Poly & [OG] & \BPoly & \textbf{\cite{Babel1996}}  \\

\className{Interval} & \Poly & [OG] & \Poly & [OG] & \Poly & [OG] & \Poly & [OG] & \Poly & [OG] & \EOpen &  & \Poly & [OG] & \Poly & [OG] & \TFPT & \cite{MR.99}  & \Poly & [OG] & \Poly & [OG] \\

\className{Circular Arc} & \Poly & [OG] & \Poly & [OG] & \Poly & [OG] & \Poly & [OG] & \TFPT & {\cite{G.etal.80}} & \EOpen &  & \BPoly & \textbf{\cite{shih1992n}} & \Poly & [OG] & \TFPT & \cite{MR.99}  & \Poly & \cite{BK.19} & \BPoly & \textbf{\cite{Krawczyk2019}} \\

\className{Circle} & \Poly & [OG] & \Poly & [GJ] & \Poly & [OG] & \TXP & {\cite{keil2006}} & \PNP & \cite{Walter88} & \EOpen &  & \TFPT & {\cite{Monien1985}} & \WONE & {\cite{BGMPST.14}}  & \TFPT & \cite{MR.99} & \Poly & [\pc{OG}] & \BPoly & \textbf{\cite{Kalisz2019}} \\

\className{Proper Circ. Arc} & \Poly & [OG] & \Poly & [OG] & \Poly & [OG] & \Poly & [OG] & \Poly & [OG] & \EOpen &  & \Poly & [OG] & \Poly & [OG] & \TFPT & \cite{MR.99}  & \Poly & \cite{BK.19} & \BPoly & \textbf{\cite{LSS.08}}  \\

\className{Edge (or Line)} & \Poly & [OG] & \Poly & [GJ] & \Poly & [T] & \ONPh & \cite{MUNARO20172208} & \PNP & \cite{H.81} & \PNP & \cite{cai1991}  & \TFPT & {\cite{Monien1985}} & \TFPT & \cite{CPPPW.11} & \BPoly & \textbf{\cite{guruswami1999}}  & \TXP & [T] & \TFPT & {\cite{BL.83}}\\

\className{Claw-Free} & \Poly & [T] & \Poly & [OG] & \TFPT & \cite{CPPPW.11} & \PNP & \cite{MP.96} & \PNP & \cite{H.81} & \PNP & \cite{cai1991}  & \TFPT & {\cite{Monien1985}} & \TFPT & \cite{CPPPW.11} & \TFPT & \cite{MR.99} & \TXP & [T] & \TFPT & \cite{BL.83} \\ \bottomrule
\end{tabular}
}

\caption{The parameterized \NP-Completeness Column: An Ongoing Guide table revised for the 21st century. The parametrized puzzle is to classify every \COpen\xspace  entry, every \COpen? entry and every {\CNPC} entry into {\FPT} = Fixed parameter tractable, \CWONE = \W[1]-hard, \CWTWO = \W[2]-hard, and {\CparaNP} = \paraNP-complete, where the considered parameterization is with respect to the natural parameter of each corresponding problem. 
We highlight as {\COpen*} the {\CNPC} entry of Table~\ref{table:first} that constitutes the parameterized puzzle, for which so far we were not able to provide a parameterized complexity classification.}\label{table:second}
\end{table}
%\end{landscape}

\paragraph{Parameterized Complexity}

We refer the reader to~\cite{DF.99,Niedermeier2006,DF.13,C.etal.15,Fomin2019kernelization} for all basic formal definitions, as well as many techniques employed in the parameterized complexity theory.
Formally, given a fixed finite alphabet $\Sigma$, a language $L\subseteq \Sigma^*\times \mathbb{N}$ is called a \emph{parameterized problem}; given an instance $(x,\param)\in L$, we call {\param} the \emph{parameter}. Also, we denote the \emph{size} of an instance $(x,\param)$ by $|(x,\param)|$. Observe that each possible parameter defines a different parameterized problem; for example, when considering the \problemName{Clique} problem on graphs, it can be parameterized by the size of the desired clique, or by the maximum degree of the input graph. In both cases, the input consists of a graph $G$, an integer $c$, and the corresponding parameter, and the problem consists of deciding whether $G$ has a clique of size at least $c$, except that in the former the parameter is also~$c$, while in the latter the parameter is the maximum degree $\Delta(G)$. When the parameterized problem is a decision problem having as parameter the size of the solution, we say that the problem is parameterized by the \emph{natural parameter}. Table~\ref{table:second} is filled taking into account the natural parameter, whenever possible. We give more details about this when analyzing each of the 11 columns.

We say that a parameterized problem $L$ is \emph{fixed parameter tractable} (from now on denoted by \FPT) if there exists an algorithm ${\cal A}$ that solves $L$ on input $(x,\param)$ in time $f(\param)\cdot |(x,\param)|^{\BigOh{1}}$, where $f$ is a computable function. In this case, the algorithm ${\cal A}$ is said to be an {\FPT} algorithm for $L$, and we also use {\FPT} to denote the set of {\FPT} problems. Observe that {$\P\subseteq\FPT$}.

Intuitively, one could describe the {\FPT} class as the parallel, in the parameterized complexity theory, of the {\P} class in the traditional complexity theory. Another ``approachable'' class is that of the slice-wise polynomial. We say that a parameterized problem $L$ is \emph{slice-wise polynomial} (denoted by {\XP}) if there exists an algorithm that solves $L$ in time $f(\param)\cdot |(x,\param)|^{g(\param)}$, where $f$ and $g$ are two computable functions. Observe that, for each fixed value of $\param$, this is a polynomial algorithm.

Concerning a parallel of the $\NP$-complete class, there are two main hard classes in the parameterized complexity, the \paraNP-complete and the \W-hard. 
A parameterized problem $L$ is \emph{\paraNP-complete} if it is \NP-complete for some fixed value of the parameter $\param$. For instance, the \problemName{Vertex Coloring} problem is \paraNP-complete when parameterized by the number of colors. Note that, unless ${\sf P}= \NP$, a \paraNP-complete problem cannot be \XP, hence it cannot be {\FPT} either. 

Now, before defining our last parameterized complexity class, we need another definition. Given an instance $(x,\param)$ of a parameterized problem $L$, a \emph{parameterized reduction} from $L$ to another parameterized problem $L'$ is an algorithm that computes, in time $f(\param) \cdot |x|^{\BigOh{1}}$ for some computable function $f$, an equivalent instance $(x',\param')$ of $L'$ such that $\param'\le g(\param)$ for some computable function $g$. 
The class of $\W$-hard problems can be formally defined based on a hierarchy of nested classes called $\W[i]$, for each $i\in \mathbb{N}\setminus\{0\}$. 
% The class of $\W$-hard problems can be formally defined in a way to determine a hierarchy of nested classes called $\W[i]$-hard, for each $i\in \mathbb{N}-\{0\}$. 
However, for our purposes it suffices to define the $\W[1]$-hard and $\W[2]$-hard classes in terms of their ``base'' problems; think of it as defining the $\NP$-hard class in terms of $\SAT$. A parameterized problem $L$ is \emph{\W[1]-hard} if there is a parameterized reduction from \problemName{Clique}, parameterized by the size of the clique, to $L$; and it is \emph{W[2]-hard} if there is a parameterized reduction from \problemName{Dominating Set}, parameterized by the size of the dominating set, to $L$. 
%Figure~\ref{fig:complexity} depicts these complexity classes and how they relate to %each other, with an arrow from \textsc{Class A} to \textsc{Class B} meaning that %\textsc{Class A} contains \textsc{Class B}. 
%with an arrow going grom class $A$ to class $B$ if $A\subseteq B$. 
Observe that a parameterized version of an $\NP$-hard problem can be classified in any of these classes, unless of course $\P=\NP$, in which case all the classes collapse to $\P$.

Tree decompositions are an important tool in the parameterized complexity theory, as well as in the traditional computational complexity, since good algorithms can often be obtained for graphs with bounded treewidth, and also because even graphs with unbounded treewidth can sometimes be approached by applying the bidimensionality technique (see e.g.~\cite{C.etal.15}). Since the publication of [OG], in addition to treewidth, many other width parameters have been introduced (see~\cite{V.thesis} for a nice Hasse diagram containing 32 graph parameters; we also refer to the survey~\cite{G.17}). In particular, the clique-width~\cite{KLM.09} and the mim-width~\cite{V.thesis} parameters are of special interest to us, since they are bounded for some of our proposed classes, and because some of the proposed problems can be solved in polynomial time when these parameters are bounded. More specifically, \className{Partial $k$-trees} and \className{Cographs} have bounded clique-width~\cite{CO.00}, while the following have bounded mim-width: graphs with bounded clique-width~\cite{V.thesis}, \className{Permutation}~\cite{BV.13},  \className{Circular Arc}~\cite{BV.13}, \className{Directed Path}~\cite{BHMW.10} (this is because they are a subclass of leaf powers~\cite{JKST.19}). Regarding the proposed problems, the following can be solved in polynomial time on graphs with bounded mim-width, provided a branch decomposition of bounded mim-width is given: \problemName{Independent Set}, \problemName{Dominating Set}~\cite{BTV.13}, and \problemName{Steiner Tree}~\cite{BK.19}; while the following can always be solved in polynomial time on graphs with bounded clique-width, since a construction sequence of bounded clique-width can be found in polynomial (and even $\FPT$) time~\cite{OS.06,HO.08,O.08}: \problemName{Clique}~\cite{CMR.00}, \problemName{Chromatic Number}~\cite{KR.03}, \problemName{Hamiltonian Circuit}, \problemName{Maximum Cut}~\cite{W.94}, and \problemName{Clique Partition}~\cite{R.07}. 
Although the problem of finding in polynomial-time a branching decomposition of bounded mim-width is still open for general graphs with bounded mim-width, it has been proven to be polynomial-time solvable for some graph classes, including \className{Permutation} and \className{Circular Arc} graphs~\cite{BV.13}. 
%Even if the problem of finding a branching decomposition of bounded mim-width for graphs with bounded mim-width is still open, it has been proved to be polynomial for \className{Permutation} and \className{Circular-Arc} graphs~\cite{BV.13}. 
These observations help to solve an issue about the complexity of \problemName{Steiner Tree} restricted to \className{Circular Arc} and \className{Proper Circular Arc}, which we discuss later on. Also, because \problemName{Maximum Cut} is $\NP$-complete on \className{Interval} graphs~\cite{ABMR.20}, and this is a subclass of \className{Circular Arc}, it follows that the problem is $\NP$-complete on graphs with bounded mim-width, which is in contrast with the complexity of the problem restricted to bounded clique-width~\cite{W.94}.
%It is known that graphs with bounded treewidth also have bounded clique-width, and that graphs with bounded clique-width have bounded mim-width~(see e.g.~\cite{V.thesis}), which means that these classes are nested. 

%In addition, if $G$ has bounded clique-width, then a construction sequence of $G$ with bounded clique-width can be found in polynomial time~\cite{OS.06,HO.08,O.08} (even $\FPT$ time). However, the analogous property applied to mim-width is still an open problem. 

\medskip

We now discuss Tables~\ref{table:first} and~\ref{table:second}, dividing the discussion by the problems.

\paragraph{\problemName{Membership}} The entries $\P$ are inherited from Table~\ref{table:first}; hence the only class that can be further refined in the parameterized complexity is the class \className{Thickness-$k$}. It is known that deciding whether a given graph $G$ has thickness at most $k$ is $\NP$-complete even if $k=2$ (for $k=1$ it coincides with deciding planarity, which is polynomial)~[OG]. We therefore get that, considering $k$ as the parameter, the related parameterized problem is $\paraNP$-complete.
%A graph is said to be in \emph{thickness-$k$} if its edge set can be partitioned into $k$ parts, each forming a planar graph. 

We mention that the reference cited by [OG] for \className{Partial $k$-trees} was a technical report that now has a published version~\cite{ACP.87}.

\paragraph{\problemName{Independent Set}}
% Clearly this problem is in $\W[1]$-hard in general, since it is equivalent to \problemName{Clique}. The natural parameter here is of course the size of the desired independent set. 
% Observe that this problem is trivially in $\XP$, since it suffices to test all the $\BigOh{n^k}$ subsets of size $k$.
% As can be seen in Table~\ref{table:second}, this problem is {\FPT} for planar graphs and Degree-$k$ graphs~\cite{DF.99}. 

The natural parameter considered is the size of the desired independent set.
The problem is trivially in $\XP$ in general: by enumerating all vertex subsets of size $\param$, one can readily check in time $\BigOh{n^{\param}}$ whether a graph on $n$ vertices admits an independent set of size at least~$\param$. 
Nevertheless, \problemName{Independent Set} is unlikely to be \FPT, since it is known to be \W[1]-hard~\cite{DF.99}. 
In fact, by considering the complement graph, we obtain that \problemName{Independent Set} and \problemName{Clique}, for general graphs, are equivalent to each other from the parameterized complexity perspective.  %, and therefore both problems are $\W[1]$-hard in general. 
On the positive side, as can be seen in Table~\ref{table:second}, the problem is {\FPT} for \className{Planar}~\cite{Niedermeier2006}, \className{Genus-$k$}~\cite{Chen2007} and \className{Degree-$k$}~\cite{DF.99} graphs. 
More generally, the problem is also {\FPT} for \className{Thickness-$k$} graphs~\cite{Khot2000}: This follows from the fact that \problemName{Independent Set} is {\FPT} when restricted to graphs with bounded clique number, as first observed in~\cite{Khot2000} as a special case of a more general framework \emph{c.f.}~\cite{Raman2008,Fomin2019kernelization}. 
%This latter result follows from the fact that \problemName{Independent Set} is {\FPT} when restricted to graphs with bounded clique number, through a simple Ramsey argument, which was first observed in~\cite{Khot2000} as a special case of a more general framework \emph{c.f.}~\cite{Raman2008,Fomin2019kernelization}. 
Additionally, \problemName{Independent Set} is also {\FPT} for  \className{$K_{3,3}$-Free\textsuperscript{*}}.  
Indeed, it is well known that, if a graph $H$ of maximum degree at most $3$ is a minor of a graph $G$, then $H$ is a topological minor of $G$ as well \emph{c.f.}~\cite{Diestel2017,7331}. Thus, \className{$K_{3,3}$-Free\textsuperscript{*}} coincides with the class \className{$K_{3,3}$-Minor-Free\textsuperscript{*}} of graphs that do not contain $K_{3,3}$ as a minor. Therefore, since \problemName{Independent Set} was proven to be {\FPT} for \className{$K_{3,3}$-Minor-Free\textsuperscript{*}} graphs~\cite{Demaine2002,Demaine2004}, we obtain that the problem is also {\FPT} for \className{$K_{3,3}$-Free\textsuperscript{*}}. 
We remark that, although \problemName{Independent Set} is {\FPT} for \className{$K_{3,3}$-Free\textsuperscript{*}} graphs, it remains $\W[1]$-hard for the larger class \className{$K_{3,3}$-Free}. 
This latter result follows from the fact that \problemName{Independent Set} was proven to be $\W[1]$-hard for another subclass of \className{$K_{3,3}$-Free}, the  \className{$C_{4}$-free} graphs~\cite{Bonnet2019}, which consists of graphs that do not contain $C_4$ as an induced subgraph.%, which clearly is properly contained in \className{$K_{3,3}$-Free}.} 

%An important remark is the fact that, although \problemName{Independent Set} is {\FPT} for \className{$K_{3,3}$-Free\textsuperscript{*}} graphs~\cite{Demaine2002}, the problem remains $\W[1]$-hard for the larger class \className{$K_{3,3}$-Free}~\cite{Bonnet2019}. 

We observe that, for the entry \className{Partial $k$-Trees}, we cite the reference~\cite{APDAM89}, which is different from the reference~\cite{ANS.80} cited in [OG]. 
This is because, upon checking~\cite{ANS.80}, we were not able to find any mention to the \problemName{Independent Set} problem restricted to \className{Partial $k$-Trees}.

% We observe that we give a reference for the entry \className{Partial $k$-Trees}~\cite{APDAM89} that is different from the one cited in [OG]~\cite{ANS.80}. This is because, upon checking the latter, we did not find any mention to the \problemName{Independent Set} problem restricted to \className{Partial $k$-Trees}.

An interesting fact is that D.S.~Johnson mentions, in the caption of his summary table, that \problemName{Vertex Cover} was not included as a column because its complexity will always be the same as the complexity of \problemName{Independent Set}.
While this is true for traditional complexity theory, one could say that \problemName{Vertex Cover} is a canonical problem in $\FPT$ since it is {\FPT} in general~\cite{CKX.06} and many of the known techniques can be successfully applied to it, whereas \problemName{Independent Set} can be regarded as a canonical $\W[1]$-hard problem.

% As D.S. Johnson mentions in the caption of his summary table,
% \problemName{Vertex Cover} was not included as a column because its complexity will always be the same as the complexity of \problemName{Independent Set}.
% While this is true for traditional complexity theory, 
% the $\W[1]$-hardness of this problem for $K_{3,3}$-free graphs, together with the fact that \problemName{Vertex Cover} is {\FPT} in general~\cite{CKX.06}, show that the two problems may differ with respect to parameterized complexity.  In fact, interestingly enough, as \problemName{Independent Set} is a canonical problem in $\W[1]$-hard, one could say that \problemName{Vertex Cover} is a canonical problem in $\FPT$ since many of the known techniques can be successfully applied to it.

\paragraph{\problemName{Clique}} Analogously, the natural parameter considered is the size of the desired clique. All entries here are in $\P$, except for \className{Claw-Free} graphs, which is \FPT~\cite{CPPPW.11}.

\paragraph{\problemName{Partition into Cliques}} 
% This problem consists of deciding, given a graph $G$ and a positive integer $k$, whether $V(G)$ admits a partition into $k$ cliques; here, $k$ is the considered parameter. Clearly, this is equivalent to proper coloring the complement $\overline{G}$ of $G$ with $k$ colors. However, unlike \problemName{Chromatic Number}, there are not many $\NP$-complete results for fixed $k$. The $\paraNP$-completeness result for claw-free graphs actually follows from the fact that \problemName{Chromatic Number} is $\NP$-hard for~$k=3$ on triangle-free graphs~\cite{MP.96}; since the complement of triangle-free graphs do not have independent sets of size larger than~2, it follows that they are also claw-free. In addition to this result, we only found that the problem is in $\XP$ for circle graphs~\cite{keil2006}; all the remaining $\NP$-complete entries from Table~\ref{table:first} are not yet classified.

The problem has as input a graph $G$ and a positive integer $\param$, and it consists of deciding whether the vertex set of $G$ can be partitioned into $\param$ disjoint cliques.  
The natural parameter considered is the integer $\param$. 
One can straightforwardly verify that \problemName{Partition into Cliques} is polynomially equivalent to \problemName{Chromatic Number} by considering the complement graph $\overline{G}$ of the input graph $G$, \text{i.e.} the problem of deciding whether $\overline{G}$ can be proper colored with at most $\param$ colors. 
Nevertheless, it is worth mentioning that the respective particular cases of \problemName{Partition into Cliques} and \problemName{Chromatic Number} restricted to specific graph classes are not necessarily polynomially equivalent to each other, as usually these graph classes are not closed under taking the complement.  
As an example, observe that in Table~\ref{table:first}, \problemName{Partition into Cliques} is in $\P$ for \className{Circular Arc} graphs, while \problemName{Chromatic Number} remains $\NP$-complete for \className{Circular Arc} graphs.

The problem is trivially $\FPT$ for graphs that only contain cliques whose size can be upper-bounded by a computable function $f$ of $\param$. 
Indeed, if the size of the maximum clique of $G$ is at most $f(\param)$, then either $G$ contains at most $f(\param)\cdot \param$ vertices, or the vertex set of the input graph cannot be partitioned into at most $\param$ disjoint cliques, and thus we are dealing with a \no-instance. 
Based on that, we immediately obtain that \problemName{Partition into Cliques} is $\FPT$ for \className{Planar} and \className{$K_{3,3}$-Free\textsuperscript{*}} graphs (observe that the latter graphs cannot have cliques of size bigger than~5).
Additionally, one can verify that \className{Thickness-$k$}, \className{Genus-$k$} and \className{Degree-$k$} graphs also have cliques whose size depend only on $k$, and since $k$ is constant by definition, we get that these graphs have maximum clique bounded by a constant value. 
As a result, we obtain that \problemName{Partition into Cliques} is also {\FPT} for \className{Thickness-$k$}, \className{Genus-$k$} and \className{Degree-$k$} graphs. 

On the other hand, \className{Circle} graphs may have cliques of unbounded size. 
J.~Keil and L.~Stewart proved that \problemName{Partition into Cliques} is {\XP} for \className{Circle} graphs~\cite{keil2006}, however it remains unknown whether the problem is $\FPT$ in this class. 
The $\paraNP$-completeness of \problemName{Partition into Cliques} for \className{Claw-Free} graphs follows from the fact that \problemName{Chromatic Number} is $\NP$-complete, even when~$\param=3$, for \className{Triangle-Free} graphs (\emph{i.e.,} graphs that do not contain $K_{3}$ as an induced subgraph)~\cite{MP.96} --- since the complement graphs of \className{Triangle-Free} graphs do not have independent sets of size larger than~2, such complement graphs are \className{Claw-free}.  %which implies that such complement graphs are \className{Claw-free}. 

In what follows, we discuss some discrepancies between our Table~\ref{table:first} and the table presented in [OG], with respect to the \problemName{Partition into Cliques} entries.  
First, [OG] cites [GJ] for the \className{Degree-$k$} entry. However it is not immediate how the results presented in [GJ] (or in the references cited by [GJ]) lead to the \NP-completeness of \problemName{Partition into Cliques} for \className{Degree-$k$}. 
In fact, [GJ] proves that \problemName{Partition into Cliques} remains $\NP$-complete for graphs that contain no cliques of size larger than~$4$; nevertheless the graph constructed in their reduction does not have bounded maximum degree. 
For this reason, we cite~\cite{CERIOLI20082270} instead, which gives an explicit $\NP$-completeness proof for \problemName{Partition into Cliques} restricted to \className{Cubic} graphs. 
As for \className{Planar} graphs, [OG] cites~\cite{BJLSPS.90}, which actually proves that the following problem is $\NP$-hard: given a planar graph $G$, and a fixed connected outerplanar graph $H$ with at least three vertices, maximizes the number of vertices of $G$ that can be covered with copies of $H$.
This does not immediately imply that \problemName{Partition into Cliques} is \NP-complete for \className{Planar} graphs since they limit the number of vertices in each clique to~$3$ (by considering $H = K_{3}$) when a planar graph could have a partition into cliques with cliques also of size~$4$. We then cite the more explicit construction given in~\cite{Kral2001}. 
Note that this also impacts the entries \className{$K_{3,3}$-Free\textsuperscript{*}}, \className{Thickness-$k$}, and \className{Genus-$k$}, as these contain \className{Planar} graphs. 
Finally, we remark that the references cited by [OG] for \className{Line} graphs and \className{Claw-Free} graphs are actually private communications. 
Therefore, we cite~\cite{MUNARO20172208} for \className{Line} graphs, and~\cite{MP.96} for \className{Claw-Free} graphs, instead of [OG].
We remark that although the \NP-completeness of \problemName{Partition into Cliques} for such graph classes directly follows from~\cite{MUNARO20172208}, we have decided to cite the oldest known reference for each result.

A problem closely related to \problemName{Partition into Cliques} is the so-called \problemName{Clique Edge-Partition}, which is defined as follows: given a graph $G$ and a positive integer $k$, decide whether the edge set of $G$ can be partitioned into at most $k$ subsets such that each subset induces a complete subgraph of $G$. 
However, it is worth mentioning that, since the line graph of a complete graph is not necessarily a complete graph, this problem is not the same as deciding whether the vertex set of the line graph $L(G)$ of a graph $G$ can be partitioned into at most $k$ disjoint cliques. 
As a matter of fact, while \problemName{Partition into Cliques} is \paraNP-complete~\cite{MP.96}, \problemName{Clique Edge-Partition} is $\FPT$ in general~\cite{MR.08}. 

% There are also some discrepancies between our Table~\ref{table:first} and the table presented in [OG] that we discuss in what follows. First, [OG] cites [GJ] for the Degree-$k$ entry; however, even if it is probably possible to bound the degree of the constructed graphs, this is not done there. This is why we cite~\cite{CERIOLI20082270}, which gives an explicit cubic (and planar) construction. As for planar graphs, [OG] cites~\cite{BJLSPS.90}, which actually proves that the following problem is $\NP$-hard: given a planar graph $G$, maximize the number of parts needed to partition $G$ either into $K_3$'s or into $K_4$'s. This could imply that clique partition is $\NP$-complete on planar graphs, but since it is not immediate, we prefer to cite the more explicit (already mentioned) construction given in~\cite{CERIOLI20082270}. Note that this has impact also on entries $K_{3,3}$-free, Thickness-$k$, and Genus-$k$. Finally, the entries for line graphs and claw-free graphs are private communications; therefore, we cite~\cite{MUNARO20172208} for line graphs, and the previously cited~\cite{MP.96} for claw-free graphs.\\

% We mention that another related problem is that of \problemName{Clique Edge-Partition}: given a graph $G$ and an integer $k$, deciding whether $E(G)$ can be partitioned into at most $k$ subsets, each forming a clique of $G$~\cite{MR.08}. They prove that this problem is in $\FPT$, giving also a kernel of polynomial size. Note however that this is not the same as partitioning $L(G)$ into cliques, since the line graph of a complete graph is not a complete graph.

\paragraph{\problemName{Chromatic Number}} The problem has as input a graph $G$ and a positive integer $\param$, with $\param$ being the considered parameter, and it consists of deciding whether the chromatic number of $G$ is at most $\param$. 
Also one of Karp's 21 \NP-complete problems~\cite{K.72}, \problemName{Chromatic Number} is $\NP$-complete for fixed values of $\param$ for very restricted graph classes. 
%Also one of Karp's 21 \NP-complete problems~\cite{K.72}, the coloring problem is $\NP$-complete for fixed values of $\param$ in very strict cases. 
Since it is $\NP$-complete to decide whether a planar graph with maximum degree~$4$ is 3-colorable~\cite{GJS.76}, we get that \problemName{Chromatic Number} is $\paraNP$-complete for \className{Planar}, \className{$K_{3,3}$-Free\textsuperscript{*}}, \className{Genus-$k$}, \className{Thickness-$k$} and \className{Degree-$k$} graphs. 
Additionally, since it is $\NP$-complete to decide whether a circle graph is $4$-colorable~\cite{Walter88}, we obtain that \problemName{Chromatic Number} is \paraNP-complete for \className{Circle} graphs. 
Finally, it is also $\NP$-complete to decide whether the line graph of a 3-regular graph is 3-colorable~\cite{H.81}, implying the  $\paraNP$-completeness for \className{Line} and \className{Claw-Free} graphs. 
%\ana{na tabela esta [GJ] pra Degree-$k$ e [OG] para as outras duas. Consertar la ou aqui}
%\alexssander{[GJ] diz que o problema permace NP-completo para grafos planares com grau máximo $4$.} 
On the positive side, we mention that, even though the parameterized complexity theory had not yet been defined by the time of publication of~\cite{G.etal.80}, the algorithm presented in~\cite{G.etal.80} for \className{Circular Arc} is actually an $\FPT$ algorithm.
Therefore, \problemName{Chromatic Number} parameterized by $\param$, the number of colors, is {\FPT} for \className{Circular Arc} graphs. 
% On the positive side, \problemName{Chromatic Number} was proven to be \FPT for \className{Circular Arc} graphs~\cite{G.etal.80}.

As for the differences between our tables and [OG]'s Table, they cite~\cite{ANS.80} for the entry \className{Partial $k$-Trees}. However this reference does not mention the \problemName{Chromatic Number} problem, and this is why we cite~\cite{APDAM89}. As for \className{Bandwidth-$k$} graphs, they cite~\cite{MS.85}, which treats only the case $k=3$. The same happens for the entry \className{$k$-Outerplanar}~\cite{B.94}. 
Also, we could not find the reference cited by [OG] for \className{Series-Parallel} graphs~\cite{S.83}, which is the same as the one cited for \className{Halin} graphs. Nevertheless, the polynomial results for all these classes follow from the fact they all have bounded treewidth; therefore we cite~\cite{APDAM89}.

%(\className{Series-Parallel} graphs have treewidth~2, while \className{Halin} graphs have treewidth~3)~\cite{B.98}. 

\paragraph{\problemName{Chromatic Index}} This is by far the hardest problem of the table, with the largest number of open entries, remaining open even for classes considered ``easy'' as for instance \className{Cographs}, a graph class for which all problems besides \problemName{Chromatic Index} have been classified as $\P$~\cite{KR.03}. The problem has as input a graph $G$, and a positive integer $\param$, with $\param$ being the considered parameter, and one wants to decide whether the chromatic index of $G$, denoted by $\chi'(G)$, is at most~$\param$. Observe that, since Vizing's Theorem tells us that $\chi'(G)\in\{\Delta(G), \Delta(G)+1\}$, we can then consider $\param$ as being equal to the maximum degree of $G$ as otherwise the answer is trivial. As already mentioned, deciding whether $\chi'(G)= 3$ is $\NP$-complete even for \className{Cubic} graphs~\cite{H.81}, which implies that \problemName{Chromatic Index} is $\paraNP$-complete for \className{Degree-$k$} graphs, and that it is also $\paraNP$-complete for \className{Thickness-$k$} graphs. 
In addition, L. Cai and J. A. Ellis proved that deciding whether $\chi'(G)=3$ is \NP-complete even for \className{Comparability} graphs and for \className{Line} graphs~\cite{cai1991}.
Therefore, \problemName{Chromatic Index} remains $\paraNP$-complete when restricted to \className{Comparability}, \className{Perfect}, \className{Line} and \className{Claw-Free} graphs. 
% Finally, we mention that, even though the parameterized complexity was not yet defined by the time of publication of~\cite{G.etal.80}, the algorithm  given there for \className{Circular Arc} graphs is $\FPT$ when parameterized by the number of colors.

The reference cited by [OG] for \className{Series-Parallel} graphs and \className{Halin} graphs is the same as the one cited for these graphs on column \problemName{Chromatic Number}~\cite{S.83}. Again, even though we could not find the reference, the results follow from the fact that these graphs have bounded treewidth~\cite{bodlaender1990}.

\paragraph{\problemName{Hamiltonian Circuit}} Here, the size of any solution is the size of the input graph; therefore, we consider the problem of deciding whether a given graph $G$ has an hamiltonian cycle parameterized by $n=|V(G)|$. It is known that \problemName{Longest Cycle} parameterized by the size {\param} of the cycle is $\FPT$~\cite{Monien1985}. If follows that \problemName{Hamiltonian Circuit} is $\FPT$ when parameterized by $n$.

The reference cited by [OG] for \className{Series-Parallel} graphs~\cite{S.83} is the same as the one cited for these graphs on \problemName{Chromatic Number} (see comments above); thus, we cite~\cite{APDAM89} instead. Also, we were not able to access the reference for \className{Split} and \className{Chordal} graphs~\cite{CS.85}, therefore we cite reference~\cite{muller1996}. 
For \className{Circle} graphs, we find a more serious discrepancy between our Table~\ref{table:first} and the table presented in [OG]. They cite~\cite{bertossi1986} as providing a polynomial algorithm for \className{Circle} graphs. However, the paper only provides a polynomial algorithm for \className{Interval} graphs, a subclass of \className{Circle} graphs. And actually, the problem has been shown to be $\NP$-complete for \className{Circle} graphs in~\cite{D.89}.

\paragraph{\problemName{Dominating Set}} Given a graph $G$ and a positive integer $\param$, the problem consists of deciding whether $G$ has a dominating set of size at most $\param$ (a set $D\subseteq V(G)$ is \emph{dominating} if, for every vertex $v \in V(G)$, $v \in D$ or $v$ has a neighbor that belongs to $D$). The natural parameter is of course $\param$, and as previously mentioned, this is the canonical problem in the class $\W[2]$-hard. Because of this, \problemName{Dominating Set} is among the most investigated problems in the parameterized complexity theory, deserving a survey of its own. Here, we make a short compilation of the results that concern the classes of interest. 
The problem is $\FPT$ for: \className{Planar}  graphs~\cite{DF.95}, and therefore also for \className{Grid} graphs; for \className{$K_{3,3}$-Free\textsuperscript{*}} graphs~\cite{PRS.09}; for \className{Genus-$k$} graphs~\cite{EFF.04}; for \className{$k$-Degenerate} graphs~\cite{AG.09}, and therefore also for \className{Degree-$k$} graphs; and for \className{Claw-Free} graphs~\cite{CPPPW.11}, and therefore also for \className{Line} graphs. 
In addition, it is $\W[2]$-hard for \className{Split} and \className{Bipartite} graphs~\cite{Raman2008}, and therefore also for \className{Chordal},  \className{Perfect}, and \className{Comparability} graphs. Finally, it is $\W[1]$-hard for \className{Circle} graphs~\cite{BGMPST.14}. 
Observe that this problem is trivially in $\XP$, since it suffices to test all the $\BigOh{n^\param}$ subsets of size $\param$. However, this is not completely refined, and the entries for \className{Thickness-$k$} and \className{Undirected Path} graphs can be regarded as the only open cases in this column. %\ana{mudei aqui pra refletir a tabela}

Regarding the differences between our Table~\ref{table:first} and the table in~[OG], they cite~\cite{G.84} for \className{Almost Trees ($k$)}, but the paper does not seem to attack this class. Something similar happens with entry~\className{Bandwidth-$k$}, where they cite~\cite{MS.85} but the paper attack domination-related problems, but not \problemName{Dominating Set} itself. 
Nevertheless, we now know that the problem is indeed polynomial for these classes since they have bounded treewidth~\cite{APDAM89}.
Also, the entry for \className{Grids} is cited as a private communication in [OG]; this is why we provide reference~\cite{Clark1990}. As for the \className{Bipartite} and \className{Comparability} entries, we were not able to find reference~\cite{D.81}, cited by [OG], this is why we also provide~\cite{MB.87}.

\paragraph{\problemName{Maximum Cut}} This is another problem that is $\FPT$ in general. Given a graph $G$, and an integer $\param$, we consider the problem of deciding whether there exists $S\subseteq E(G)$ that separates $G$ and has size at least~$\param$, parameterized by $\param$. The best known algorithm so far runs in time $\BigOh{m+n+\param\cdot 4^\param}$~\cite{MR.99}, where $m$ denotes the number of edges and $n$ denotes the number of vertices of the input graph $G$.
%$m = |E(G)|$ and $n=|V(G)|$. 

We were not able to find the reference cited by [OG] for \className{Thickness-$k$} graphs~\cite{B.80}, and therefore we provide the same reference given by [GJ] for \className{Degree-$k$} graphs~\cite{Y.78}.

\paragraph{\problemName{Steiner Tree}}
Given a graph $G$, a subset $X\subseteq V(G)$, called \emph{terminal} set, and a positive integer $t$, the problem consists of deciding whether there exists a subset $S\subseteq V(G)\setminus X$ such that $|S\cup X|\le t$ and $G[S\cup X]$ is connected --- and hence $G[S\cup X]$ contains a tree subgraph $\mathcal{T}$ with $X\subseteq V(\mathcal{T})$, called a \emph{Steiner tree} of $G$ for $X$. 
The vertices in $S$ are commonly called \emph{Steiner vertices}. 
\problemName{Steiner Tree} parameterized by the number of terminal vertices $|X|$ is well-known to be $\FPT$~\cite{DW.71}, with the current best algorithm running in time $\BigOh{2^{|X|}\cdot n^{\BigOh{1}}}$~\cite{BHKK.07}, where $n$ denotes the number of vertices of the input graph $G$. 
This clearly implies that \problemName{Steiner Tree} parameterized by the natural parameter, \emph{i.e.} by the size of the sought solution $|S\cup X|\leq t$, is also $\FPT$. 
% \ana{Alexsander, talvez nem precisa acrescentar aquela discussao aqui. Parece que eh algo conhecido na area, certo?}
Interestingly enough, the problem is \W[2]-hard when parameterized by the number of Steiner vertices $|S|$ as shown in~\cite{MRR.08}. 
We denote by $\param$ the maximum number of Steiner vertices allowed in a given instance of the problem, and then we write \problemName{$\param$-Steiner Tree} to denote \problemName{Steiner Tree} parameterized by $\param$. 
Since the other parameterized versions of the problem are already known to be $\FPT$ for general graphs, this latter is the version considered in Table~\ref{table:second}.

The \problemName{$\param$-Steiner Tree} problem is $\FPT$ for \className{Genus-$k$} graphs~\cite{PPSL.18}, and therefore for \className{Planar} and \className{Grid} graphs; and for \className{$k$-Degenerate} graphs~\cite{JLRSS.17}, and therefore for \className{Degree-$k$} graphs. 
%Also, the proof given in~\cite{LS.11} for 
Also, the proof given in~\cite{Raman2008} for $\W[2]$-hardness of \problemName{Dominating Set} for \className{Split} graphs actually holds for \problemName{Connected Dominating Set}. 
%In~\cite{Raman2008}, 
Moreover, in~\cite{WFP.85}, the authors give a parameterized reduction from \problemName{Connected Dominating Set} to \problemName{$\param$-Steiner Tree} that works for any subclass of \className{Chordal} graphs, without changing the input graph. 
Therefore, based on the results presented in~\cite{Raman2008}, we obtain that \problemName{$\param$-Steiner Tree} is $\W[2]$-hard for \className{Split}, \className{Chordal} and \className{Perfect} graphs. 
In the next section, we give in Proposition~\ref{prop:SteinerBipartite} a simple reduction to prove that the problem is $\W[2]$-hardness for \className{Bipartite} graphs (and therefore also for \className{Comparability} graphs). Finally, observe that a simple $\XP$ algorithm can be obtained by simply testing all  $\BigOh{n^{\param}}$ possible vertex subsets $S\subseteq V(G)\setminus X$ of size at most $\param$.

Regarding the differences between our tables and [OG]'s Table, the reference~\cite{WC.82} cited in [OG] for \className{Outerplanar} graphs could not be found, but we mention that the reference cited in [OG] for \className{Series-Parallel}~\cite{WC.83} is indeed correct, and that it can be used for \className{Outerplanar} as well. 
Also, [OG] cites a private communication with Sch\"affer for the entries \className{Circle}, \className{Line}, and \className{Claw-Free} graphs, and cites~\cite{WFP.85} for \className{Circular Arc} graphs and \className{Proper Circular Arc}. We were not able to find any mention to \className{Circular Arc} graphs in~\cite{WFP.85}. 
Also, in his book~\cite{S.book}, Spinrad writes:
\begin{quote}
``The status of \problemName{Steiner Tree} is slightly unclear; Sch\"affer sketched a proof that this is polynomially solvable (for both \className{Circle} and \className{Circular Arc} graphs), and thus it appeared as polynomial in the table of `[OG]', though no algorithm solving the problem appears in the general literature''.
\end{quote}

Nevertheless, because \className{Circular Arc} graphs have bounded mim-width and a branch decomposition with bounded mim-width of these graphs can be computed in polynomial time~\cite{BV.13}, it follows from~\cite{BK.19} that \problemName{Steiner Tree} can be solved in polynomial time for \className{Circular Arc}, and consequently also for \className{Proper Circular Arc}. 
As for entries \className{Line} and \className{Claw-Free}, they have been recently filled~\cite{BBJPPL.arxiv}, while the situation remains the same for \className{Circle} graphs. %, where indeed $\NP$-completeness is proven, but unfortunately, \textcolor{blue}{the situation remains the same for \className{Circular Arc}, \className{Proper Circular Arc}}, and \className{Circle} graphs. %The gap for \className{Circular Arc} graphs has been formally filled recently in~\cite{BJPR.19}, where they indeed provide a polynomial algorithm for these classes. 
We add that in~\cite{K.93}, J. Keil proves that \problemName{Connected Dominating Set} is $\NP$-complete for \className{Circle} graphs. 
Thus if a proof of polynomiality of \problemName{Steiner tree} indeed exists for \className{Circle} graphs, this will be a nice example of class that separates \problemName{Connected Dominating Set} from  \problemName{Steiner tree}. 
%As for the 

%\alexsander{Verify whether there is a text book of parameterized complexity that describes the following reduction, or a similar one. 
%If not, move the following text to the proper place.}

\paragraph{\problemName{Graph Isomorphism}} This is problem Open1 from [GJ], and perhaps the most controversial problem in Graph Theory, being regarded as the only naturally defined problem with a high chance to be an $\NP$-intermediate problem, thus having deserved a classification of its own. Given two graphs $G$ and $H$, it consists of deciding whether $G$ and $H$ are isomorphic, i.e., whether there exists a bijection of the vertex sets that preserves adjacencies. A problem is $\GI$-complete if it is equivalent in complexity to general \problemName{Graph Isomorphism}.
As it happened with \problemName{Hamiltonian Circuit}, the natural parameter here is the size of the input graph. This problem is {\FPT} in general, with the best known algorithm running in time $O^*(2^{\sqrt{n}\log n})$~\cite{BL.83}. 

There are again some discrepancies between our Table~\ref{table:first} and the table presented in [OG]. In~[OG], they cite [GJ] as a reference for the entries \className{Perfect} and \className{Chordal} graphs to be $\GI$-complete; however, \className{Chordal} graphs are cited as an open case in [GJ]. Nevertheless, these classes are indeed $\GI$-complete as proven in~\cite{LB.79} for \className{Chordal} graphs; this construction was noticed to work also for \className{Split} graphs~\cite{S.78}. Something similar happens in entries \className{Bipartite} and \className{Undirected Path} graphs, with them being cited as open cases in [GJ], instead of $\GI$-complete, as cited in [OG]. Nevertheless, these are indeed $\GI$-complete as proven in~\cite{booth1979b}. This also impacts the \className{Comparability} graphs entry. %Finally, 
%there are four new entries, all of them polynomial: \className{Directed Path}~\cite{Babel1996}, \className{Circular Arc}~\cite{Krawczyk2019}, \className{Circle}~\cite{Kalisz2019}, and \className{Proper Circular Arc}~\cite{LSS.08}.
%\alexsander{Move the following text, or some adaptation of it, to the proper place.}
Finally, \problemName{Graph Isomorphism} is also \GI-complete for \className{Thickness-$k$} \emph{c.f.}~\cite{isgci}. 
Since~\cite{isgci} cites an unpublished paper due to De Biasi, we provide a proof in Proposition~\ref{prop:isoThickness}, for the sake of completeness.

\ifnotes{
\begin{enumerate}
    \item Membership: Thickness-$k$ is NP-complete even for $k=2$.
    \item Independent set:
    \begin{itemize}
        \item It is trivially in FPT (by using bounded branching tree algorithm) when the input graph has at least one vertex with bounded degree. The older reference explicitly mentioning these cases is the Downey and Fellows' book~\cite{DF.99}
        \item  It is W[1]-hard even when restricted to $C_4$-free graphs, which is a subclass of \className{$K_{3,3}$-Free\textsuperscript{*}} graphs~\cite{Bonnet2019}. 
        Despite this result, I do not think it is worthy including Vertex Cover in the table. 
    \end{itemize}
    \item Maximum Clique
    \item Partition into cliques:
        \begin{itemize}
        \item The reference cited by OG for planar graphs actually considers the problem of maximizing the number of parts needed to partition $G$ into parts that induce a fixed graph $H$. This is called generalized matching. The proof considers fixed size of $H$, being hard for every $H$ with at least~3 vertices. This does not mean that it is $\paraNP$-complete since the number of parts is not fixed there. In~\cite{CFFMPR.08}, they prove that the problem (our understanding of it) is $\NP$-complete on planar cubic graphs. Again, because $G$ has bounded degree, the number of parts will be $\BigOh{n}$.
        \item In~\cite{MR.08}, they propose an FPT algorithm for the clique edge partition. They also cite some previous FPT results for the what they called clique cover, which is also  a problem defined on the edges instead of vertices. However, this is not the same as the clique partition on line graphs, right?
        \item The reference for line graphs on GJ is a private communication.
        \item The \paraNP-completeness for claw-free graphs follows from the NP-completeness of 3-COL for triangle-free graphs~\cite{MP.96}.
        \item For circle graphs, the problem was shown to be in XP~\cite{keil2006}, but does it admit an FPT-time algorithm?
        \end{itemize}
    \item Chromatic number:
    \begin{itemize}
        \item $\NP$-complete to decide whether a planar graph is 3-colorable implies $\paraNP$-complete for \className{$K_{3,3}$-Free\textsuperscript{*}}, \className{Genus-$k$} and \className{Thickness-$k$}.
        \item $\NP$-complete to decide whether the line graph of a 3-regular graph is 3-edge-colorable implies $\paraNP$-complete on degree-$k$ (the line graph of a 3-regular is 4-regular);
        \item $\NP$-complete to decide whether a 3-regular graph is 3-edge-colorable implies $\paraNP$-complete on degree-$k$ (the line graph of a 3-regular is 4-regular), line graphs and claw-free graphs;
    \end{itemize}
    \item Chromatic index: 
        \begin{enumerate}
            \item Thickness-$k$: If a graph $G$ has chromatic index $k$, then it has thickness $k$. Indeed, the subgraph of $G$ induced by each color class is clearly planar.
            The graph constructed in the NP-completeness proof of Chromatic index on thickness graph (Holyer, 1981) is cubic. 
        \end{enumerate}
    \item Dominating Set:
    \begin{enumerate}
        \item Degree-$k$ follows from a result on $k$-degenerate graphs;
        \item Result on Planar graphs implies the result on Grids;
        \item Result on genus-$k$ implies the result on Thickness-$k$;
        \item Result on split graphs implies the result on chordal and perfect graphs;
        \item Result on bipartite implies the result on comparability;
        \item Result on claw-free graphs implies the result on Line graphs;
    \end{enumerate}
    \item Steiner Tree:
    \begin{enumerate}
        \item $W[2]$-hard in general~\cite{MRR.08};
        \item FPT parameterized by the number of terminals. First: \cite{DW.71}. Current best: \cite{BHKK.07}.
        \item Results on split, chordal and perfect graphs are implied by the fact that the proof for Dominating Set is actually Connected Dom. Set.
        \item FPT in planar and bounded genus: \cite{PPSL.18}.
        \item FPT in $d$-degenerate graphs: \cite{JLRSS.17}. (according to author of~\cite{S.17}).
    \end{enumerate}
    
    \item Hamiltonian cycle: Best known algorithm runs in time $O^*(1.657^n)$~\cite{B.14}.
    \item Maximum Cut: Best known algorithm runs in time $\BigOh{m+n+k\cdot 4^k}$~\cite{MR.99}.
    \item Graph Isomorphism: Best known algorithm runs in time $O^*(2^{\sqrt{n}\log n})$~\cite{BL.83}.
    \item Alon and Gutner's algorithm for dominating set in degenerate graphs  improved in~\cite{JLRSS.17}.
\end{enumerate}

\alexsander{I have updated the reference for Dominating Set on Planar and Grid graphs: there is a paper due to Downey and Fellows of 1995~\cite{DF.95}, where they prove that Dominating Set is FPT when restricted to Planar graph.}
}\fi

\section{Some Simple Reductions}
\label{sec:simplered}

%\ana{Eu acho que fica melhor em separado assim. Se preferirem colocar na Seção 2 mesmo, avisem que eu providencio.}

For the sake of completeness, here we present two simple proofs. First, we prove that 
\problemName{$\kappa$-Steiner Tree} is \W[2]-hard when restricted to \className{Bipartite} graphs. 
Indeed, this result follows from a standard parameterized reduction from \problemName{Dominating Set}, described in Proposition~\ref{prop:SteinerBipartite}. 
We remark that Raman and Saurabh  present a similar reduction to prove that \problemName{Dominating Set} is \W[2]-hard for \className{Bipartite} graphs~\cite{Raman2008}. %\ana{Alexsander, era essa prova que vc me falou que alguns autores passam de exercicio? se for, acho que vale a pena comentar aqui tb}

\begin{proposition}\label{prop:SteinerBipartite}
    \problemName{$\param$-Steiner Tree} remains $\W[2]$-hard for \className{Bipartite} graphs. 
\end{proposition}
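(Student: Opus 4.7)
The plan is to give a parameterized reduction from \problemName{Dominating Set} parameterized by solution size, which is the canonical $\W[2]$-hard problem. Given an instance $(G,k)$ of \problemName{Dominating Set} with $V(G)=\{v_1,\dots,v_n\}$, I would construct a bipartite graph $G'$ with bipartition $(A,B)$ as follows. On one side, put a \emph{selector} copy $A=\{u_1,\dots,u_n\}$ of $V(G)$; on the other side, put a \emph{witness} copy $W=\{w_1,\dots,w_n\}$ together with one extra \emph{root} vertex $r$, and let $B=W\cup\{r\}$. The edges of $G'$ are: $u_iw_j\in E(G')$ iff $v_j\in N_G[v_i]$ (so a selector $u_i$ is adjacent precisely to the witnesses of vertices dominated by $v_i$), plus $ru_i$ for every $i\in\{1,\dots,n\}$. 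By construction $G'$ is bipartite. I then set the terminal set to $X=B=W\cup\{r\}$, the budget to $t=|X|+k$, and the parameter to $\kappa'=k$.

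The equivalence should follow directly from the design: any dominating set $D\subseteq V(G)$ of size at most $k$ yields the Steiner vertex set $S_D=\{u_i:v_i\in D\}\subseteq A$, and $G'[X\cup S_D]$ is connected because every $w_j$ has at least one neighbor in $S_D$ (some $v_i\in D$ with $v_j\in N_G[v_i]$) and every $u_i\in S_D$ is adjacent to $r$; moreover $|S_D|\le k$. Conversely, suppose $S$ is a set of Steiner vertices with $|S|\le k$ such that $G'[X\cup S]$ is connected. Since $B\subseteq X$, we may assume $S\subseteq A$ (any $w_j$ used as a Steiner vertex is already a terminal, and so is $r$). For every $w_j$, connectivity of $G'[X\cup S]$ to $r$ forces some neighbor of $w_j$ to lie in $S$, because the only neighbors of $w_j$ in $G'$ belong to $A$ and $w_j$ has no edge to any other terminal. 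Hence $D=\{v_i:u_i\in S\}$ is a dominating set of $G$ of size at most $k$.

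Finally, I would verify the two formal requirements of a parameterized reduction: the construction runs in time polynomial in $|V(G)|$, and the new parameter satisfies $\kappa'=k$, i.e.\ it is bounded by a computable function of the original parameter. This establishes that \problemName{$\kappa$-Steiner Tree} is $\W[2]$-hard on \className{Bipartite} graphs.

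There is no real obstacle here: the main thing to get right is bipartiteness (which is immediate since edges go only between $A$ and $B$) and the fact that Steiner vertices can be restricted to $A$ (forced by $B\subseteq X$). The reduction is essentially the same ``dominator-witness-root'' gadget used by Raman and Saurabh for \problemName{Dominating Set} on \className{Bipartite} graphs, reinterpreted so that domination of each $v_j$ becomes the requirement that the terminal $w_j$ be attached to the connected subgraph via some Steiner vertex in $A$.
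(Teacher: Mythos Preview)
Your reduction is correct and is essentially identical to the paper's own proof: the paper builds the same bipartite graph with parts $V(G)$ (your $A$) and $\{r\}\cup\{v':v\in V(G)\}$ (your $B=W\cup\{r\}$), the same closed-neighborhood incidence edges plus the root adjacent to all of $A$, and the same terminal set $X=B$ with unchanged parameter. Both directions of the equivalence are argued the same way.
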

\begin{proof}
Let $I=(G,\param)$ be an instance of \problemName{Dominating Set}.
%Assume without loss of generality that $G$ does not contain universal vertices. 
We let $I' = (G', X, \param)$ be the instance of \problemName{$\param$-Steiner Tree} such that $G'$ is defined as follows: 
\begin{itemize}
    \item $V(G') = \set{r} \cup \set{v' \setst v \in V(G)} \cup V(G)$, where $r$ denotes a new vertex, and we add a new vertex $v'$ for each $v\in V(G)$;
    \item $E(G') = \set{rv \setst v \in V(G)} \cup \set{v'u \setst u \in N_{G}[v],\, u,v \in V(G)}$; and 
\end{itemize}
the terminal set is defined as $X=\set{r}\cup\set{v' \setst v \in V(G)}$. 
Note that $X$ and $V(G)$ are independent sets of $G'$. 
Thus, $G'$ is a bipartite graph. 

Suppose that $G$ admits a dominating set $D \subseteq V(G)$ of size at most $\param$. 
It is not hard to check that $D \cup X$ induces a connected subgraph of $G'$.
%It is not hard to check that $\set{v' \setst v \in D}$ is a set of size $\abs{D}\leq \param$ such that $\set{v' \setst v \in D} \cup X$ induces a connected subgraph of $G'$. 
Therefore, $I'$ is a \yes-instance of \problemName{$\param$-Steiner Tree}. 
%\ana{eu acho que D era pra ser o mesmo nos dois grafos}

Conversely, suppose that $G'$ admits a Steiner tree $T$ for $X$ such that $\abs{V(T)\setminus X} \leq \param$. %with at most $\param$ Steiner vertices, \emph{i.e.} 
Note that neighbors of $X$ in $G'$ belong to $V(G)$.
%the only neighbors in $G'$ of the vertices belonging to $X$ are the vertices belonging to $V(G)$. 
Thus, $V(T) \cap V(G)$ is a dominating set of $G$, otherwise either $T$ would not be connected, or there would exist some terminal vertex belonging to $X\setminus\set{r}$ that is not in $T$. 
Moreover, since $\abs{V(T)\setminus X} \leq \param$, we obtain that $\abs{V(T) \cap V(G)} \leq \param$. 
Therefore, $V(T)\cap V(G)$ is a dominating set of $G$ of size at most $\param$, and $I$ is a \yes-instance of \problemName{Dominating Set}. 
% Additionally, it follows from the assumption that $G$ does not contain universal vertices that $\abs{V(T)\cap V(G)} \geq 2$. 
% Consequently, $r \in V(T)$, otherwise $T$ would not be connected. 
% This implies that $\abs{V(T)\cap V(G)} \leq \param$. 
% Indeed, $V(T)\cap V(G)$ dominates in $G$ all vertices 
% since $G$ does not 
% otherwise $T$ would not be connected, since $X$ and $V(G)$ are indepedent sets of $G'$. 
% Therefore, \problemName{Steiner Tree} remains $\W[2]$-hard when restricted to bipartite graphs. 
% Therefore, $I$ is a \yes-instance of \problemName{Dominating Set} if and only if $I'$ is a \yes-instance of \problemName{Steiner Tree}. 
\end{proof}

Now, we present a proof that \problemName{Graph Isomorphism} is $\GI$-complete when restricted to \className{Thickness-$k$} graphs. This result actually follows from a simple adaptation of an argument described in~\cite{24882} by De Biasi, which we present in  Proposition~\ref{prop:isoThickness}. 
% To prove this result, we need to define the notion of \emph{subdivision graph}. 
The \emph{subdivision} of a graph $G$ is defined as the graph $\mathfrak{s}(G)$ obtained from $G$ by replacing each edge $e = uv\in E(G)$ with the path $\sequence{u,w_{e},v}$, where $w_{e}$ denotes a new vertex. 
More formally, $\mathfrak{s}(G)$ is the graph with vertex set $V(\mathfrak{s}(G))=V(G)\cup\set{w_{e}\setst e \in E(G)}$ and edge set $E(\mathfrak{s}(G)) = \set{uw_{e},w_{e}v \setst e=uv \in E(G)}$. 

\begin{lemma}\label{lemma:subdivisionThickness}
	For each graph $G$, the subdivision $\mathfrak{s}(G)$ of $G$ has thickness at most~$2$.
\end{lemma}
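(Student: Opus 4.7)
The plan is to exhibit an explicit partition of $E(\mathfrak{s}(G))$ into two classes, each of which is shown to induce a forest and thus a planar subgraph. Observe that every edge of $\mathfrak{s}(G)$ has exactly one endpoint in $V(G)$ and one endpoint of the form $w_e$ for some $e \in E(G)$, and that each subdivision vertex $w_e$ has degree exactly $2$ in $\mathfrak{s}(G)$, being adjacent precisely to the two endpoints $u$ and $v$ of its originating edge $e = uv \in E(G)$.

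First, I would fix an arbitrary orientation of $G$, so that every edge $e = uv \in E(G)$ is ordered as a pair $(u,v)$ with a well-defined tail $u$ and head $v$. Using this orientation, I would define the two edge classes
\[
E_{1} = \set{uw_{e} \setst e=(u,v) \in E(G)} \quad \text{and} \quad E_{2} = \set{w_{e}v \setst e=(u,v) \in E(G)}.
\]
By construction, $E_{1}$ and $E_{2}$ partition $E(\mathfrak{s}(G))$, since the two edges $uw_e$ and $w_ev$ arising from each $e = uv$ are placed in different classes.

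Next, I would argue that each of the subgraphs $(V(\mathfrak{s}(G)), E_{i})$ is a forest. Indeed, in $(V(\mathfrak{s}(G)),E_{1})$, every subdivision vertex $w_{e}$ is incident to exactly one edge (namely $uw_e$ where $u$ is the tail of $e$), so the $w_e$'s are all leaves; consequently the connected components are stars centred at vertices of $V(G)$, yielding a forest. The same argument applies symmetrically to $E_{2}$, where each $w_{e}$ is again a leaf. Since every forest is planar, this shows that $\mathfrak{s}(G)$ admits a decomposition into two planar subgraphs, and therefore has thickness at most $2$.

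There is no real obstacle: the proof is essentially a one-line observation once one notes that subdividing every edge allows the two halves of each subdivided edge to be separated into different planar layers. The only thing to be careful about is to specify the partition coherently (e.g. via an orientation of $G$), so that each $w_{e}$ lands with degree one in each class.
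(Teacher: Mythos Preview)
Your proof is correct and essentially identical to the paper's: the paper fixes a vertex ordering $v_1,\ldots,v_n$ and, for each edge $e=v_iv_j$ with $i<j$, places $v_iw_e$ in $H_1$ and $w_ev_j$ in $H_2$, which is just a particular choice of your arbitrary orientation. Both arguments then observe that each $w_e$ has degree~$1$ in each part, so the parts are forests of stars and hence planar.
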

\begin{proof}
    Assume without loss of generality that $V(G)=\set{v_{1},\ldots,v_{n}}$, for some positive integer~$n$. 
	Let $H_1$ and $H_2$ be the  spanning subgraphs of $\mathfrak{s}(G)$ defined as follows: for each edge $e = v_{i}v_{j}\in E(G)$ with $i < j$, add the edge $v_{i}w_{e}$ to $H_{1}$ and add the edge $w_{e}v_{j}$ to $H_{2}$. 
	Note that, for each $e \in E(G)$, the degree of $w_{e}$ in $H_{1}$, and in $H_{2}$, is exactly $1$.
	Moreover, $V(G)$ is an independent set in $H_{1}$, and in $H_{2}$. 
	Thus, $H_{1}$ and $H_{2}$ are forests whose components are stars, which implies that $H_{1}$ and $H_{2}$ are planar graphs. 
	Therefore, since $\mathfrak{s}(G)=H_{1}\cup H_{2}$, we obtain that $\mathfrak{s}(G)$ has thickness at most~$2$. 
\end{proof}
% \begin{proof}
% 	If $G$ is planar, then clearly $\mathfrak{s}(G)$ has thickness~$1$. 
% 	Assume that $G$ is not planar. 
% 	Moreover, assume without loss of generality that $V(G)=\set{v_{1},\ldots,v_{n}}$, for some positive integer~$n$. 
% 	Let $H_1$ and $H_2$ be the  spanning subgraphs of $G$ defined as follows: for each edge $e = v_{i}v_{j}\in E(G)$ with $i < j$, add the edge $v_{i}w_{e}$ to $H_{1}$ and add the edge $w_{e}v_{j}$ to $H_{2}$. 
% 	Note that, for each $e \in E(G)$, the degree of $w_{e}$ in $H_{1}$, and in $H_{2}$, is exactly $1$.
% 	Moreover, $V(G)$ is an independent set in $H_{1}$, and in $H_{2}$. 
% 	Thus, $H_{1}$ and $H_{2}$ are forests whose components are stars, which implies that $H_{1}$ and $H_{2}$ are planar graphs such that $\mathfrak{s}(G)=H_{1}\cup H_{2}$; in other words, $G$ has thickness~$2$. 
% \end{proof}

\begin{proposition}\label{prop:isoThickness}
	\problemName{Graph Isomorphism} is \GI-complete for \problemName{Thickness-$k$}.
\end{proposition}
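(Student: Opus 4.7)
The plan is to reduce general \problemName{Graph Isomorphism} to the restricted version on \className{Thickness-$k$} graphs. It is classical that \problemName{Graph Isomorphism} remains \GI-complete when restricted to $3$-regular graphs (this can be seen via standard gadget constructions from general GI, or cited from the literature). So I would start from an instance $(G_{1},G_{2})$ of \problemName{Graph Isomorphism} in which both $G_{1}$ and $G_{2}$ are $3$-regular, and output $(\mathfrak{s}(G_{1}),\mathfrak{s}(G_{2}))$. This construction is clearly polynomial, and by Lemma~\ref{lemma:subdivisionThickness} each $\mathfrak{s}(G_{i})$ has thickness at most~$2$, hence lies in \className{Thickness-$k$} for every $k\geq 2$.

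The core of the proof is then to establish the equivalence $G_{1}\cong G_{2}$ if and only if $\mathfrak{s}(G_{1})\cong \mathfrak{s}(G_{2})$. The forward direction is immediate: any isomorphism $\phi\colon V(G_{1})\to V(G_{2})$ extends to an isomorphism $\widehat{\phi}\colon V(\mathfrak{s}(G_{1}))\to V(\mathfrak{s}(G_{2}))$ by setting $\widehat{\phi}(v)=\phi(v)$ for $v\in V(G_{1})$ and $\widehat{\phi}(w_{uv})=w_{\phi(u)\phi(v)}$ for every edge $uv\in E(G_{1})$.

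For the converse, I would exploit the degree distinction afforded by $3$-regularity: in $\mathfrak{s}(G_{i})$ every vertex of $V(G_{i})$ has degree exactly~$3$, whereas every subdivision vertex $w_{e}$ has degree exactly~$2$. Consequently, any isomorphism $\psi\colon V(\mathfrak{s}(G_{1}))\to V(\mathfrak{s}(G_{2}))$ must map $V(G_{1})$ bijectively onto $V(G_{2})$ and $\set{w_{e}\setst e\in E(G_{1})}$ bijectively onto $\set{w_{e}\setst e\in E(G_{2})}$. It then suffices to observe that, for each edge $uv\in E(G_{1})$, the path $\sequence{u,w_{uv},v}$ in $\mathfrak{s}(G_{1})$ is sent by $\psi$ to a path $\sequence{\psi(u),\psi(w_{uv}),\psi(v)}$ in $\mathfrak{s}(G_{2})$ whose internal vertex has degree~$2$; this forces $\psi(u)\psi(v)\in E(G_{2})$. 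Hence $\psi\restriction_{V(G_{1})}$ is an isomorphism of $G_{1}$ and $G_{2}$.

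The main (mild) obstacle is making sure the source of the reduction is a class in which original vertices are already degree-distinguishable from degree-$2$ subdivision vertices; $3$-regular (or more generally, minimum-degree-$3$) graphs serve this purpose while preserving \GI-completeness. Once that is in place, Lemma~\ref{lemma:subdivisionThickness} handles the thickness bound and the degree argument above handles correctness, completing the reduction.
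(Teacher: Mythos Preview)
Your proof is correct and follows the same core idea as the paper's: apply the subdivision operator and invoke Lemma~\ref{lemma:subdivisionThickness} to land in \className{Thickness-$2$}. The paper, however, reduces directly from general \problemName{Graph Isomorphism} on arbitrary graphs and simply asserts that $G_{1}\cong G_{2}$ iff $\mathfrak{s}(G_{1})\cong \mathfrak{s}(G_{2})$ without restricting the input class. Your detour through $3$-regular instances is not wrong, but it is unnecessary: the equivalence holds for all graphs (the only delicate case is when a component of $G_i$ is a cycle, where the bipartition of $\mathfrak{s}(G_i)$ may be swapped by an isomorphism, yet both sides are then forced to be the same cycle). So your degree-distinguishing argument buys a cleaner one-line proof of the converse at the cost of importing an auxiliary \GI-completeness result; the paper trades that for a terse appeal to a standard fact about subdivisions.
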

\begin{proof}
	Let $G_{1}$ and $G_{2}$ be two arbitrary graphs.
	It follows from Lemma~\ref{lemma:subdivisionThickness} that  $\mathfrak{s}(G_{1})$ and $\mathfrak{s}(G_{2})$ have thickness at most~$2$. 
	Moreover, one can easily verify that $G_{1}$ and $G_{2}$ are isomorphic if and only if $\mathfrak{s}(G_{1})$ and $\mathfrak{s}(G_{2})$ are isomorphic.
\end{proof}

\section{Steiner Tree for Undirected Path Graphs}\label{sec:ST}
In this section, we prove that the \problemName{Steiner Tree} problem is \NP-complete for \className{Undirected Path} graphs, 
which provides a full dichotomy Polynomial versus \NP-complete for the \problemName{Steiner Tree} column.
 Our proof holds even if the input graph has diameter~3, and we show that \problemName{Steiner Tree} is in $\P$ when restricted to \className{Undirected Path} graphs of diameter~2, thus getting another dichotomy for the problem in terms of the diameter.

We mention that Spinrad writes in his book~\cite{S.book} that he was unable to find any work on the \problemName{Steiner Tree} problem restricted to \className{Undirected Path} graphs, but that Dieter Kratsch told him this should be \NP-complete as a simple extension of a proof that \problemName{Connected Dominating Set} is \NP-complete for \className{Undirected Path} graphs. 
%Haynes et al. cite in their book~\cite{HHS.book} a paper~\cite{KLM.97}, submitted in 1997,
Haynes et al. cite a paper~\cite{KLM.97}, submitted in 1997, in their book~\cite{HHS.book},
where the \NP-completeness proof of \problemName{Connected Dominating Set} supposedly appears. However, we were not able to find any version of~\cite{KLM.97}. 
Thus, in order to fill this gap, we provide a non-trivial adaptation of the \NP-completeness proof presented in~\cite{BJ.82} for the \problemName{Dominating Set} problem restricted to \className{Undirected Path} graphs, which finally explicitly shows that the \problemName{Connected Dominating Set} problem restricted to \className{Undirected Path} graphs is indeed \NP-complete.  
%\textcolor{blue}{Our proof is a strengthening of the \NP-completeness proof for the \problemName{Dominating Set} problem restricted to \className{Undirected Path} graphs presented in~\cite{BJ.82}.} 
%First, we prove that the \problemName{Connected Dominating Set} problem when restricted to \className{Undirected Path} graphs is \NP-complete. 
Then, we use a transformation by White et al.~\cite{WFP.85} between the \problemName{Steiner Tree} and the \problemName{Connected Dominating Set} problems to obtain the desired result as a corollary. 

A closely related variant of \problemName{Connected Dominating Set} that should be mentioned is the so-called \problemName{Total Dominating Set} problem, which, rather than a dominating set inducing a connected subgraph, simply requires a dominating set having no isolated vertices. 
Through a different non-trivial adaptation of the proof presented in~\cite{BJ.82}, \problemName{Total Dominating Set} restricted to \className{Undirected Path} graphs was proven to be \NP-complete~\cite{Laskar1984}. 
%It was also shown through \textcolor{red}{a non-trivial adaptation} of the proof presented in~\cite{BJ.82} that \problemName{Total Dominating Set} is \NP-complete when restricted to \className{Undirected Path} graphs~\cite{Laskar1984}. 
% However, it is worth noticing that, even though the construction described in~\cite{Laskar1984} succeeds in showing that \problemName{Total Dominating Set} is \NP-complete for \className{Undirected Path} graphs, it cannot be used so as to obtain the \NP-completeness of \problemName{Connected Dominating Set} for \className{Undirected Path} graphs. 
However, it is worth noticing that the construction described in~\cite{Laskar1984} cannot be used so as to further obtain the \NP-completeness of \problemName{Connected Dominating Set} for \className{Undirected Path} graphs.
Therefore, we emphasize the merit of our contribution. 

%We should mention that the \NP-completeness for \problemName{Total Dominating Set}, where the subgraph induced by the dominating set has no isolated vertices, restricted to \className{Undirected Path} graphs~\cite{Laskar1984} is \textcolor{blue}{another strengthening} of~\cite{BJ.82}, but both constructions fail to obtain a connected dominating set.

\medskip

We start by giving some formal definitions. A \emph{chordal} graph can also be described as the intersection graph of subtrees of a tree: given a tree $T$, each vertex $u$ of $G$ corresponds to a subtree $T_u$ of $T$, and $uv\in E(G)$ if and only if $V(T_u)\cap V(T_v)\neq \emptyset$. We call $(T,\{T_u\}_{u\in V(G)})$ a \emph{tree model of $G$}. One can verify that a tree decomposition of $G$ of width $\omega(G)$ can be obtained from this tree model. 
%
%where $\cliquetree$ is a tree such that there is a one-to-one correspondence between its nodes and the maximal cliques of $\graph$, and, for each vertex $\uvertex\in\gvset{\graph}$, $\tree_{\uvertex}$ denotes the unique subtree (which, in the case of undirected path graphs, are paths) of $\cliquetree$ comprising all nodes of $\cliquetree$ whose associated maximal cliques contain $\uvertex$.} 
The subclasses of \className{Undirected Path}, \className{Directed Path} and \className{Interval} graphs can be derived from this definition as follows. An \emph{undirected path graph} is a chordal graph that has a tree model $(T,\{T_u\}_{u\in V(G)})$ where each $u\in V(G)$ corresponds to a subpath of $T$. A \emph{directed path graph} is an undirected path graph that has a tree model $(T,\{T_u\}_{u\in V(G)})$ such that $T$ is rooted at a vertex $r$, and every subpath $T_u$ is from a node $t\in V(T)$ to a node $t'\in V(T)$ where $t$ belongs to the $(r, t')$-path
%$t'$ belongs to the $(r,t)$-path 
of $T$. Finally, an \emph{interval graph} is a chordal graph that has a tree model $(T,\{T_u\}_{u\in V(G)})$ where $T$ is a path. From Figure~\ref{fig:classessecond}, we know that these classes are nested. % interval graphs are also directed path graphs. It is also known that directed path graphs are strongly chordal~\cite{}.

\medskip

We recall that, given a graph $G$, a subset $D\subseteq V(G)$ is a \emph{dominating set} of $G$ if, for every vertex $v\in V(G)\setminus D$, $N_{G}(v)\cap D\neq \emptyset$. Additionally, $D$ is said to be \emph{connected} if $G[D]$ is a connected subgraph of $G$. 
Given also a subset $X\subseteq G$ of \emph{terminals}, a \emph{Steiner tree} of $G$ for $X$ is a tree subgraph ${\cal T}$ of $G$ such that 
$X\subseteq V({\cal T})$. 
%The \emph{size of ${\cal T}$} is given by $|V({\cal T})|$,  and we denote by $ST(G,X)$ the minimum size of a Steiner tree of $(G,X)$. 
Next, we formally state the \problemName{Steiner Tree} and \problemName{Connected Dominating Set} problems. Although the usual question for \problemName{Steiner Tree} asks for the minimum tree, it is more convenient for our reduction to ask for the minimum set of non terminal vertices. 
Notice that this gives a polynomially equivalent problem. 
%Both questions give polynomially equivalent problems. 

%An \emph{edge cover} of $G$ is a subset $E'\subseteq E(G)$ such that every $u\in V(G)$ is the endpoint of some $e\in E'$. The size of a minimum edge cover of $G$ is denoted by $\beta'(G)$.

\begin{myproblem}
  \problemtitle{\problemName{Steiner Tree}}
  \probleminput{A connected graph $\graph$, a non-empty subset $X\subseteq \gvset{\graph}$, and a positive integer $k$.}
  \problemquestion{Does there exist a subset $S\subseteq \gvset{\graph}\setminus X$ with $\abs{S} \leq k$, such that $\graph[S\cup X]$ is connected?}
\end{myproblem}

\begin{myproblem}
  \problemtitle{\problemName{Connected Dominating Set}}
  \probleminput{A graph $\graph$ and a positive integer $k$.}
  \problemquestion{Does there exist a subset $\domset\subseteq \gvset{\graph}$ with $\abs{\domset} \leq k$, such that $\cnh{\graph}{\domset}=\gvset{\graph}$ and $G[D]$ is connected?}
\end{myproblem}

As we said before, we first prove that \problemName{Connected Dominating Set} is \NP-complete for \className{Undirected Path} graphs. We do this with a reduction from the following problem, which is one of Karp's 21 \NP-complete problems~\cite{K.72}.

\begin{myproblem}
  \problemtitle{\problemName{3D-Matching}}
  \probleminput{Disjoint sets $\pset$, $\qset$ and $\rset$ each of cardinality $\nvertices$, for some positive integer $\nvertices$, and a subset $\sset\subseteq \pset\times\qset\times\rset$.}
  \problemquestion{Does there exist a subset $\domset\subseteq \sset$ such that $\abs{\domset}=\nvertices$ and $\striple\cap\striple'=\emptyset$ for every two triples  $\striple,\striple'\in\domset$?}
\end{myproblem}

\begin{theorem}\label{thm:SteinerUP}
	\problemName{Connected Dominating Set} remains $\NP$-complete when restricted to undirected path graphs of diameter at most~3.
\end{theorem}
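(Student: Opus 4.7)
The plan is to reduce from \problemName{3D-Matching}, adapting the Booth--Johnson construction for \problemName{Dominating Set} on undirected path graphs~\cite{BJ.82}. Given an instance $(\pset,\qset,\rset,\sset)$ of \problemName{3D-Matching} with $|\pset|=|\qset|=|\rset|=n$, I would build an undirected path graph $G$ together with a tree model $(\cliquetree,\{T_{v}\}_{v\in V(G)})$ such that $G$ has a connected dominating set of a prescribed size if and only if $\sset$ contains a perfect matching, and such that $G$ has diameter at most~$3$.

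The core of the construction is a host tree $\cliquetree$ with a central node $c$ that lies on \emph{every} triple's subpath. Each triple $\striple\in\sset$ becomes a vertex $v_{\striple}$ whose subpath $T_{v_{\striple}}$ passes through $c$ and visits the tree-positions assigned to the three elements of $\striple$. Each element $e\in\pset\cup\qset\cup\rset$ becomes a vertex $u_{e}$ whose subpath is a single node lying precisely on the subpaths of those triples containing $e$. This yields $u_{e}v_{\striple}\in E(G)$ iff $e\in\striple$, and, crucially, all triple-vertices pairwise intersect at $c$, so they induce a clique in $G$. To force an $n$-sized solution to select triples rather than elements and to force the covering condition, I would attach to each element $u_{e}$ a small gadget that cannot be dominated except through a triple containing $e$, in the spirit of the ``forbidden'' structures of~\cite{BJ.82}; the budget is set to $k=n+\alpha$, where $\alpha$ absorbs the cost of the vertices forced into every solution by these gadgets.

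Correctness then follows in three steps. First, the gadgets force any connected dominating set $D$ of size at most $k$ to contain, for each element $e$, at least one triple-vertex $v_{\striple}$ with $e\in\striple$. Second, a pigeonhole count on $3n$ elements covered by only $n$ triples of size three shows that the chosen triples must be pairwise disjoint, and hence form a perfect $3$-matching. Third, the converse direction and connectedness are automatic: the $n$ triple-vertices from any perfect matching dominate all elements, all pairwise intersect at $c$, so they form a clique, and the remaining forced gadget vertices are directly adjacent to the corresponding element-vertex, which in turn is adjacent to a selected triple. The diameter bound then follows by inspection: every pair of triple-vertices is adjacent through $c$; every element-vertex is at distance~$1$ from each triple containing it; and every other pair is within distance~$3$, the worst case being two gadget vertices attached to distinct elements.

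The main obstacle is engineering the enforcer gadget so that it simultaneously (i) rules out ``cheap'' non-triple dominators that would break the matching structure and allow a disconnected or undersized solution, (ii) preserves representability by subpaths of a single tree $\cliquetree$ (so that $G$ remains undirected path and not merely chordal), and (iii) does not inflate the diameter past~$3$. The third requirement is the most delicate point, since any pendant-like gadget naturally increases distances, and the standard reductions from \problemName{3D-Matching} tend to produce ``long'' path-models; controlling this requires placing the gadget so that its new nodes always lie within two edges of~$c$ in $\cliquetree$, which I expect is precisely the ``non-trivial adaptation'' of~\cite{BJ.82} the authors refer to, and is the step where most of the technical care will be needed.
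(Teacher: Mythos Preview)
Your proposal contains a structural gap that would prevent the construction from being an undirected path graph with the adjacencies you need. You write that each triple $\striple\in\sset$ becomes a \emph{single} vertex $v_{\striple}$ whose subpath $T_{v_{\striple}}$ ``passes through $c$ and visits the tree-positions assigned to the three elements of $\striple$'', while each element $e$ is a vertex whose subpath is a single tree-node, with $u_e v_{\striple}\in E(G)$ iff $e\in\striple$. But a path in a tree has only two endpoints: once it passes through the central node $c$, it can enter at most two of the branches hanging off $c$. If the element-nodes for $\pset$, $\qset$, $\rset$ sit in three distinct branches, no single subpath can hit one node from each. And if you try to place several element-nodes along a common branch, then any triple-path reaching a deeper element-node necessarily passes through the shallower ones as well, creating unwanted adjacencies $u_e v_{\striple}$ with $e\notin\striple$ and destroying the ``$u_e v_{\striple}\in E(G)$ iff $e\in\striple$'' property on which your pigeonhole argument relies. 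So the very first step---realising the triple/element incidence by single paths in a tree---fails.

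This is precisely why the paper (following Booth--Johnson) does \emph{not} use one vertex per triple. Instead, each triple $\striple_j$ is encoded by \emph{three} path-vertices $a_j,b_j,c_j$, one per coordinate, each running from the central clique node to the tree-node of the corresponding element; all of $a_j,b_j,c_j$ meet at the central node, so the ``triple block'' is internally connected. The enforcer gadget is then attached \emph{per triple}, not per element: extra vertices $x_j,y_j,z_j^1,z_j^2,z_j^3$ force every connected dominating set to spend at least two vertices inside each block, and exactly two only if it picks the ``non-covering'' pair $\{x_j,y_j\}$; picking the covering set $\{a_j,b_j,c_j\}$ costs three. The budget is therefore $2m+n$ (not $n+\alpha$), and the counting argument compares $2m+|\matching|$ against $2m+n$ rather than a direct pigeonhole on $3n$ elements. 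Your high-level correctness outline and diameter heuristic are reasonable in spirit, but the specific model you sketch cannot be built; you would need to move to a multi-vertex-per-triple encoding and relocate the gadget accordingly.
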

\begin{proof}
	Let $\pset=\set{\pvertex_{1},\ldots,\pvertex_{\nvertices}}$, $\qset=\set{\qvertex_{1},\ldots,\qvertex_{\nvertices}}$ and $\rset=\set{\rvertex_{1},\ldots,\rvertex_{\nvertices}}$ be disjoint sets each of cardinality $\nvertices$, for some positive integer~$\nvertices$, and let $\sset=\set{\striple_{1},\ldots,\striple_{\nedges}}$ be a subset of  $\pset\times\qset\times\rset$ of cardinality $\nedges$, for some positive integer $\nedges$. 
	%$\sset\subseteq\set{\set{\pvertex,\qvertex,\rvertex} \setst \tuple{\pvertex,\qvertex,\rvertex} \in \pset\times\qset\times\rset}$. 
	We let $\instance=\tuple{\pset,\qset,\rset,\sset}$ be the instance of \problemName{3D-Matching} constituted by $\pset$, $\qset$, $\rset$ and $\sset$. 
%	Assume without loss of generality that, for each $\vertex\in\pset\cup\qset\cup\rset$, there exist at least two triples $\striple_{j},\striple_{\ell}\in\sset$, with $j\neq\ell$, such that $\vertex\in\striple_{j}\cap\striple_{\ell}$, as otherwise we could trivially reduce the problem.\ana{verificar se essa suposicao eh mesmo necessaria}
	%Consequently, $\abs{\sset}\geq 2$. }
	Then, we let $\graph$ be the graph obtained from $\instance$ as follows (Figure~\ref{graphGConnDomSet} shows a tree model of the constructed graph):
	\begin{itemize}
	\item For each $\striple_{j}\in\sset$, we let $\vset_{j} = \set{\avertex_{j}, \bvertex_{j}, \cvertex_{j}, \xvertex_{j}, \yvertex_{j}, \zvertex_{j}^{1}, \zvertex_{j}^{2}, \zvertex_{j}^{3}}$. 
	We remark that, for each two sets $\striple_{j}, \striple_{\ell} \in \sset$, $\vset_{j}\cap\vset_{\ell}=\emptyset$ if and only if $j\neq\ell$; %, otherwise $\vset_{j}=\vset_{\ell}$;
	\item $V(G) = \cup_{j=1}^m \vset_j \cup \pset \cup \qset \cup \rset$;
		\item $\kclique=\bigcup_{\striple_{j}\in\sset}\set{\avertex_{j},\bvertex_{j},\cvertex_{j},\xvertex_{j}}$ is a clique of $\graph$;
		\item for each $\striple_{j}\in\sset$, $\set{\avertex_{j},\bvertex_{j},\xvertex_{j},\yvertex_{j}}$, $\set{\avertex_{j},\yvertex_{j},\zvertex_{j}^{1}}$, $\set{\bvertex_{j},\yvertex_{j},\zvertex_{j}^{2}}$ and $\set{\cvertex_{j},\xvertex_{j},\zvertex_{j}^{3}}$ are cliques of $\graph$;
		\item for each $\pvertex_{i}\in\pset$, $\set{\pvertex_{i}} \cup \set{\avertex_{j} \setst \pvertex_{i} \in \striple_{j}, \striple_{j} \in \sset}$ is a clique of $\graph$;
		\item for each $\qvertex_{i}\in\qset$, $\set{\qvertex_{i}} \cup \set{\bvertex_{j} \setst \qvertex_{i} \in \striple_{j}, \striple_{j} \in \sset}$ is a clique of $\graph$; 
		\item for each $\rvertex_{i}\in\rset$, $\set{\rvertex_{i}} \cup \set{\cvertex_{j} \setst \rvertex_{i} \in \striple_{j}, \striple_{j} \in \sset}$ is a clique of $\graph$.
	\end{itemize}
	%For each $\striple_{j}\in\sset$, we let $\vset_{j} = \set{\avertex_{j}, \bvertex_{j}, \cvertex_{j}, \xvertex_{j}, \yvertex_{j}, \zvertex_{j}^{1}, \zvertex_{j}^{2}, \zvertex_{j}^{3}}$. 
	%We remark that, for each two sets $\striple_{j}, \striple_{\ell} \in \sset$, $\vset_{j}\cap\vset_{\ell}=\emptyset$ if and only if $j\neq\ell$. %, otherwise $\vset_{j}=\vset_{\ell}$. 

    % 	We remark that, for each $\striple_{j}\in\sset$, the vertices in $\vset_{j}$ only depend on the triple $\striple_{j}=\set{\pvertex,\qvertex,\rvertex}$ itself, \emph{i.e.} such vertices do not depend on the elements $\pvertex,\qvertex,\rvertex$ which compose $\striple_{j}$. \ana{eh meio estranho a frase anterior, pq certamente depende se nao as triplas nao iriam influenciar na construcao. Eu diria soh a frase seguinte mesmo}
    % 	In other words, for each two sets $\striple_{j}, \striple_{\ell} \in \sset$, $\vset_{j}\cap\vset_{\ell}=\emptyset$ if and only if $j\neq\ell$. %, otherwise $\vset_{j}=\vset_{\ell}$. 

    Figure~\ref{graphGConnDomSet} illustrates a tree model $\tuple{\cliquetree,\set{\tree_{\uvertex}}_{\uvertex\in\gvset{\graph}}}$ associated with the graph $\graph$. We depict inside a node $t\in V(T)$, the set of vertices of $G$ that contains node $t$ in its corresponding subtree; more formally, denoting by $X_t$ the subset of $V(G)$ drawn inside node $t$, and given $v\in V(G)$, we define $T_v$ as the subtree of $T$ induced by $\{t\in V(T)\colon v\in X_t\}$.  
	Based on $\tuple{\cliquetree,\set{\tree_{\uvertex}}_{\uvertex\in\gvset{\graph}}}$, one can verify that $\graph$ is an undirected path graph. 
	Indeed, for each vertex $\uvertex\in\gvset{\graph}$, we get that $\tree_{\uvertex}$ is a path of $\cliquetree$. 
	Moreover, one can readily verify that $\kclique$ is a dominating clique  of $\graph$. 
	Therefore, $\graph$ has diameter at most $3$. 

	\usetikzlibrary{arrows,positioning,fit,calc,shapes.symbols,decorations.pathreplacing}

\begin{figure}[th]\centering
	\begin{tikzpicture}[scale=1.6,sibling distance=4.0em,  level distance=2.5em,
		every node/.style = {align=center,text=black,inner xsep=0.7mm, inner ysep=0.7mm,font=\scriptsize,label distance=0.2mm}, clique/.style = {rounded corners, draw=black!60}]

	  	\node [clique,label=left:{\tiny $K$}] {$\cup_{\striple_{j}\in\sset}\set{\avertex_{j},\bvertex_{j},\cvertex_{j},\xvertex_{j}}$} [child anchor=north]
		    child { node[clique] (AB) {$\set{\avertex_{j},\bvertex_{j},\xvertex_{j},\yvertex_{j}}$}
		    	child { node[clique] (A) {$\set{\avertex_{j},\yvertex_{j},\zvertex_{j}^{1}}$} }
		    	child { node[clique] (B) {$\set{\bvertex_{j},\yvertex_{j},\zvertex_{j}^{2}}$} }
		    }
	        child { node[clique] (C) {$\set{\cvertex_{j},\xvertex_{j},\zvertex_{j}^{3}}$}}
			child { node[clique,label=below:{$\forall\, \pvertex_{i}\in\pset$}] (p) {$\begin{aligned}\set{\pvertex_{i}} \cup \\ \set{\avertex_{j} \setst \pvertex_{i} \in \striple_{j}} \end{aligned}$}}
			child { node[clique,label=below:{$\forall\, \qvertex_{i}\in\qset$}] (q) {$\begin{aligned}\set{\qvertex_{i}} \cup \\ \set{\bvertex_{j} \setst \qvertex_{i} \in \striple_{j}}\end{aligned}$}}
			child { node[clique,label=below:{$\forall\, \rvertex_{i}\in\rset$}] (r) {$\begin{aligned}\set{\rvertex_{i}} \cup \\ \set{\cvertex_{j} \setst \rvertex_{i} \in \striple_{j}}\end{aligned}$}}
	        ; 

	    \node[clique,thick,dotted,draw=gray,inner sep=1mm,label=above left:{$\forall\,\striple_{j}\in\sset$},fit=(AB) (A) (C)] (Vj) {};
	\end{tikzpicture}
	\caption{A tree model $\tuple{T,\set{\tree_{\uvertex}}_{\uvertex\in\gvset{\graph}}}$ associated with the graph $\graph$, constructed from a given instance $\instance=\tuple{\pset,\qset,\rset,\sset}$ of the \problemName{3D-Matching} problem.}\label{graphGConnDomSet}
\end{figure}

% \draw [decorate,decoration={brace,raise=1pt}] (p.south east) -- (p.south west) node [align=center,below=2pt] {$\forall\, \pvertex_{j}\in\pset$};
	
	%\textcolor{blue}{Figure~\ref{fig:Vj} illustrates the subgraph of $G$ induced by $V_{j}$, for $\striple_{j}\in\sset$, which plays a key role in our reduction.}
	
	%\input{figs/figVj.tex}

	We now prove that $\instance$ is a \yes-instance of \problemName{3D-Matching} if and only if $\graph$ admits a connected dominating set $\domset$ of size at most $2\nedges+\nvertices$. %, where $\nedges=\abs{\sset}$. 

	First, suppose that $\instance$ is a \yes-instance of \problemName{3D-Matching}, and let $\matching$ be a 3d-matching of $\instance$. 
	Then, we define $\domset=\set{\avertex_{j},\bvertex_{j},\cvertex_{j}\setst \striple_{j}\in\matching}\cup\set{\xvertex_{j},\yvertex_{j}\setst \striple_{j}\not\in\matching}$.
	%\ana{na formula anterior, acho que no lugar de ${\cal S}$ era pra ser $M$} 
	Note that $\abs{\domset} = 3\nvertices+2(\nedges-\nvertices)=2\nedges+\nvertices$. 
	%Furthermore, since, for each $\vertex\in\pset\cup\qset\cup\rset$, there exists exactly one set $\striple_{j}\in\sset$ such that $\vertex\in\striple$, 
	We claim that $\domset$ is a connected dominating set of $\graph$. 
	Indeed, since $\matching$ is a 3d-matching of $\instance$, we have that the following holds: 
	\begin{itemize}
		\item for each $\pvertex_{i}\in\pset$, there exists (exactly) one triple $\striple_{j}\in\matching$ such that $\pvertex_{i}\in\striple_{j}$, which implies that $\avertex_{j}\in\domset$ and that $\pvertex_{i}$ is dominated in $\graph$ by $\avertex_{j}$;

		\item for each $\qvertex_{i}\in\qset$, there exists (exactly) one triple $\striple_{j}\in\matching$ such that $\qvertex_{i}\in\striple_{j}$, which implies that $\bvertex_{j}\in\domset$ and that $\qvertex_{i}$ is dominated in $\graph$ by $\bvertex_{j}$;

		\item for each $\rvertex_{i}\in\rset$, there exists (exactly) one triple $\striple_{j}\in\matching$ such that $\rvertex_{i}\in\striple_{j}$, which implies that $\cvertex_{j}\in\domset$ and that $\rvertex_{i}$ is dominated in $\graph$ by $\cvertex_{j}$.
	\end{itemize}
	Additionally, it directly follows from the construction of $\graph$ that $\set{\avertex_{j},\bvertex_{j},\cvertex_{j}}$  dominates all vertices belonging to $\vset_{j}$ for each $\striple_{j}\in\matching$, and that $\set{\xvertex_{\ell},\yvertex_{\ell}}$ dominates all vertices belonging to $\vset_{\ell}$ for each $\striple_{\ell}\in\sset\setminus\matching$. 
	Consequently, $\domset$ is a dominating set of $\graph$.  %\ana{acho que nas proximas formulas, tambem, ${\cal S}$ deveria ser $M$}
	To verify that $\domset$ induces a connected subgraph of $\graph$, first note that $\kclique'=\bigcup_{\striple_{j}\in\matching}\set{\avertex_{j},\bvertex_{j},\cvertex_{j}}$ induces a connected subgraph of $\graph$ since $\kclique$ is a clique of $\graph$ and $\kclique'\subseteq\kclique$. 
	%In addition, it follows from the facts that $K''_X = \bigcup_{\striple_{j}\in\sset}\set{\xvertex_{j}}\subseteq \kclique$, and $ \xvertex_{j}\yvertex_{j}\in\geset{\graph}$ for each $\striple_{j}\in\sset$, that the set $\kclique''=\bigcup_{\striple_{j}\in\sset\setminus\matching}\set{\xvertex_{j},\yvertex_{j}}$ induces a connected subgraph of $\graph$. 
	In addition, it follows from the facts that $\bigcup_{\striple_{j}\in\sset}\set{\xvertex_{j}}\subseteq \kclique$, and $ \xvertex_{j}\yvertex_{j}\in\geset{\graph}$ for each $\striple_{j}\in\sset$, that the set $\kclique''=\bigcup_{\striple_{j}\in\sset\setminus\matching}\set{\xvertex_{j},\yvertex_{j}}$ induces a connected subgraph of $\graph$. 
%\ana{mudei aqui pois achei desnecessariamente longo. Comentei caso queira voltar}
%The result follows since $K''_X$ is complete to $K'$, i.e., $D = K'\cup K''$ is a connected subgraph of $G$.
Therefore, the result follows since $D = K'\cup K''$ is a connected subgraph of $G$.
	%As a result, if there exist two distinct vertices $\uvertex,\vvertex\in\domset$ such that there is no path between them in $\graph[\domset]$, then $\uvertex\in\kclique'$ and $\vvertex\in\kclique''$. 
    %However, since $\kclique'\subseteq\kclique$ and $\xvertex_{j}\in\kclique$ for each $\striple_{j}\in\sset$, if $\vvertex=\xvertex_{j}$ for some $\striple_{j}\in\sset\setminus\matching$, then $\uvertex\vertex\in\geset{\graph}$ for any $\uvertex\in\kclique'$. 
	%Consequently, $\vvertex=\yvertex_{j}$ for some $\striple_{j}\in\sset\setminus\matching$, otherwise there would trivially exist a path in $\graph[\domset]$ between $\uvertex$ and $\vvertex$. 
    %Moreover, since $\xvertex_{j}\yvertex_{j}\in\geset{\graph}$ for each $\striple_{j}\in\sset$, we obtain that  $\sequence{\uvertex,\xvertex_{j},\yvertex_{j}}$ is a path in $\graph[\domset]$ between $\uvertex$ and $\vvertex$ for any $\uvertex\in\kclique'$. 
	%Thus, such vertices $\uvertex$ and $\vertex$ cannot exist.  
	%Therefore, $\domset$ is a connected dominating set $\graph$ of size at most $2\nedges+\nvertices$. 

	Conversely, suppose now that $\graph$ admits a connected dominating set $\domset$ of size at most $2\nedges+\nvertices$. 
	By the construction of $\graph$, for each $\striple_{j}\in\sset$, the following holds (see Figure~\ref{fig:Vj}):
	\begin{itemize} 
		\item $\domset \cap \set{\avertex_{j},\yvertex_{j},\zvertex_{j}^{1}} \neq \emptyset$, otherwise $\zvertex_{j}^{1}$ would not be dominated in $\graph$ by $\domset$; 
		\item $\domset \cap \set{\bvertex_{j},\yvertex_{j},\zvertex_{j}^{2}} \neq \emptyset$, otherwise $\zvertex_{j}^{2}$ would not be dominated in $\graph$ by $\domset$; 		 
		\item $\domset \cap \set{\cvertex_{j},\xvertex_{j},\zvertex_{j}^{3}} \neq \emptyset$, otherwise $\zvertex_{j}^{3}$ would not be dominated in $\graph$ by $\domset$. 
	\end{itemize}
	
    We recall that $\vset_{j} = \set{\avertex_{j}, \bvertex_{j}, \cvertex_{j}, \xvertex_{j}, \yvertex_{j}, \zvertex_{j}^{1}, \zvertex_{j}^{2}, \zvertex_{j}^{3}}$ for each $\striple_{j}\in\sset$. 
    \begin{figure}[thb]\centering
\begin{tikzpicture}[font=\scriptsize,label distance=0.1mm]
    \matrix (ps) [matrix of nodes, column sep=4mm, row sep=4mm,nodes={draw,circle,minimum size=0.1mm,inner sep=2pt}] {
      \node[label=above:{$\zvertex_{j}^{1}$}] (z1) {}; & \node[label=above:{$\yvertex_{j}$}] (y) {}; & \node[label=above:{$\zvertex_{j}^{2}$}] (z2) {}; \\
      \node[label=left:{$\avertex_{j}$}] (a) {};  &                & \node[label=right:{$\bvertex_{j}$}] (b) {}; \\
      \node[label=left:{$\xvertex_{j}$}] (x) {};  &                & \node[label=right:{$\cvertex_{j}$}] (c) {}; \\
                     & \node[label=below:{$\zvertex_{j}^{3}$}] (z3) {}; &                \\
    };
    \graph [use existing nodes] {
          z1 -- {a, y};
          y -- {a, b, z2, x};
          z2 -- {b};
          a -- {b, x, c};
          b -- {x, c};
          x -- {c,z3};
          c -- {z3};
        };
\end{tikzpicture}
\caption{Subgraph of $G$ induced by $\vset_{j}$, for  $\striple_{j}\in\sset$.}\label{fig:Vj}
\end{figure}
	Then, based on the above, one can verify that $\abs{\domset \cap \vset_{j}} \geq 2$. 
	Moreover, we prove that if $\abs{\domset \cap \vset_{j}} = 2$ for some $\striple_{j}\in\sset$, then $\domset \cap \vset_{j} = \set{\xvertex_{j},\yvertex_{j}}$. 
	This follows from the fact that the only other possibilities for $\domset \cap \vset_{j}$ with $\abs{\domset \cap \vset_{j}} = 2$ would be $\domset \cap \vset_{j}=\set{\cvertex_{j},\yvertex_{j}}$ and $\domset \cap \vset_{j}=\set{\zvertex_{j}^{3},\yvertex_{j}}$. 
	However, note that, $\set{\avertex_{j},\bvertex_{j},\xvertex_{j}}$ is a separator of $\cvertex_{j}$ and $\yvertex_{j}$ in $\graph$ for each $\striple_{j}\in\sset$.  
	Thus, if $\domset \cap \vset_{j}=\set{\cvertex_{j},\yvertex_{j}}$, then $\domset$ would not induce a connected subgraph of $\graph$. %achei a frase redundante: since in this case $\set{\avertex_{j},\bvertex_{j},\xvertex_{j}} \cap \domset = \emptyset$. 
	Analogously, $\set{\cvertex_{j},\xvertex_{j}}$ is a separator of $\zvertex_{j}^{3}$ and $\yvertex_{j}$ in $\graph$ for each $\striple_{j}\in\sset$.  
	Thus, if $\domset \cap \vset_{j}=\set{\zvertex_{j}^{3},\yvertex_{j}}$, then $\domset$ would not induce a connected subgraph of $\graph$. %since in this case $\set{\cvertex_{j},\xvertex_{j}}\cap\domset=\emptyset$. 
	As a result, we have that $\domset \cap \vset_{j} = \set{\xvertex_{j},\yvertex_{j}}$ whenever $\abs{\domset \cap \vset_{j}} = 2$. 
	
	We prove now that we can assume that $\domset\cap\vset_{j}=\set{\avertex_{j},\bvertex_{j},\cvertex_{j}}$ for each $\striple_{j}\in\sset$ with $\abs{\domset\cap\vset_{j}}\geq 3$. 
	Indeed, for each $\striple_{j}\in\sset$, $\set{\avertex_{j},\bvertex_{j},\cvertex_{j}}$ dominates at least the same vertices of $\graph$ as any other subset of $\vset_{j}$, which implies $\cnh{\graph}{\domset\cap\vset_{j}}\subseteq\cnh{\graph}{\set{\avertex_{j},\bvertex_{j},\cvertex_{j}}}$. 
	Moreover, since by hypothesis $\domset$ induces a connected subgraph of $\graph$, we obtain that $(\domset\setminus\vset_{j})\cup\set{\avertex_{j},\bvertex_{j},\cvertex_{j}}$ also induces a connected subgraph of $\graph$ for each $\striple_{j}\in\sset$. 
	Thus, assume without loss of generality that $\domset\cap\vset_{j}=\set{\avertex_{j},\bvertex_{j},\cvertex_{j}}$ whenever $\abs{\domset\cap\vset_{j}}\geq 3$. %\ana{acho melhor avisar que eh isso que voce quer mostrar, pq a gente se perder um pouco}

	Now, let $\matching=\set{\striple_{j} \in \sset \setst \abs{\domset\cap\vset_{j}}=3}$. 
	Based on the definition of $\matching$, we obtain that $\abs{\domset} \geq 3\abs{\matching}+2(\nedges-\abs{\matching})=2\nedges+\abs{\matching}$. 
	On the other hand, we have by hypothesis that $\abs{\domset}\leq 2\nedges+\nvertices$. 
	Consequently, $\abs{\matching}\leq \nvertices$. 
	Towards a contradiction, suppose that $\abs{\matching}<\nvertices$. 
	Let $\domset'=\bigcup_{\striple_{j}\in\sset}(\domset\cap\vset_{j})$. 
	Note that, for each $\uset\in\set{\pset,\qset,\rset}$, there are at most $\abs{\matching}$ vertices from $\uset$ that are dominated in $\graph$ by some vertex in $\domset'$. 
	As a result, there exist at least $3(\nvertices-\abs{\matching})$ distinct vertices from $\pset\cup\qset\cup\rset$ that are not dominated in $\graph$ by any vertex belonging to $\domset'$, \emph{i.e.} $\abs{(\pset\cup\qset\cup\rset)\setminus\cnh{\graph}{\domset'}}\geq 3(\nvertices-\abs{\matching})$. 
	This implies that $\domset$ must further contain each of such vertices from $\pset\cup\qset\cup\rset$ that are not dominated in $\graph$ by $\domset'$. 
	Then, we obtain that actually $$\abs{\domset} \geq 3\abs{\matching}+2(\nedges-\abs{\matching}) + 3(\nvertices-\abs{\matching})=2\nedges+\nvertices+2(\nvertices-\abs{\matching})>2\nedges+\nvertices\text{,}$$ which contradicts the hypothesis of $\domset$ being a connected dominating set of $\graph$ of size at most $2\nedges+\nvertices$. 
	Consequently, $\abs{\matching}=\nvertices$ and $\abs{\domset}=2\nedges+\nvertices$.  
	This implies that, if $\uvertex$ is a vertex in $\domset$, then $\uvertex\in\vset_{j}$ for some $\striple_{j}\in\sset$. 
	Hence, we obtain that $\matching$ is a 3d-matching of $\instance$, otherwise there would exist a vertex $\vvertex\in\pset\cup\qset\cup\rset$ such that, for every triple $\striple_{j}\in\sset$ with $\vvertex\in\striple_{j}$, $\abs{\domset\cap\vset_{j}}=2$, which would imply that $\vvertex$ is not dominated in $\graph$ by any vertex belonging to $\domset$. 
	Therefore, $\instance$ is a \yes-instance of \problemName{3D-Matching}. 
\end{proof}

\medskip
As previously mentioned, in~\cite{WFP.85}, the authors give a reduction from \textsc{Connected Dominating Set} to \textsc{Steiner Tree} that works in any subclass of \className{Chordal} graphs without changing the input graph. We then get the following corollary. 

\begin{corollary}
\problemName{Steiner Tree} is $\NP$-complete when restricted to undirected path graphs of diameter at most~3.
\end{corollary}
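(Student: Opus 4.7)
The plan is to derive the corollary directly from Theorem~\ref{thm:SteinerUP} by invoking the polynomial-time transformation of White, Farber and Pulleyblank~\cite{WFP.85} from \problemName{Connected Dominating Set} to \problemName{Steiner Tree}. The essential feature of that transformation, as stressed in the discussion preceding the corollary, is that when the input is a chordal graph it leaves the underlying graph entirely unchanged, merely specifying the terminal set $X \subseteq V(G)$ and the Steiner bound as functions of the CDS parameter. Consequently, feeding the transformation with the undirected path graph $G$ produced in the proof of Theorem~\ref{thm:SteinerUP} yields a \problemName{Steiner Tree} instance on the very same graph $G$.

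First I would note that undirected path graphs form a subclass of chordal graphs, so the hypothesis needed to apply the reduction of~\cite{WFP.85} is satisfied by every instance output by our construction in Theorem~\ref{thm:SteinerUP}. Next, I would argue that both the structural property of being an undirected path graph (witnessed by the tree model of Figure~\ref{graphGConnDomSet}) and the bound on the diameter (guaranteed by the dominating clique $K$) depend only on $V(G)$ and $E(G)$, both of which are left intact by the transformation; hence the resulting Steiner Tree instance lives in the same restricted class. Finally, membership in \NP\ is routine, since given a candidate Steiner set $S$ one can verify in polynomial time that $|S|$ meets the bound and that $G[S \cup X]$ is connected.

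The only conceivable obstacle would be a discrepancy between the White--Farber--Pulleyblank construction and the diameter-at-most-$3$ setting; but because the reduction alters neither the vertex set nor the edge set of the chordal input, both the intersection model and the diameter bound established in Theorem~\ref{thm:SteinerUP} automatically carry over. Combining the \NP-hardness of \problemName{Connected Dominating Set} on undirected path graphs of diameter at most~$3$ with this graph-preserving reduction therefore yields the \NP-completeness of \problemName{Steiner Tree} in the same restricted class, completing the proof.
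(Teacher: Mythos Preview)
Your argument is correct and matches the paper's own proof: both derive the corollary from Theorem~\ref{thm:SteinerUP} by applying the graph-preserving reduction of White, Farber and Pulleyblank~\cite{WFP.85} from \problemName{Connected Dominating Set} to \problemName{Steiner Tree}, noting that since the underlying chordal graph is unchanged, the properties of being an undirected path graph and having diameter at most~3 carry over verbatim. The paper states this in a single sentence, whereas you spell out the preservation of the tree model and the diameter more explicitly, but the approach is identical.
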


In~\cite{K.90}, the author proves that deciding whether an undirected path graph has a dominating clique of size at most $k$ can be done in polynomial time. We apply his result to get a dichotomy for \problemName{Steiner Tree} restricted to \className{Undirected Path} graphs in terms of the diameter of the input graph. 
For this, we first need some definitions and tool lemmas, which are presented below.

Let $G$ be a connected graph and $X\subseteq V(G)$ be a non-empty set.
We denote by $ST(G,X)$ the minimum size of a subset $S \subseteq V(G)\setminus X$ such that $S\cup X$ induces a connected subgraph of $G$. 
Throughout the remainder of this section, we assume without loss of generality that $|X|\ge 3$ and that $X$ does not induce a connected subgraph of $G$, as otherwise $ST(G, X)$ would be easily determined: if $|X|=1$ or $G[X]$ is connected, then trivially $ST(G, X)=0$; and, if $X = \{u,v\}$, then $ST(G, X)$ is equal to the number of vertices in any minimum path between $u$ and $v$ in $G$. 

We say that two distinct vertices $u,v \in V(G)$ are \emph{twins} in $G$ if they have the same neighborhood in $G$, \emph{i.e.} either $N_{G}(u) = N_{G}(v)$ or $N_{G}[u]=N_{G}[v]$. 
We prove in the next lemma that we can suppose without loss of generality that $G$ has no twins. Observe that the hypothesis involving $u$ and $v$ can be assumed without loss of generality. It is included in order to write the equation in a more concise way.

% We denote by $ST(G,X)$ the minimum size of a subset $D$ such that $D\cup X$ is connected. And given vertices $u,v\in V(G)$, we say that $u,v$ are \emph{twins} if either $N(u) = N(v)$ or $N[u]=N[v]$. First, we prove that we can suppose that $G$ has no twin vertices. From now on, we suppose that $G$ is connected, as otherwise either we can reduce the problem to a connected component of $G$, or the answer is trivially \no. In the proposition below, we also suppose that $|X|\ge 3$, as otherwise either $|X|=1$ and no vertex in $D$ is needed, or $X = \{u,v\}$ and $D$ can be any minimum path between $u$ and $v$ in $G$.

\begin{lemma}\label{lemma:twin}
Let $G$ be a connected graph, containing twin vertices $u$ and $v$, and $X\subseteq V(G)$ with $|X|\ge 3$. Also, suppose that $u\in X$ implies $v\in X$. Then,
\[ST(G,X) = ST(G-u,X-u).\]
% Let $G$ be a connected graph, and suppose that $u,v$ are twin vertices of $G$. Also, let $X\subseteq V(G)$ with $|X|\ge 3$ and suppose that if $u\in X$, then $v\in X$. Then,
% \[ST(G,X) = ST(G-u,X-u).\]
\end{lemma}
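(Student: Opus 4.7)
The plan is to prove the two inequalities $ST(G,X) \leq ST(G-u, X-u)$ and $ST(G,X) \geq ST(G-u, X-u)$ separately, splitting into cases according to whether $u \in X$. The key tool is the following consequence of $u$ and $v$ being twins: for any $Y \subseteq V(G)$ with $v \in Y$ and $\abs{Y} \geq 2$ such that $G[Y]$ is connected, the vertex $u$ is adjacent to some vertex of $Y$, so $G[Y \cup \set{u}]$ is also connected; conversely, if $v \in Y$ and $G[Y \cup \set{u}]$ is connected, then $G[Y]$ remains connected, since every neighbor of $u$ in $Y \cup \set{u}$ is also a neighbor of $v$ (for false twins) or is equal to $v$ (for true twins).

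For the inequality $ST(G,X) \leq ST(G-u, X-u)$, I would take an optimal Steiner set $S'$ in $G-u$ for $X-u$ and argue that $S'$ is also a valid Steiner set for $X$ in $G$. If $u \notin X$ then $X - u = X$, and $S' \cup X$ is connected in $G-u$ hence in the induced supergraph $G$. If $u \in X$ then the hypothesis gives $v \in X - u = X \setminus \set{u}$, so $v$ lies in the already-connected set $S' \cup (X - u)$, and the twin property lets me re-insert $u$ to conclude that $S' \cup X$ is connected in $G$. In either case $S' \subseteq V(G) \setminus X$, so $ST(G,X) \leq \abs{S'} = ST(G-u, X-u)$.

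For the reverse inequality, I would start from an optimal Steiner set $S$ for $X$ in $G$ and exhibit a Steiner set of size at most $\abs{S}$ in $G-u$ for $X-u$. If $u \in X$ then $u \notin S$, and since $v \in X - u$ lies in the connected set $S \cup X$, the twin property shows that $S \cup (X-u)$ is connected in $G-u$. If $u \notin X$ and $u \notin S$, then $S$ itself is already a Steiner set for $X = X-u$ in $G-u$. If $u \notin X$ and $u \in S$, I would replace $u$ by $v$: setting $S^{*} = (S \setminus \set{u}) \cup \set{v}$ when $v \notin S \cup X$ gives a set of the same size, while if $v \in S \cup X$ already, then $S^{*} = S \setminus \set{u}$ suffices. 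In each variant the twin property ensures that $S^{*} \cup X$ remains connected in $G-u$, and $\abs{S^{*}} \leq \abs{S}$.

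The most delicate step is justifying the re-insertion of $u$ in the false twin sub-case when $u \in X$: since $u$ and $v$ are then non-adjacent, I need the connected set $S' \cup (X-u)$ to contain a neighbor of $v$, which is automatically a neighbor of $u$ by the twin property. This is precisely where the hypothesis $\abs{X} \geq 3$ is used, for it yields $\abs{X-u} \geq 2$, so the connected set $S' \cup (X-u)$ has at least two vertices and therefore $v$ has at least one neighbor inside it. Aside from this subtlety, the remainder of the argument is routine bookkeeping driven by the twin property.
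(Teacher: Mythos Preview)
Your proposal is correct and follows essentially the same approach as the paper's proof: establish both inequalities by taking an optimal Steiner set on one side and using the twin relation to transfer it to the other side. The only difference is organizational—the paper cases on whether $v\in S$, $v\in X$, or $v\notin S\cup X$ in the direction $ST(G,X)\le ST(G-u,X-u)$, whereas you case on whether $u\in X$—but the underlying mechanism (reroute through $v$ any path that used $u$, and use $|X|\ge 3$ to guarantee $v$ has a neighbor in the relevant set in the false-twin situation) is identical.
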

\begin{proof}
First, let $S\subseteq V(G-u)\setminus X$ be a minimum set such that $S\cup (X-u)$ induces a connected subgraph of $G-u$. 
We want to prove that $S\cup X$ induces a connected subgraph of $G$, in which case we get $ST(G,X)\le ST(G-u,X-u)$. 
Suppose first that $v\in S$. 
%\textcolor{blue}{Note that, in this case, $u \not\in X$, otherwise $v\in X$. }
Since $X\setminus\set{u} \neq \emptyset$, $v$ must have some neighbor $w\in S\cup X$ in $G-u$. 
%Since $X\neq \emptyset$, $v$ must have some neighbor $w\in S\cup (X-u)$. 
Then, it follows from the hypothesis that $u$ and $v$ are twins in $G$ that $w$ is also a neighbor of $u$ in $G$. This implies that $S\cup X$ induces a connected subgraph of $G$. 
A similar argument can be applied when $v\in X$. 
Indeed, since $|X|\ge 3$, $X\setminus\set{u,v} \neq \emptyset$. 
Thus, $v$ must have some neighbor $w\in S\cup (X-u)$, which is also a neighbor of $u$ in $G$, and consequently $S\cup X$ induces a connected subgraph of $G$. 
Finally, if $v\notin S\cup X$, then by hypothesis we also know that $u\notin X$, in which case trivially $S\cup X$ induces a connected subgraph of $G$. 
% A similar argument can be applied in case $v\in X$ since $|X|\ge 3$, i.e. $X\setminus\set{u,v} \neq \emptyset$ and $v$ must have some neighbor $w$ in $S\cup (X-u)$. Finally, if $v\notin X\cup S$, by assumption we also know that $u\notin X$, in which case trivially $S\cup X$ is connected in $G$.

Now, let $S\subseteq V(G)\setminus X$ be a minimum set such that $S\cup X$ induces a connected subgraph of $G$. 
Since $S$ is minimum, $|S \cap \{u,v\}|\le 1$. 
If $u \in S$, then, by the minimality of $S$, $v \not\in S$ and $(S \setminus \set{u}) \cup \set{v}$ witnesses $ST(G-u,X-u)\le ST(G,X)$. On the other hand, if $u\notin S$, then we trivially get that $S\cup (X-u)$ induces a connected subgraph of $G-u$, and therefore $ST(G-u,X-u)\le ST(G,X)$. 
%Thus, we can suppose without loss of generality that $u\notin S$, in which case we trivially get that $S\cup (X-u)$ induces a connected subgraph of $G-u$. 
%in which case we trivially get that $S\cup (X-u)$ dominates $S-u$. 
%Hence, $ST(G-u,X-u)\le ST(G,X)$.
\end{proof}

Based on Lemma~\ref{lemma:twin}, we assume from now on that the input graph $G$ has no twin vertices. 
%From now on, we suppose that $G$ has no twin vertices. 
Also, in the remainder of the text, $(T,\{T_u\}_{u\in V(G)})$ is a tree model of $G$. 
Moreover, given a node $t\in V(T)$, we denote by $V_t$ the set $\{u\in V(G)\colon t\in V(T_u)\}$. 
We say that $u\in V(G)$ is a \emph{leafy vertex} if $V(T_u) = \{\ell_u\}$ and $\ell_u$ is a leaf in $T$; denote by ${\cal L}$ the set of leafy vertices, and for every $u\in {\cal L}$, denote by $\ell_u$ the unique node in $T_u$. 
We also say that $(T,\{T_u\}_{u\in V(G)})$ is \emph{minimal} if there are no two adjacent nodes $t, t' \in V(T)$ such that $V_t \subseteq V_{t'}$. 
%Finally, note also that, since $G$ has no twin vertices, we can suppose that there is a one-to-one correspondence between the leaves of $T$ and the leafy vertices associated with $T$, i.e., for every leaf $\ell$ of $T$, there exists $u\in {\cal L}$ such that ${\cal L}\cap V_{\ell} = \{u\}$. 
It is known that such a tree model can be computed in polynomial time~\cite{Gavril1978}. 
We prove in the following lemma that, for any minimal tree model $(T,\{T_u\}_{u\in V(G)})$, there is a one-to-one correspondence between the leaves of $T$ and the leafy vertices associated with $T$. %, i.e., for every leaf $\ell$ of $T$, there exists a unique $u\in {\cal L}$ such that ${\cal L}\cap V_{\ell} = \{u\}$.  

\begin{lemma}\label{lemma:bijection_leafy_and_leaf}
    Let $G$ be a connected undirected path graph without twin vertices, and $(T,\{T_u\}_{u\in V(G)})$ be a minimal tree model for $G$. 
    Then, for every leaf $\ell$ of $T$, there exists a unique $u\in {\cal L}$ such that ${\cal L}\cap V_{\ell} = \{u\}$.  
    %Then there is a one-to-one correspondence between the leaves of $T$ and the leafy vertices of $G$ associated with $T$.
\end{lemma}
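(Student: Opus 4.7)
The plan is to prove existence and uniqueness separately, with both parts reduced to short structural arguments about the minimal tree model. Throughout, I will use the standard observation that for a chordal (in particular undirected path) graph arising from a tree model $(T,\{T_u\}_{u\in V(G)})$, two vertices $u,v$ are adjacent in $G$ iff $V(T_u)\cap V(T_v)\neq\emptyset$, and hence $N_G[u] = \bigcup_{t\in V(T_u)} V_t$.

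For uniqueness, I would assume for contradiction that two distinct leafy vertices $u, v$ satisfy $\ell_u = \ell_v = \ell$. Then $V(T_u) = V(T_v) = \{\ell\}$, so by the displayed formula $N_G[u] = V_\ell = N_G[v]$. This means $u$ and $v$ are twins in $G$, contradicting the hypothesis that $G$ has no twins. Hence at most one leafy vertex can correspond to $\ell$.

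For existence, fix a leaf $\ell$ of $T$ and let $t$ denote its unique neighbor in $T$. By the minimality of the tree model, $V_\ell \not\subseteq V_t$, so there is some $u \in V_\ell \setminus V_t$. Then $T_u$ is a subpath of $T$ containing the node $\ell$ but not the node $t$. Since $\ell$ is a leaf of $T$ and its only neighbor is $t$, the only subpath of $T$ that contains $\ell$ but not $t$ is the trivial path $\{\ell\}$. Therefore $V(T_u)=\{\ell\}$, meaning $u\in\mathcal{L}$ with $\ell_u = \ell$. Combining the two parts yields $\mathcal{L}\cap V_\ell = \{u\}$.

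The main (minor) obstacle I expect is ensuring that the definitions match cleanly: verifying that the minimality condition as stated applies in the required direction, and being careful that when $V(T_u) = \{\ell\}$ the notion of leafy vertex kicks in exactly. Both pieces are one-line observations, so no deep difficulty should arise; the proof is essentially a packaging of the no-twins hypothesis (for uniqueness) and the minimality hypothesis (for existence).
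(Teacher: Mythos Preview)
Your proof is correct and follows essentially the same approach as the paper: uniqueness comes from the no-twins hypothesis (two leafy vertices at the same leaf would have identical closed neighborhoods), and existence comes from minimality (if no leafy vertex sat at $\ell$, then $V_\ell\subseteq V_t$ for the neighbor $t$ of $\ell$). The paper phrases existence as a proof by contradiction while you give the direct contrapositive, but the content is identical.
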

\begin{proof}
    %Let $(T,\{T_u\}_{u\in V(G)})$ be a tree model of $G$. 
    %Assume without loss of generality that $(T,\{T_u\}_{u\in V(G)})$ is minimal, i.e., for any node $t \in V(T)$, $(T-t,\{T_u-t\}_{u\in V(G)})$ is not a tree model of $G$. 
    Since $G$ has no twin vertices, for every leaf $t$ of $T$, there exists at most one leafy vertex $u$ of $G$ associated with $T$ such that $\ell_{u} = t$. 
    On the other hand, suppose that there exists a leaf $t$ of $T$ such that there is no leafy vertex of $G$ associated with $T$ corresponding to $t$, i.e., for every leafy vertex $u$ of $G$ associated with $T$, we have that $\ell_{u} \neq t$. 
    Then, let $t'$ be the parent of $t$ in $T$. 
    One can readily verify that $V_{t} \subseteq V_{t'}$, contradicting the fact that we are on a minimal tree model. 
    %Consequently, $(T-t,\{T_u-t\}_{u\in V(G)})$ is a tree model of $G$, which contradicts the minimality of $(T,\{T_u\}_{u\in V(G)})$. 
    %Moreover, one can verify that, if $u$ is a leafy vertex of $G$ associated with $T$, then $u$ is also a leafy vertex of $G$ associated with $T-t$.  
    %Therefore, by successively removing such leaves $t$ which are not corresponding to any leafy vertex of $G$, we obtain a tree model of $G$ that satisfies the desired condition. 
\end{proof}

A vertex $u$ is called \emph{simplicial} if $N_{G}(u)$ is a clique in $G$. Note that every leafy vertex is simplicial. 
Moreover, note that a simplicial vertex that is not in $X$ certainly is not contained in any minimum Steiner tree for $X$; therefore we can suppose that $X$ contains every simplicial vertex of $G$ and, in particular, ${\cal L}\subseteq X$. 
%Finally, note also that, since $G$ has no twin vertices, we can suppose that there is a one-to-one correspondence between the leaves of $T$ and the leafy vertices associated with $T$, i.e., for every leaf $\ell$ of $T$, there exists $u\in {\cal L}$ such that ${\cal L}\cap V_{\ell} = \{u\}$. 
In the next lemma, we prove that we can suppose that every $x\in X$ is either leafy, or is such that $T_x$ contains no leaf of $T$.

%; thefore, we will also index the leaves of $T$ by its related leafy vertex, i.e., if $\ell\in V(T)$ is such that $\ell\in V(T_u)$ where $u\in {\cal L}$, then we write $\ell$ as $\ell_u$. 

\begin{lemma}\label{lemma:leafy_ornoleaf}
Let $G$ be a connected undirected path graph without twin vertices.  Also, let $(T,\{T_u\}_{u\in V(G)})$ be a minimal tree model of $G$, ${\cal L}$ be the set of leafy vertices associated with $T$, and let $X\subseteq V(G)$ be a set of terminals such that ${\cal L}\subseteq X\subseteq V(G)$. Suppose that $x\in X\setminus {\cal L}$ is such that $\ell\in V(T_x)$ for some leaf $\ell$ of $T$. Then, $ST(G,X) = ST(G-u,X-u)$, where ${\cal L}\cap V_{\ell} = \{u\}$.
\end{lemma}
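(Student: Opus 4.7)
The plan is to exploit the fact that the leafy vertex $u$ is simplicial with $N_G[u] = V_{\ell}$ and that, by hypothesis, $x \in V_{\ell}$, so $x$ is adjacent in $G$ to every vertex of $N_G(u)$ (because $V_{\ell}$ is a clique of $G$, by the tree-model definition). In particular $ux \in E(G)$. Since $u$ is leafy, it also belongs to $\mathcal{L} \subseteq X$; so $u$ is itself a terminal, which is exactly why the equation deletes $u$ from both the graph and the terminal set.

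To prove $ST(G,X) \le ST(G-u, X-u)$, I would take a minimum solution $S' \subseteq V(G-u) \setminus (X-u)$ witnessing the right-hand side and argue that $S'$ also witnesses $ST(G,X)$. Since $u \notin V(G-u)$, certainly $u \notin S'$, so $S' \subseteq V(G) \setminus X$. Connectivity of $G[S' \cup X]$ follows because $G[S' \cup (X-u)]$ is connected in $G-u$, and reinserting $u$ attaches it to the rest via the edge $ux$, where $x \in X - u \subseteq S' \cup (X-u)$.

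For the reverse inequality $ST(G-u, X-u) \le ST(G,X)$, I would take a minimum $S \subseteq V(G) \setminus X$ with $G[S \cup X]$ connected and consider $S$ as a candidate solution in $G-u$ for the terminal set $X-u$; here $S \cap \{u\} = \emptyset$ is automatic since $u \in X$. The only thing to verify is that $G[S \cup X] - u$ is still connected. Any neighbor of $u$ in $S \cup X$ lies in $N_G(u) \subseteq V_{\ell}$, and $V_{\ell}$ is a clique, so all such neighbors are pairwise adjacent; moreover $x \in X - u$ is adjacent to every vertex of $V_{\ell} \setminus \{x\}$, so even after deleting $u$ the vertex $x$ preserves the adjacencies that $u$ provided. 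Hence $G[(S \cup X) \setminus \{u\}]$ is a connected subgraph of $G-u$ containing $X - u$, giving the bound.

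The only mildly subtle step is checking that deleting the terminal $u$ does not disconnect the solution, and that is handled exactly by the clique structure of $V_{\ell}$ together with the fact that $x$ acts as a replacement hub. I do not foresee a genuine obstacle, since both directions reduce to short clique-based connectivity arguments, entirely parallel to the twin-elimination reasoning used in Lemma~\ref{lemma:twin}.
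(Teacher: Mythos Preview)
Your proposal is correct and follows essentially the same approach as the paper: for $ST(G,X)\le ST(G-u,X-u)$ both you and the paper use that $\ell\in V(T_u)\cap V(T_x)$ gives $ux\in E(G)$, and for the reverse inequality the paper simply invokes that $u$ is simplicial (so deleting it from a connected subgraph keeps it connected), which is precisely the clique structure of $V_\ell$ that you spell out in more detail. Your extra remark that $x$ serves as a ``replacement hub'' is not needed once simpliciality is noted, but it is not wrong.
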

\begin{proof}
By Lemma~\ref{lemma:bijection_leafy_and_leaf}, $G-u$ is the graph related to the tree model $T-\ell$. 
Suppose that there exists a set $S\subseteq V(G-u)\setminus X$ such that $S\cup (X-u)$ induces a connected subgraph of $G-u$.  Since $\ell\in T_u\cap T_x$, $ux\in E(G)$. 
Thus, $S\cup X$ induces a connected subgraph of $G$, and consequently $ST(G,X) \le ST(G-u,X-u)$. 
Conversely, suppose that there exists a set $S\subseteq V(G)\setminus X$ such that $S\cup X$ induces a connected subgraph of $G$.
Since $u$ is a simplicial vertex of $G$, we obtain that $S\cup (X-u)$ induces a connected subgraph of $G-u$, and therefore $ST(G,X) \ge  ST(G-u,X-u)$.
\end{proof}

In the proof, we modify a subset $S$ that gives a solution in order to ensure that the new set is a clique. The following lemma will help us do that.

\begin{lemma}\label{lem:replacing}
Let $G$ be a connected undirected path graph,  $(T,\{T_u\}_{u\in V(G)})$ be a tree model of $G$, $X\subseteq V(G)$ be a set of terminals, $S\subseteq V(G)\setminus X$ be a set such that $G[S\cup X]$ is connected, and let $u,v\in S$. 
%If $y,z\in V(G)\setminus S$
If $y,z\in V(G)$ are such that $N_{G}(u)\cup N_{G}(v)\subseteq N_{G}(y)\cup N_{G}(z)$, then $S'\cup X$ is connected, where $S'=((S\setminus\{u,v\})\cup \{y,z\})\setminus X$. 
\end{lemma}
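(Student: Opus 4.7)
The plan is to show that any two vertices of $S'\cup X$ are connected in $G[S'\cup X]$ by transforming a path that exists in the original (connected) subgraph $G[S\cup X]$ into a walk that avoids $\{u,v\}$. The core idea is that the hypothesis $N_G(u)\cup N_G(v)\subseteq N_G(y)\cup N_G(z)$ lets us reroute through $\{y,z\}$ at every point where the original path meets $\{u,v\}$.

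Concretely, I would fix $p,q\in S'\cup X$ and take a shortest path $P=w_0w_1\cdots w_k$ between $p$ and $q$ inside $G[S\cup X]$. Decompose $P$ into maximal subpaths whose internal vertices lie in $(S\cup X)\setminus\{u,v\}$, separated by maximal ``runs'' of vertices in $\{u,v\}$. For every such run, say $w_i,\ldots,w_j\in\{u,v\}$, the bordering vertices $w_{i-1}$ and $w_{j+1}$ (if they exist) belong to $N_G(u)\cup N_G(v)$, hence by hypothesis to $N_G(y)\cup N_G(z)$; so each of them is adjacent in $G$ to at least one of $y,z$. This lets me splice in a short walk through $\{y,z\}$ in place of the deleted run, provided $y$ and $z$ themselves can be reached from each other inside $G[S'\cup X]$ when needed.

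The step I expect to be the main obstacle is establishing $y$--$z$ connectivity in $G[S'\cup X]$, which is required whenever a boundary vertex $w_{i-1}$ is adjacent to $y$ only and $w_{j+1}$ to $z$ only. I would handle this by a second application of the connectivity of $G[S\cup X]$: take a shortest path $\pi$ from $u$ to $v$ inside $G[S\cup X]$. Its interior vertices lie in $(S\cup X)\setminus\{u,v\}\subseteq S'\cup X$, while the first interior vertex is in $N_G(u)\subseteq N_G(y)\cup N_G(z)$ and the last is in $N_G(v)\subseteq N_G(y)\cup N_G(z)$. Attaching one of $y$ or $z$ at each end produces the required $y$--$z$ walk inside $G[S'\cup X]$. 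The corner case where $uv\in E(G)$ (so $\pi$ has no interior vertex) is handled by observing that then $v\in N_G(u)\subseteq N_G(y)\cup N_G(z)$ and symmetrically $u\in N_G(y)\cup N_G(z)$; combined with the fact that $u$ and $v$ had neighbors in $(S\cup X)\setminus\{u,v\}$ (otherwise $S\cup X=\{u,v\}$, contradicting the assumption $|X|\ge 3$ used earlier), one still produces an explicit $y$--$z$ walk through those neighbors. Stitching everything together yields a walk from $p$ to $q$ in $G[S'\cup X]$, and extracting a path from the walk completes the proof.
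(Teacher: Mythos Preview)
Your rerouting strategy is the same idea the paper sketches (the paper's proof simply asserts that ``$u$ and/or $v$ can be replaced by $y$ and/or $z$''), and you are right to flag $y$--$z$ connectivity inside $G[S'\cup X]$ as the crux. Unfortunately your fix for that step does not work. When you take a shortest $u$--$v$ path $\pi$ in $G[S\cup X]$ and note that its first and last interior vertices lie in $N_G(y)\cup N_G(z)$, nothing prevents \emph{both} of them from being adjacent only to $z$ (or only to $y$); then ``attaching one of $y$ or $z$ at each end'' yields a $z$--$\cdots$--$z$ walk, not a $y$--$z$ walk. Your corner-case patch for $uv\in E(G)$ has the same flaw: knowing that $u,v\in N_G(y)\cup N_G(z)$ and that $u,v$ have neighbours in $(S\cup X)\setminus\{u,v\}$ still allows all those neighbours to see only one of $y,z$.

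In fact the lemma is false as stated, so no argument can close this gap without an extra hypothesis. An interval-graph counterexample: let $V(G)=\{u,v,y,z,a,b,c\}$ with edge set $\{ua,\,ub,\,va,\,vc,\,ac,\,za,\,zc,\,yb\}$, and put $X=\{a,b,c\}$, $S=\{u,v\}$. Then $G[S\cup X]$ is connected and $N_G(u)\cup N_G(v)=\{a,b,c\}=N_G(y)\cup N_G(z)$, yet $G[S'\cup X]=G[\{y,z,a,b,c\}]$ splits into the components $\{y,b\}$ and $\{z,a,c\}$. What rescues the paper is that in both applications of the lemma in the subsequent theorem one has $yz\in E(G)$ (this is made explicit there); under that additional assumption your rerouting argument goes through cleanly, since the needed $y$--$z$ connectivity is then immediate.
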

\begin{proof}
Suppose otherwise, and let $H,H'$ be distinct components of $G[S'\cup X]$. This means that there exist $w\in V(H)$ and $w'\in V(H')$ such that every path $P$ between $w$ and $w'$ goes through $u$ and/or $v$; but since $N_{G}(u)\cup N_{G}(v)\subseteq N_{G}(y)\cup N_{G}(z)$, it means that $u$ and/or $v$ can be replaced by $y$ and/or $z$.
\end{proof}

We are now ready to prove our theorem. In~\cite{K.90}, the author proves that deciding whether an undirected path graph has a dominating clique of size at most $k$ can be done in polynomial time. We make a polynomial reduction from \problemName{Connected Dominating Set} to \problemName{Dominating Clique}, thus getting the desired polynomial algorithm by the equivalence given in~\cite{WFP.85}. This and Theorem~\ref{thm:SteinerUP} give a dichotomy of both \problemName{Connected Dominating Set} and \problemName{Steiner Tree} in terms of the diameter of the input graph $G$. 
%his algorithm can also be used in order to solve our problem.  %Note that, even though in the theorem below we use $k=|V(T)|$ instead of $k=|V(T)\setminus X|$ as stated in the formal definition of the problem, that formulation of the problem is also polynomial since it suffices to give as input the value $k-|X|$.
%
%
%We observe that in~\cite{K.90}, he uses the alternative notion of \emph{characteristic tree}, where the nodes are the maximal cliques of $G$ (see~\cite{Gavril1978,MW.86}). 
%
%
We observe that, in order to prove that \problemName{Dominating Clique} is polynomial-time solvable for \className{Undirected Path} graphs, it is used in~\cite{K.90} an  alternative notion of tree model called \emph{characteristic tree}, where the nodes of the model are the maximal cliques of the graph $G$ (see~\cite{Gavril1978,MW.86}).

%for each $u\in V(G)$, the set of maximal cliques containing $u$ can be exactly determined by $T_u$, since for each node $t\in V(T_u)$, the set $\{v\in V(G)\colon V(T_v)\supseteq \set{t}\}$ is a maximal clique containing $u$, and vice-versa~\cite{Gavril1978,MW.86}.

%We remark that, according to the previously mentioned result in~\cite{WFP.85}, we also obtain a dichotomy for \problemName{Connected Dominating Set} restricted to these graphs.

\begin{theorem}
\problemName{Steiner Tree} and \problemName{Connected Dominating Set} can be solved in polynomial time when restricted to undirected path graphs of diameter at most~2.
\end{theorem}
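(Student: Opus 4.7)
The plan is to exploit Kratsch's polynomial-time algorithm~\cite{K.90} for \problemName{Dominating Clique} on \className{Undirected Path} graphs together with the polynomial-time equivalence of \problemName{Steiner Tree} and \problemName{Connected Dominating Set} on chordal graphs from~\cite{WFP.85} (which does not modify the input graph). Together, these facts reduce the theorem to giving a polynomial-time algorithm for \problemName{Connected Dominating Set} on undirected path graphs of diameter at most~2, and I would achieve this by proving that on such graphs a minimum connected dominating set is in fact a dominating clique, so that Kratsch's algorithm can be invoked directly.

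To set things up, fix an input undirected path graph $G$ of diameter at most~2. By Lemma~\ref{lemma:twin} I may assume that $G$ has no twin vertices, and I would compute a minimal tree model $(T,\{T_u\}_{u \in V(G)})$ of $G$ in polynomial time~\cite{Gavril1978}. The central claim is then: if $D$ is a minimum connected dominating set of $G$, the subpaths $\{T_u : u \in D\}$ share a common node of~$T$; by the Helly property for subtrees of a tree, this forces $D$ to be a clique of~$G$, and hence a dominating clique. Since $G[D]$ is connected, the set $T_D := \bigcup_{u \in D} V(T_u)$ is already a connected subtree of~$T$, so the Helly argument reduces to showing that any two vertices $u,v \in D$ are adjacent in~$G$.

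The hard part will be establishing this pairwise adjacency inside~$D$, which I would prove by contradiction. Suppose $u,v \in D$ with $uv \notin E(G)$, so that $V(T_u)\cap V(T_v) = \emptyset$. Because $G$ has diameter at most~2, there is a common neighbor $w$ of $u$ and $v$, and in the tree model $T_w$ meets both $T_u$ and $T_v$. I would then carve $D$ along the separating portion of $T$ between $T_u$ and $T_v$, and apply the replacement tool of Lemma~\ref{lem:replacing} (with empty terminal set, so that it becomes a general tool about preserving connectivity in undirected path graphs) combined with the leafy/internal dichotomy of Lemma~\ref{lemma:leafy_ornoleaf} to swap a proper subset of $D$ for $w$, plus possibly one additional neighbor of $w$ to preserve connectivity, producing a connected dominating set of strictly smaller size and contradicting the minimality of~$D$. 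The delicate point is ensuring that this replacement simultaneously preserves both domination and connectivity of $G[D]$; the path-in-tree structure of the model is what makes each step explicit and bookkeepable.

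Once the claim is in hand, the overall algorithm is immediate: compute a tree model of~$G$, verify that $\mathrm{diam}(G) \leq 2$, and invoke~\cite{K.90} to find a minimum dominating clique, whose size equals $\gamma_c(G)$. The corresponding statement for \problemName{Steiner Tree} then follows from the reduction of~\cite{WFP.85}, which applies unchanged on any subclass of chordal graphs.
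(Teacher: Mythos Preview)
Your high-level strategy coincides with the paper's: reduce to \problemName{Dominating Clique} on \className{Undirected Path} graphs and invoke Kratsch~\cite{K.90}. The paper executes this on the \problemName{Steiner Tree} side (showing that a minimum set $S$ of Steiner vertices can be taken to be a clique dominating the leafy set~${\cal L}$), while you work on the \problemName{Connected Dominating Set} side; by the equivalence of~\cite{WFP.85} this choice is immaterial.

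The gap is in your replacement step. Given nonadjacent $u,v\in D$, a single common neighbour $w$ guaranteed by diameter~$2$ does \emph{not} in general satisfy $N_G(u)\cup N_G(v)\subseteq N_G(w)$: the subpath $T_w$ must contain the $t_u$--$t_v$ path in $T$, but nothing forces it to cover the ``outer'' halves of $T_u$ and $T_v$, so swapping a subset of $D$ for $w$ (even together with one further neighbour chosen only for connectivity) can destroy domination. The paper instead locates witnesses $w^1_u,w^2_u,w^1_v,w^2_v$ at the four extremities of $T_u,T_v$ and uses diameter~$2$ twice to obtain \emph{adjacent} vertices $y\in N(w^1_u)\cap N(w^1_v)$ and $z\in N(w^2_u)\cap N(w^2_v)$ with $N_G(u)\cup N_G(v)\subseteq N_G(y)\cup N_G(z)$, so that Lemma~\ref{lem:replacing} applies. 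Moreover, this substitution yields $|D'|\le |D|$, not $|D'|<|D|$; your ``strictly smaller size'' claim is unfounded. The paper handles this by choosing, among all minimum solutions, one maximising the number of edges in $G[S]$: since $yz\in E(G)$, $uv\notin E(G)$, and every edge from $\{u,v\}$ into the rest of the set is preserved by $\{y,z\}$, the swap strictly increases that edge count, which gives the contradiction. You need the same secondary extremal criterion. Finally, Lemma~\ref{lemma:twin} and Lemma~\ref{lemma:leafy_ornoleaf} are formulated for \problemName{Steiner Tree} with a terminal set~$X$; they do not apply verbatim to a bare connected dominating set and would have to be restated, or else you should switch to the \problemName{Steiner Tree} formulation as the paper does.
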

\begin{proof}
In view of the reduction from \problemName{Connected Dominating Set} to \problemName{Steiner Tree} presented in~\cite{WFP.85}, it suffices to prove that \problemName{Steiner Tree} can be solved in polynomial time. 
Thus, let $G$ be a connected undirected path graph with diameter at most~2, $X \subseteq V(G)$ be a terminal set such that $\abs{X} \geq 3$, and let $\param$ be a positive integer. 

%In what follows, we prove that: $ST(G,X)\le %\param$ if and only if there exists a clique %$S\subseteq V(G)\setminus X$ in $G$ of size at %most $\param$ that dominates ${\cal L}$. Our %theorem follows since this is exactly what is %computed in the algorithm presented %in~\cite{K.90}.

As usual, we consider a minimal tree model $(T,\{T_u\}_{u\in V(G)})$ of $G$. 
There is no loss of generality making these assumptions, since, as previously mentioned, it is known that such a tree model can be computed in polynomial time~\cite{Gavril1978}. 
% We know that such a tree model can be computed in polynomial time~\cite{Gavril1978}. 
Let ${\cal L}$ denote the set of leafy vertices associated with $T$, and assume that ${\cal L} \subseteq X$. 

In what follows, we prove that: $ST(G,X)\le \param$ if and only if there exists a clique $S\subseteq V(G)\setminus X$ in $G$ of size at most $\param$ that dominates ${\cal L}$. Our theorem follows since this is exactly what is computed in the algorithm presented in~\cite{K.90}.

For the sufficiency part of our claim, we just note that if $S$ is a clique of $G$ that dominates ${\cal L}$, then $\bigcup_{u\in S}V(T_u) = V(T)$, which means that in fact $S$ is a dominating clique of $G$ and therefore $S\cup X$ induces a connected subgraph of $G$. Because $|S|\le \param$, it follows that $ST(G,X)\le \param$.

Now, to prove necessity, suppose first that $ST(G,X)\le \param$, and let $S\subseteq V(G)\setminus X$ be a set such that $S\cup X$ induces a connected subgraph of $G$. 
Suppose that $S$ is minimum and that, among all such subsets of minimum cardinality, $S$ maximizes the number of edges in $E(G[S])$. 
% Suppose that $S$ is that minimizes $|S|$, and among all such subsets, that $S$ maximizes $E(G[S])$. 
In addition, by Lemma~\ref{lemma:leafy_ornoleaf}, we can suppose that there are no edges in $G$ between the vertices belonging to $X\setminus{\cal L}$ and the vertices belonging to ${\cal L}$. 
Moreover, note that ${\cal L}$ is an independent set. Thus, since $G[S\cup X]$ is connected, we get that every vertex in ${\cal L}$ must be adjacent to some vertex in $S$; in other words, $S$ dominates ${\cal L}$. 
Thus, it remains to prove that $S$ is a clique of $G$. 
Suppose for the sake of contradiction that there exist two distinct vertices $u,v\in S$ such that $uv\notin E(G)$.
We prove that one can find a pair $x,y$ of adjacent vertices in $G$ such that $N_{G}(u)\cup N_{G}(v)\subseteq N_{G}(x)\cup N_{G}(z)$. Then, based on Lemma~\ref{lem:replacing}, by letting $S'=((S\setminus \{u,v\})\cup \{y,z\})\setminus X$, we obtain a contradiction, since in this case either $|S'|<|S|$, or $|E(G[S'])|>|E(G[S])|$.

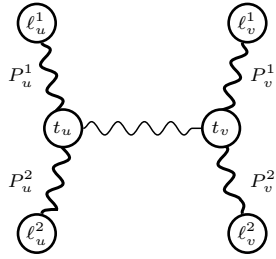
\begin{figure}[ht]\centering
\begin{tikzpicture}[scale=0.70]\scriptsize
  \pgfsetlinewidth{1pt}
  \tikzset{node/.style={ellipse, minimum size=0.5cm,draw, inner sep=1pt,font=\scriptsize}}
  \tikzset{snake it/.style={decorate, decoration=snake}}

\node[node] (ellu1) at (-2.5,2) {$\ell_u^1$};
\node[node] (ellu2) at (-2.5,-2) {$\ell_u^2$};
\draw[snake it, line width=0.04cm] (ellu1)--(-2,0)--(ellu2);
\node[node,fill=white] (tu) at (-2,0) {$t_u$};
\node at (-2.8,1) {$P_u^1$};
\node at (-2.8,-1) {$P_u^2$};

\node[node] (ellv1) at (1.5,2) {$\ell_v^1$};
\node[node] (ellv2) at (1.5,-2) {$\ell_v^2$};
\draw[snake it, line width=0.04cm] (ellv1)--(1,0)--(ellv2);
\node[node,fill=white] (tv) at (1,0) {$t_v$};
\node at (1.8,1) {$P_v^1$};
\node at (1.8,-1) {$P_v^2$};

\draw[snake it, line width=0.02cm] (tu)--(tv);

\end{tikzpicture}

\caption{The bold  lines represent the paths $T_u$ and $T_v$.}\label{fig:Tree}
\end{figure}

Let $t_u\in V(T_u)$ and $t_v\in V(T_v)$ be nodes whose distance from each other in $T$ is the smallest possible (observe Figure~\ref{fig:Tree}). 
%, such that the distance between $t_u$ and $t_v$ in $T$ is the smallest possible (observe Figure~\ref{fig:Tree}).
Note that $t_u\neq t_v$ since $V(T_u)\cap V(T_v)=\emptyset$. Also, let $P_u^1,P_u^2$ be the two subpaths defined by $t_u$ in $T_u$, and define $P^1_v,P^2_v$ similarly. For each $i\in \{1,2\}$, let $\ell^i_u$ be the end vertex of $P^i_u$ different from $t_u$, if it exists; otherwise, let $\ell^i_u$ be equal to $t_u$. Define $\ell^1_v,\ell^2_v$ similarly. Note that, if $\ell^1_u\neq t_u$, then we can suppose that there exist $w^1_u\in V(G)$ such that $\ell^1_u\in T_{w^1_u}$, and $t\notin T_{w^1_u}$, where $t$ is the neighbor of $\ell^1_u$ in $P^1_u$ (otherwise, we could contract the edge $\ell^1_ut$ and still have a tree model of $G$). Define $w^2_u,w^1_v,w^2_v$ similarly. 
There are two possible cases to be considered. 
%We analyse the following cases:
\begin{enumerate}[{Case} 1.]
    \item Suppose that all the vertices $w^1_u,w^2_u,w^1_v,w^2_v$ exist and are well-defined. 
    Since $G$ has diameter at most~2, there must exist vertices $y\in N_{G}(w^1_u)\cap N_{G}(w^1_v)$ and $z\in N_{G}(w^2_u)\cap N_{G}(w^2_v)$. 
    Clearly, the path between $t_u$ and $t_v$ in $T$ is contained in the paths $T_y$ and $T_z$ (which means that $y$ and $z$ are adjacent in $G$), and $V(P^1_u\cup P^1_v)\subseteq V(T_y)$, and $V(P^2_u\cup P^2_v)\subseteq V(T_z)$. 
    Thus, $N_{G}(u)\cup N_{G}(v)\subseteq N_{G}(y)\cup N_{G}(z)$, as desired. 

    \item Now, suppose that some of the vertices  $w^1_u,w^2_u,w^1_v,w^2_v$ do not exist or are not well-defined. 
    Note that, since $S\cup X$ induces a connected subgraph of $G$, there must exist a path in $G[S\cup X]$ between $u$ and $v$, which means that there must exist $w\in (S\cup X)\setminus\{u\}$ such that $t_u\in V(T_w)$.  This implies that at least one of the vertices $w^1_u,w^2_u$ is well-defined, as otherwise $V(T_u) = \{t_u\}\subseteq V(T_w)$ and we could just remove $u$ from $S$. 
    The same argument can be applied with respect to $w^1_v,w^2_v$. 
    Thus, suppose without loss of generality that $w^1_u,w^1_v$ are well-defined, and that $w^2_u$ is not well-defined (which means that $V(P^2_u) =\{t_u\}$). Pick $y$ as before, and note that $V(T_u)\subseteq V(T_y)$, and that $yv\in E(G)$ since $\ell^1_v\in V(T_y)\cap V(T_v)$. We can then apply Lemma~\ref{lem:replacing} to $\{u,v\}$ and $\{y,v\}$ since $N_{G}(u)\cup N_{G}(v)\subseteq N_{G}(y)\cup N_{G}(v)$.\qedhere
\end{enumerate}
\end{proof}

\section{Stubborn Puzzles 35 Years Later}
%Two open problems from [GJ] remain open for  40 years: Open1 \problemName{Graph Isomorphism}, and Open8 Precedence constrained 3-processor schedule. 
After 40 years, two open problems from [GJ] are still unsolved, namely: Open1 \problemName{Graph Isomorphism}, and Open8 Precedence constrained 3-processor schedule. In STOC 2016, László Babai announced that \problemName{Graph Isomorphism} could be solved in Quasipolynomial Time.
Only one {\COpen} entry from [OG] remains stubbornly open for 35 years: the complexity of \problemName{Chromatic Index} for \className{Planar} graphs. It is a puzzle to understand
why still today the \problemName{Chromatic Index} column has the majority of thirteen \COpen? entries, for instance \problemName{Chromatic Index} for \className{Cographs} is a long-standing open problem, as mentioned in~\cite{KR.03}.
We invite the reader to find a reference or a proof for the underlined [OG] entry in Table~\ref{table:first} corresponding to a ``private communication'' which would classify as polynomial \problemName{Steiner Tree} restricted to \className{Circle} graphs. 
Our proposed Table 2 leaves as open the parameterized complexity classification of \problemName{Partition into Cliques} for \className{Line} graphs.
%Our proposed parameterized Table~\ref{table:second} proposes a new puzzle depicted as \COpen*. There is a single {\COpen*} entry, corresponding to \problemName{Partition into Cliques} for \className{Line} graphs, that constitutes the parameterized puzzle, for which so far we were not able to provide a parameterized complexity classification. 
% Our proposed parameterized Table~\ref{table:second} proposes new puzzles depicted as \COpen*. There are only six {\COpen*} entries, corresponding to  {\CNPC} entries of Table~\ref{table:first}, that constitute the parametrized puzzles, for which so far we were not able to provide a parameterized complexity classification. 
We invite the reader to further study the eight $\XP$ entries, observing that five of them belong to our target \problemName{Steiner Tree} column. 
In particular, we highlight that, even though the closely related problem \problemName{Connected Dominating Set} is known to be $\FPT$ for \className{Claw-Free} graphs~\cite{Hermelin2019}, it is open whether \problemName{$\param$-Steiner Tree} is also $\FPT$ for \className{Line} and \className{Claw-Free} graphs.
%We invite the reader to further study the nine $\XP$ entries, observing %that six of them belong to our target \problemName{StTree} column, %possibly because this geometric problem has not been sufficiently studied %for \className{Intersection} graphs.
Regarding the obtained second dichotomy for the \problemName{Steiner Tree} problem restricted to \className{Undirected Path} graphs, according to the diameter of the input graph, we should mention that \problemName{Connected Dominating Set} was proven to be \NP-complete and $\W[2]$-hard
even when restricted to \className{Split} graphs of diameter~2~\cite{Lokshtanov2013}.
 A straightforward modification of their proof leads to the \NP-completeness of \problemName{Steiner Tree} (and to the $\W[2]$-hardness of \problemName{$\param$-Steiner Tree}) when restricted to \className{Split} graphs of diameter $2$.

\section*{Acknowledgements}
We would like to thank anonymous reviewers for their thorough reading and numerous insightful suggestions and comments; in particular, we would like to thank one of the reviewers for the references related to mim-width. 
We are grateful to Vinicius F. Santos for sharing the reference~\cite{ABMR.20}, for the \NP-completeness proof of \problemName{MaxCut} for \className{Interval} graphs. 
This work was partially supported by the Brazilian agencies CNPq (Grant numbers: 407635/2018-1, 140399/2017-8, 407430/2016-4, 437841/2018-9, 401519/2016-3, and 304576/2017-4), CAPES (Finance Code 001), FAPERJ (Grant number: CNE E-26/202.793/2017), and CNPq/FUNCAP (PNE-0112-00061.01.00/16).

\bibliography{manuscript}

\begin{thebibliography}{100}

\bibitem{ABMR.20}
R.~Adhikary, K.~Bose, S.~Mukherjee, and B.~Roy.
\newblock Complexity of maximum cut on interval graphs, 2020, arXiv:2006.00061.

\bibitem{ANS.80}
T.~Akiyama, T.~Nishizeki, and N.~Saito.
\newblock {NP}-completeness of the {H}amiltonian cycle problem for bipartite
  graphs.
\newblock {\em Journal of Information processing}, 3(2):73--76, 1980.

\bibitem{AG.09}
N.~Alon and S.~Gutner.
\newblock Linear time algorithms for finding a dominating set of fixed size in
  degenerated graphs.
\newblock {\em Algorithmica}, 54(4):544, 2009.

\bibitem{ACP.87}
S.~Arnborg, D.~G. Corneil, and A.~Proskurowski.
\newblock Complexity of finding embeddings in a $k$-tree.
\newblock {\em SIAM Journal on Algebraic Discrete Methods}, 8(2):277--284,
  1987.

\bibitem{APDAM89}
S.~Arnborg and A.~Proskurowski.
\newblock Linear time algorithms for {NP}-hard problems restricted to partial
  $k$-trees.
\newblock {\em Discrete Applied Mathematics}, 23(1):11--24, 1989.

\bibitem{BL.83}
L.~Babai and E.~M. Luks.
\newblock Canonical labeling of graphs.
\newblock In {\em Proceedings of the Fifteenth Annual ACM Symposium on Theory
  of Computing}, pages 171--183, 1983.

\bibitem{Babel1996}
L.~Babel, I.~Ponomarenko, and G.~Tinhofer.
\newblock The isomorphism problem for directed path graphs and for rooted
  directed path graphs.
\newblock {\em Journal of Algorithms}, 21(3):542--564, 1996.

\bibitem{B.94}
B.~Baker.
\newblock Approximation algorithms for {NP}-complete problems on planar graphs.
\newblock {\em Journal of the ACM (JACM)}, 41(1):153--180, 1994.

\bibitem{B.80}
F.~Barahona.
\newblock On the complexity of max cut.
\newblock Technical report, Universit\'e Scientifique et Medicale et Institut
  National Polytechnique de Grenoble, France, 1980.

\bibitem{BV.13}
R.~Belmonte and M.~Vatshelle.
\newblock Graph classes with structured neighborhoods and algorithmic
  applications.
\newblock {\em Theoretical Computer Science}, 511:54--65, 2013.

\bibitem{BK.19}
B.~Bergougnoux and M.~M. Kant{\'e}.
\newblock More applications of the $d$-neighbor equivalence: Connectivity and
  acyclicity constraints.
\newblock In {\em 27th Annual European Symposium on Algorithms (ESA 2019)}.
  Schloss Dagstuhl-Leibniz-Zentrum fuer Informatik, 2019.

\bibitem{BJLSPS.90}
F.~Berman, D.~Johnson, T.~Leighton, P.~W. Shor, and L.~Snyder.
\newblock Generalized planar matching.
\newblock {\em Journal of Algorithms}, 11(2):153--184, 1990.

\bibitem{bertossi1986}
A.~A. Bertossi and M.~A. Bonuccelli.
\newblock {H}amiltonian circuits in interval graph generalizations.
\newblock {\em Information Processing Letters}, 23(4):195--200, 1986.

\bibitem{24882}
M.~D. Biasi.
\newblock Polynomial problems in graph classes defined by forbidden induced
  cyclic subgraphs.
\newblock Theoretical Computer Science Stack Exchange.
\newblock \url{https://cstheory.stackexchange.com/q/24882} (version:
  2014-06-15).

\bibitem{BHKK.07}
A.~Bj{\"o}rklund, T.~Husfeldt, P.~Kaski, and M.~Koivisto.
\newblock Fourier meets {M}{\"o}bius: fast subset convolution.
\newblock In {\em Proceedings of the Thirty-Ninth Annual ACM Symposium on
  Theory of Computing}, pages 67--74, 2007.

\bibitem{blanchette2012}
M.~Blanchette, E.~Kim, and A.~Vetta.
\newblock Clique cover on sparse networks.
\newblock In {\em 2012 Proceedings of the Fourteenth Workshop on Algorithm
  Engineering and Experiments (ALENEX)}, pages 93--102. SIAM, 2012.

\bibitem{B.98}
H.~Bodlaender.
\newblock A partial $k$-arboretum of graphs with bounded treewidth.
\newblock {\em Theoretical Computer Science}, 209(1-2):1--45, 1998.

\bibitem{BBJPPL.arxiv}
H.~Bodlaender, N.~Brettell, M.~Johnson, G.~Paesani, D.~Paulusma, and E.~J. van
  Leeuwen.
\newblock {S}teiner trees for hereditary graph classes: a treewidth
  perspective, 2020, arXiv:2004.07492.

\bibitem{bodlaender1990}
H.~L. Bodlaender.
\newblock Polynomial algorithms for graph isomorphism and chromatic index on
  partial $k$-trees.
\newblock {\em Journal of Algorithms}, 11(4):631--643, 1990.

\bibitem{bodlaender1994}
H.~L. Bodlaender and K.~Jansen.
\newblock On the complexity of the maximum cut problem.
\newblock In {\em Annual Symposium on Theoretical Aspects of Computer Science},
  pages 769--780. Springer, 1994.

\bibitem{Bonnet2019}
{\'E}.~Bonnet, N.~Bousquet, P.~Charbit, S.~Thomass{\'e}, and R.~Watrigant.
\newblock Parameterized complexity of independent set in {$H$}-free graphs.
\newblock In {\em 13th International Symposium on Parameterized and Exact
  Computation}, 2019.

\bibitem{BJ.82}
K.~Booth and J.~Johnson.
\newblock Dominating sets in chordal graphs.
\newblock {\em SIAM Journal on Computing}, 11(1):191--199, 1982.

\bibitem{booth1979b}
K.~S. Booth and C.~J. Colbourn.
\newblock Problems polynomially equivalent to graph isomorphism.
\newblock Technical Report CS-77-04, Computer Science Department, University of
  Waterloo, Waterloo, Ont., 1979.
\newblock Available on
  \url{https://cs.uwaterloo.ca/research/tr/1977/CS-77-04.pdf}.

\bibitem{BGMPST.14}
N.~Bousquet, D.~Gon{\c{c}}alves, G.~B. Mertzios, C.~Paul, I.~Sau, and
  S.~Thomass{\'e}.
\newblock Parameterized domination in circle graphs.
\newblock {\em Theory of Computing Systems}, 54(1):45--72, 2014.

\bibitem{BHMW.10}
A.~Brandst{\"a}dt, C.~Hundt, F.~Mancini, and P.~Wagner.
\newblock Rooted directed path graphs are leaf powers.
\newblock {\em Discrete Mathematics}, 310(4):897--910, 2010.

\bibitem{buchheim2006}
C.~Buchheim and L.~Zheng.
\newblock Fixed linear crossing minimization by reduction to the maximum cut
  problem.
\newblock In {\em International Computing and Combinatorics Conference}, pages
  507--516. Springer, 2006.

\bibitem{BTV.13}
B.-M. Bui-Xuan, J.~A. Telle, and M.~Vatshelle.
\newblock Fast dynamic programming for locally checkable vertex subset and
  vertex partitioning problems.
\newblock {\em Theoretical Computer Science}, 511:66--76, 2013.

\bibitem{cai1991}
L.~Cai and J.~A. Ellis.
\newblock {NP}-completeness of edge-colouring some restricted graphs.
\newblock {\em Discrete Applied Mathematics}, 30(1):15--27, 1991.

\bibitem{CERIOLI20082270}
M.~Cerioli, L.~Faria, T.~Ferreira, C.~Martinhon, F.~Protti, and B.~Reed.
\newblock Partition into cliques for cubic graphs: Planar case, complexity and
  approximation.
\newblock {\em Discrete Applied Mathematics}, 156(12):2270 -- 2278, 2008.

\bibitem{Chen2007}
J.~Chen, I.~A. Kanj, L.~Perkovi{\'{c}}, E.~Sedgwick, and G.~Xia.
\newblock Genus characterizes the complexity of certain graph problems: Some
  tight results.
\newblock {\em Journal of Computer and System Sciences}, 73(6):892--907, 2007.

\bibitem{CKX.06}
J.~Chen, I.~A. Kanj, and G.~Xia.
\newblock Improved parameterized upper bounds for vertex cover.
\newblock In {\em International Symposium on Mathematical Foundations of
  Computer Science}, pages 238--249. Springer, 2006.

\bibitem{Clark1990}
B.~N. Clark, C.~J. Colbourn, and D.~S. Johnson.
\newblock Unit disk graphs.
\newblock {\em Discrete Mathematics}, 86(1-3):165--177, 1990.

\bibitem{CS.85}
C.~J. Colbourn and L.~Stewart.
\newblock Dominating cycles in series-parallel graphs.
\newblock {\em Ars Combinatoria}, 19:107--112, 1985.

\bibitem{cornuejols2013}
G.~Cornuejols, X.~Liu, and K.~Vuskovic.
\newblock A polynomial algorithm for recognizing perfect graphs.
\newblock In {\em Proceedings 44th Annual IEEE Symposium on Foundations of
  Computer Science, 2003.}, pages 20--27, 2003.

\bibitem{CMR.00}
B.~Courcelle, J.~A. Makowsky, and U.~Rotics.
\newblock Linear time solvable optimization problems on graphs of bounded
  clique-width.
\newblock {\em Theory of Computing Systems}, 33(2):125--150, 2000.

\bibitem{CO.00}
B.~Courcelle and S.~Olariu.
\newblock Upper bounds to the clique width of graphs.
\newblock {\em Discrete Applied Mathematics}, 101(1-3):77--114, 2000.

\bibitem{C.etal.15}
M.~Cygan, F.~V. Fomin, {\L}.~Kowalik, D.~Lokshtanov, D.~Marx, M.~Pilipczuk,
  M.~Pilipczuk, and S.~Saurabh.
\newblock {\em Parameterized Algorithms}, volume~4.
\newblock Springer, 2015.

\bibitem{CPPPW.11}
M.~Cygan, G.~Philip, M.~Pilipczuk, M.~Pilipczuk, and J.~O. Wojtaszczyk.
\newblock Dominating set is fixed parameter tractable in claw-free graphs.
\newblock {\em Theoretical Computer Science}, 412(50):6982--7000, 2011.

\bibitem{D.89}
P.~Damaschke.
\newblock The {H}amiltonian circuit problem for circle graphs is {NP}-complete.
\newblock {\em Information Processing Letters}, 32(1):1--2, 1989.

\bibitem{samir2009}
S.~Datta, P.~Nimbhorkar, T.~Thierauf, and F.~Wagner.
\newblock {Graph Isomorphism for $K_{3,3}$-free and $K_5$-free graphs is in
  Log-space}.
\newblock In {\em IARCS Annual Conference on Foundations of Software Technology
  and Theoretical Computer Science}, volume~4, pages 145--156, 2009.

\bibitem{isgci}
H.~N. de~Ridder et~al.
\newblock
  {{I}}nformation~{S}ystem~on~{G}raph~{C}lasses~and~their~{I}nclusions~({I}{S}{G}{C}{I}).
\newblock \url{https://www.graphclasses.org}.

\bibitem{Demaine2002}
E.~D. Demaine, M.~T. Hajiaghayi, and D.~M. Thilikos.
\newblock {Exponential Speedup of Fixed-Parameter Algorithms on
  $K_{3,3}$-Minor-Free or $K_{5}$-Minor-Free Graphs}.
\newblock In {\em Algorithms and Computation}, pages 262--273. Springer Berlin
  Heidelberg, 2002.

\bibitem{Demaine2004}
E.~D. Demaine, M.~T. Hajiaghayi, and D.~M. Thilikos.
\newblock Exponential speedup of fixed-parameter algorithms for classes of
  graphs excluding single-crossing graphs as minors.
\newblock {\em Algorithmica}, 41(4):245--267, dec 2004.

\bibitem{deogun1994}
J.~S. Deogun and G.~Steiner.
\newblock Polynomial algorithms for {H}amiltonian cycle in cocomparability
  graphs.
\newblock {\em SIAM Journal on Computing}, 23(3):520--552, 1994.

\bibitem{D.81}
A.~K. Dewdney.
\newblock {\em Fast Turing Reductions Between Problems in NP: Chapter 4:
  Reductions Between {NP}-complete Problems}.
\newblock Department of Computer Science, University of Western Ontario, 1981.

\bibitem{Diestel2017}
R.~Diestel.
\newblock {\em Graph Theory}.
\newblock Springer Berlin Heidelberg, 2017.

\bibitem{DF.95}
R.~G. Downey and M.~R. Fellows.
\newblock Parameterized computational feasibility.
\newblock In P.~Clote and J.~B. Remmel, editors, {\em Feasible Mathematics II},
  pages 219--244, Boston, MA, 1995. Birkh{\"a}user Boston.

\bibitem{DF.99}
R.~G. Downey and M.~R. Fellows.
\newblock {\em Parameterized Complexity}.
\newblock Monographs in Computer Science. Springer Verlag, 1999.

\bibitem{DF.13}
R.~G. Downey and M.~R. Fellows.
\newblock {\em Fundamentals of Parameterized Complexity}.
\newblock Texts in Computer Science. Springer, 2013.

\bibitem{DW.71}
S.~E. Dreyfus and R.~A. Wagner.
\newblock The {S}teiner problem in graphs.
\newblock {\em Networks}, 1(3):195--207, 1971.

\bibitem{EFF.04}
J.~Ellis, H.~Fan, and M.~Fellows.
\newblock The dominating set problem is fixed parameter tractable for graphs of
  bounded genus.
\newblock {\em Journal of Algorithms}, 52(2):152--168, 2004.

\bibitem{Fomin2019kernelization}
F.~V. Fomin, D.~Lokshtanov, S.~Saurabh, and M.~Zehavi.
\newblock {\em Kernelization: Theory of Parameterized Preprocessing}.
\newblock Cambridge University Press, 2019.

\bibitem{GJS.76}
M.~Garey, D.~Johnson, and L.~Stockmeyer.
\newblock Some simplified {NP}-complete graph problems.
\newblock {\em Theoretical Computer Science}, 1:237--267, 1976.

\bibitem{GJ.79}
M.~R. Garey and D.~S. Johnson.
\newblock {\em Computers and Intractability: A Guide to the Theory of
  {NP}-Completeness}.
\newblock W. H. Freeman \& Co., New York, 1979.

\bibitem{G.etal.80}
M.~R. Garey, D.~S. Johnson, G.~L. Miller, and C.~H. Papadimitriou.
\newblock The complexity of coloring circular arcs and chords.
\newblock {\em SIAM Journal on Algebraic Discrete Methods}, 1(2):216--227,
  1980.

\bibitem{Gavril1978}
F.~Gavril.
\newblock A recognition algorithm for the intersection graphs of paths in
  trees.
\newblock {\em Discrete Mathematics}, 23(3):211--227, 1978.

\bibitem{G.84}
E.~M. Gurari and I.~H. Sudborough.
\newblock Improved dynamic programming algorithms for bandwidth minimization
  and the mincut linear arrangement problem.
\newblock {\em Journal of Algorithms}, 5(4):531--546, 1984.

\bibitem{G.17}
F.~Gurski.
\newblock The behavior of clique-width under graph operations and graph
  transformations.
\newblock {\em Theory of Computing Systems}, 60(2):346--376, 2017.

\bibitem{guruswami1999}
V.~Guruswami.
\newblock Maximum cut on line and total graphs.
\newblock {\em Discrete Applied Mathematics}, 92(2-3):217--221, 1999.

\bibitem{Hermelin2019}
D.~Hermelin, M.~Mnich, E.~J.~V. Leeuwen, and G.~Woeginger.
\newblock Domination when the stars are out.
\newblock {\em {ACM} Transactions on Algorithms}, 15(2):1--90, 2019.

\bibitem{HO.08}
P.~Hlin{\v{e}}n{\`y} and S.-i. Oum.
\newblock Finding branch-decompositions and rank-decompositions.
\newblock {\em SIAM Journal on Computing}, 38(3):1012--1032, 2008.

\bibitem{H.81}
I.~Holyer.
\newblock The {NP}-completeness of edge-coloring.
\newblock {\em SIAM Journal on computing}, 10(4):718--720, 1981.

\bibitem{7331}
R.~K. (https://cstheory.stackexchange.com/users/206/robin kothari).
\newblock Citation showing minors are topological minors for subcubic graphs.
\newblock Theoretical Computer Science Stack Exchange,
  https://cstheory.stackexchange.com/q/7331.
\newblock URL:https://cstheory.stackexchange.com/q/7331 (version: 2011-07-12).

\bibitem{IPS.82}
A.~Itai, C.~H. Papadimitriou, and J.~L. Szwarcfiter.
\newblock {H}amilton paths in grid graphs.
\newblock {\em SIAM Journal on Computing}, 11(4):676--686, 1982.

\bibitem{JKST.19}
L.~Jaffke, O.-j. Kwon, T.~J. Str{\o}mme, and J.~A. Telle.
\newblock Mim-width iii. graph powers and generalized distance domination
  problems.
\newblock {\em Theoretical Computer Science}, 796:216--236, 2019.

\bibitem{J.85}
D.~S. Johnson.
\newblock The {NP}-completeness column: an ongoing guide.
\newblock {\em Journal of Algorithms}, 6(3):434 -- 451, 1985.

\bibitem{JLRSS.17}
M.~Jones, D.~Lokshtanov, M.~S. Ramanujan, S.~Saurabh, and O.~Such{\'{y}}.
\newblock Parameterized complexity of directed {S}teiner tree on sparse graphs.
\newblock {\em {SIAM} Journal on Discrete Mathematics}, 31(2):1294--1327, 2017.

\bibitem{Kalisz2019}
V.~Kalisz, P.~Klavík, and P.~Zeman.
\newblock Circle graph isomorphism in almost linear time, 2019,
  arXiv:1908.09151v1.

\bibitem{KLM.09}
M.~Kami{\'n}ski, V.~V. Lozin, and M.~Milani{\v{c}}.
\newblock Recent developments on graphs of bounded clique-width.
\newblock {\em Discrete Applied Mathematics}, 157(12):2747--2761, 2009.

\bibitem{K.72}
R.~Karp.
\newblock {\em Reducibility among Combinatorial Problems}, chapter Complexity
  of Computer Computations, pages 85--103.
\newblock New York: Plenum., 1972.

\bibitem{K.93}
J.~Keil.
\newblock The complexity of domination problems in circle graphs.
\newblock {\em Discrete Applied Mathematics}, 42(1):51--63, 1993.

\bibitem{KLM.97}
J.~Keil, R.~Laskar, and P.~Manuel.
\newblock The vertex clique cover problem and some related problems in chordal
  graphs.
\newblock Submitted in 1997. Abstract presented at SIAM conference on Discrete
  Mathematics,1994, Albuquerque, New Mexico.

\bibitem{keil2006}
J.~Keil and L.~Stewart.
\newblock Approximating the minimum clique cover and other hard problems in
  subtree filament graphs.
\newblock {\em Discrete Applied Mathematics}, 154(14):1983--1995, 2006.

\bibitem{Khot2000}
S.~Khot and V.~Raman.
\newblock Parameterized complexity of finding subgraphs with hereditary
  properties.
\newblock In {\em Lecture Notes in Computer Science}, pages 137--147. Springer
  Berlin Heidelberg, 2000.

\bibitem{KR.03}
D.~Kobler and U.~Rotics.
\newblock Edge dominating set and colorings on graphs with fixed clique-width.
\newblock {\em Discrete Applied Mathematics}, 126(2-3):197--221, 2003.

\bibitem{korach1990}
E.~Korach and N.~Solel.
\newblock Linear time algorithm for minimum weight {S}teiner tree in graphs
  with bounded treewidth.
\newblock Technical Report 632, Israel Institute of Technology, 1990.

\bibitem{Kral2001}
D.~Kr{\'{a}}l', J.~Kratochv{\'{\i}}l, Z.~Tuza, and G.~J. Woeginger.
\newblock Complexity of coloring graphs without forbidden induced subgraphs.
\newblock In {\em Graph-Theoretic Concepts in Computer Science}, pages
  254--262. Springer Berlin Heidelberg, 2001.

\bibitem{K.90}
D.~Kratsch.
\newblock Finding dominating cliques efficiently, in strongly chordal graphs
  and undirected path graphs.
\newblock {\em Discrete Mathematics}, 86(1-3):225--238, 1990.

\bibitem{Krawczyk2019}
T.~Krawczyk.
\newblock Testing isomorphism of circular-arc graphs -- {H}su's approach
  revisited, 2019, arXiv:1904.04501v3.

\bibitem{Laskar1984}
R.~Laskar, J.~Pfaff, S.~M. Hedetniemi, and S.~T. Hedetniemi.
\newblock On the algorithmic complexity of total domination.
\newblock {\em {SIAM} Journal on Algebraic Discrete Methods}, 5(3):420--425,
  sep 1984.

\bibitem{LSS.08}
M.~C. Lin, F.~J. Soulignac, and J.~L. Szwarcfiter.
\newblock A simple linear time algorithm for the isomorphism problem on proper
  circular-arc graphs.
\newblock In {\em Scandinavian Workshop on Algorithm Theory}, pages 355--366.
  Springer, 2008.

\bibitem{Lokshtanov2013}
D.~Lokshtanov, N.~Misra, G.~Philip, M.~S. Ramanujan, and S.~Saurabh.
\newblock Hardness of $r$-dominating set on graphs of diameter
  $(r{\thinspace}+{\thinspace}1)$.
\newblock In G.~Gutin and S.~Szeider, editors, {\em Parameterized and Exact
  Computation}, pages 255--267. Springer International Publishing, 2013.

\bibitem{LB.79}
G.~S. Lueker and K.~S. Booth.
\newblock A linear time algorithm for deciding interval graph isomorphism.
\newblock {\em Journal of the ACM (JACM)}, 26(2):183--195, 1979.

\bibitem{MP.96}
F.~Maffray and M.~Preissmann.
\newblock On the {NP}-completeness of the $k$-colorability problem for
  triangle-free graphs.
\newblock {\em Discrete Mathematics}, 162(1):313 -- 317, 1996.

\bibitem{MR.99}
M.~Mahajan and V.~Raman.
\newblock Parameterizing above guaranteed values: Maxsat and maxcut.
\newblock {\em Journal of Algorithms}, 31(2):335--354, 1999.

\bibitem{Malitz1994}
S.~Malitz.
\newblock {Genus $g$ graphs have pagenumber O($\sqrt g$)}.
\newblock {\em Journal of Algorithms}, 17(1):85--109, 1994.

\bibitem{MRR.08}
D.~M{\"o}lle, S.~Richter, and P.~Rossmanith.
\newblock Enumerate and expand: Improved algorithms for connected vertex cover
  and tree cover.
\newblock {\em Theory of Computing Systems}, 43(2):234--253, 2008.

\bibitem{Monien1985}
B.~Monien.
\newblock How to find long paths efficiently.
\newblock In G.~Ausiello and M.~Lucertini, editors, {\em Analysis and Design of
  Algorithms for Combinatorial Problems}, North-Holland Mathematics Studies,
  pages 239--254. Elsevier, 1985.

\bibitem{MS.85}
B.~Monien and I.~H. Sudborough.
\newblock Bandwidth constrained {NP}-complete problems.
\newblock {\em Theoretical Computer Science}, 41:141--167, 1985.

\bibitem{MW.86}
C.~L. Monma and V.~K. Wei.
\newblock Intersection graphs of paths in a tree.
\newblock {\em Journal of Combinatorial Theory, Series B}, 41(2):141--181,
  1986.

\bibitem{MR.08}
E.~Mujuni and F.~Rosamond.
\newblock Parameterized complexity of the clique partition problem.
\newblock In {\em Proceedings of the Fourteenth Symposium on Computing: the
  Australasian Theory-Volume 77}, pages 75--78, 2008.

\bibitem{muller1996}
H.~M{\"u}ller.
\newblock {H}amiltonian circuits in chordal bipartite graphs.
\newblock {\em Discrete Mathematics}, 156(1-3):291--298, 1996.

\bibitem{MB.87}
H.~M{\"u}ller and A.~Brandst{\"a}dt.
\newblock The {NP}-completeness of {S}teiner tree and dominating set for
  chordal bipartite graphs.
\newblock {\em Theoretical Computer Science}, 53(2-3):257--265, 1987.

\bibitem{MUNARO20172208}
A.~Munaro.
\newblock Bounded clique cover of some sparse graphs.
\newblock {\em Discrete Mathematics}, 340(9):2208 -- 2216, 2017.

\bibitem{Niedermeier2006}
R.~Niedermeier.
\newblock {\em Invitation to Fixed-Parameter Algorithms}.
\newblock Oxford University Press, 2006.

\bibitem{O.08}
S.-I. Oum.
\newblock Approximating rank-width and clique-width quickly.
\newblock {\em ACM Transactions on Algorithms (TALG)}, 5(1):1--20, 2008.

\bibitem{OS.06}
S.-i. Oum and P.~Seymour.
\newblock Approximating clique-width and branch-width.
\newblock {\em Journal of Combinatorial Theory, Series B}, 96(4):514--528,
  2006.

\bibitem{panda2008np}
B.~S. Panda and D.~Pradhan.
\newblock {NP}-completeness of {H}amiltonian cycle problem on rooted directed
  path graphs, 2008, arXiv:0809.2443v1.

\bibitem{PRS.09}
G.~Philip, V.~Raman, and S.~Sikdar.
\newblock Solving dominating set in larger classes of graphs: {FPT} algorithms
  and polynomial kernels.
\newblock In {\em European Symposium on Algorithms}, pages 694--705. Springer,
  2009.

\bibitem{PPSL.18}
M.~Pilipczuk, M.~Pilipczuk, P.~Sankowski, and E.~J.~V. Leeuwen.
\newblock Network sparsification for {S}teiner problems on planar and
  bounded-genus graphs.
\newblock {\em ACM Transactions on Algorithms (TALG)}, 14(4):1--73, 2018.

\bibitem{pocai2016}
R.~V. Pocai.
\newblock The complexity of simple max-cut on comparability graphs.
\newblock {\em Electronic Notes in Discrete Mathematics}, 55:161 -- 164, 2016.
\newblock 14th Cologne-Twente Workshop on Graphs and Combinatorial Optimization
  (CTW16).

\bibitem{poljak1974}
S.~Poljak.
\newblock A note on stable sets and colorings of graphs.
\newblock {\em Commentationes Mathematicae Universitatis Carolinae},
  15(2):307--309, 1974.

\bibitem{Raman2008}
V.~Raman and S.~Saurabh.
\newblock Short cycles make {W}-hard problems hard: {FPT}~algorithms for
  {W}-hard problems in graphs with~no~short cycles.
\newblock {\em Algorithmica}, 52(2):203--225, jan 2008.

\bibitem{R.07}
M.~Rao.
\newblock {MSOL} partitioning problems on graphs of bounded treewidth and
  clique-width.
\newblock {\em Theoretical Computer Science}, 377(1-3):260--267, 2007.

\bibitem{shih1992n}
W.~Shih, T.~Chern, and W.-L. Hsu.
\newblock An ${O}(n^2\log n)$ algorithm for the {H}amiltonian cycle problem on
  circular-arc graphs.
\newblock {\em SIAM Journal on Computing}, 21(6):1026--1046, 1992.

\bibitem{S.book}
J.~Spinrad.
\newblock {\em Efficient Graph Representations}.
\newblock American Mathematical Society, Fields Institute, 2003.

\bibitem{S.78}
L.~Stewart.
\newblock Cographs - a class of tree representable graphs.
\newblock Master's thesis, University of Toronto, Canada, 1978.
\newblock also a Technical Report, 126/78, Department of Computer Science,
  University of Toronto.

\bibitem{SFK.13}
R.~Sucupira, L.~Faria, and S.~Klein.
\newblock A complexidade do problema corte m\'{a}ximo para grafos fortemente
  cordais.
\newblock In {\em Anais do XLV Simp\'{o}sio Brasileiro de Pesquisa
  Operacional}, pages 2979--2988, 2013.

\bibitem{S.83}
M.~Syslo.
\newblock {NP}-complete problems on some tree-structured graphs: a review.
\newblock In {\em Proc. WG'83 International Workshop on Graph Theoretic
  Concepts in Computer Science, Univ. Verlag Rudolf Trauner, Linz, West
  Germany}, 1983.

\bibitem{HHS.book}
P.~S. T.~W.~Haynes, S.~Hedetniemi.
\newblock {\em Fundamentals of Domination in Graphs}.
\newblock CRC Press, 1998.

\bibitem{uehara2005}
R.~Uehara, S.~Toda, and T.~Nagoya.
\newblock Graph isomorphism completeness for chordal bipartite graphs and
  strongly chordal graphs.
\newblock {\em Discrete Applied Mathematics}, 145(3):479--482, 2005.

\bibitem{Walter88}
W.~Unger.
\newblock On the $k$-colouring of circle-graphs.
\newblock In R.~Cori and M.~Wirsing, editors, {\em Proceedings of the Annual
  Symposium on Theoretical Aspects of Computer Science STACS 88, Lecture Notes
  in Computer Science, vol. 294}, pages 61--72, Berlin, Heidelberg, 1988.
  Springer.

\bibitem{V.thesis}
M.~Vatshelle.
\newblock {\em New width parameters of graphs}.
\newblock PhD thesis, The University of Bergen, 2012.

\bibitem{WC.82}
J.~A. Wald and C.~J. Colbourn.
\newblock {S}teiner trees in outerplanar graphs.
\newblock In {\em Proc. 13th Southeastern Conference on Combinatorics, Graph
  Theory, and Computing}, pages 15--22, 1982.

\bibitem{WC.83}
J.~A. Wald and C.~J. Colbourn.
\newblock {S}teiner trees, partial 2--trees, and minimum {IFI} networks.
\newblock {\em Networks}, 13(2):159--167, 1983.

\bibitem{W.94}
E.~Wanke.
\newblock $k$-{NLC} graphs and polynomial algorithms.
\newblock {\em Discrete Applied Mathematics}, 54(2-3):251--266, 1994.

\bibitem{WFP.85}
K.~White, M.~Farber, and W.~Pulleyblank.
\newblock {S}teiner trees, connected domination and strongly chordal graphs.
\newblock {\em Networks}, 15:109--124, 1985.

\bibitem{winter1987}
P.~Winter.
\newblock {S}teiner problem in {H}alin networks.
\newblock {\em Discrete Applied Mathematics}, 17(3):281--294, 1987.

\bibitem{Y.78}
M.~Yannakakis.
\newblock Node-and edge-deletion {NP}-complete problems.
\newblock In {\em Proceedings of the Tenth Annual {ACM} Symposium on Theory of
  Computing - {STOC'78}}, pages 253--264. {ACM} Press, 1978.

\end{thebibliography}

\end{document}